\documentclass[format=acmsmall,authorversion=true]{acmart}
\usepackage{acm-ec-26}

\AtBeginDocument{%
  \fancypagestyle{firstpagestyle}{%
    \fancyhf{}%
  }%
}

\usepackage{booktabs} 
\usepackage[ruled]{algorithm2e} 

\SetAlFnt{\small}
\SetAlCapFnt{\small}
\SetAlCapNameFnt{\small}
\SetAlCapHSkip{0pt}
\IncMargin{-\parindent}

\usepackage{hyperref}
\usepackage{amsthm}
\AtEndPreamble{%
\newtheorem{claim}[theorem]{Claim}
\theoremstyle{acmdefinition}
\newtheorem{remark}[theorem]{Remark}
}

\setcitestyle{authoryear}

\usepackage[utf8]{inputenc}
\usepackage{tikz}
\tikzset{node/.style = {circle, fill, minimum size=4pt, inner sep=0pt}}
\usepackage{color}
\usepackage{enumitem}

\usepackage{amsmath}
\usepackage{bbm}

\usepackage[capitalize]{cleveref}
\crefname{claim}{Claim}{Claims}

\usepackage{thm-restate}

\newcommand{\N}{\mathbb{N}}
\newcommand{\R}{\mathbb{R}}
\newcommand{\C}{\mathcal{C}}

\newcommand{\Xflow}{X^{\textup{flow}}_{G,s,t}}
\newcommand{\Xpath}{X^{\textup{path}}_{G,s,t}}

\DeclareMathOperator{\dep}{dep}

\DeclareMathOperator{\supp}{supp}
\DeclareMathOperator{\aff}{aff}
\DeclareMathOperator{\lin}{lin}
\DeclareMathOperator{\conv}{conv}

\title[Controllability of Combinatorial Optimization Problems]{When to Identify Is to Control: On the Controllability of Combinatorial Optimization Problems}

\author{Max Klimm}
\affiliation{%
\institution{TU Berlin}
  \country{Germany}}
\author{Jannik Matuschke}
\affiliation{%
\institution{KU Leuven}
  \country{Belgium}}

\author{}

\begin{abstract}
Consider a finite ground set $E$, a set of feasible solutions $X \subseteq \mathbb{R}^{E}$, and a class of objective functions $\mathcal{C}$ defined on $X$.
We are interested in subsets~$S$ of~$E$ that control~$X$ in the sense that we can induce any given solution~$x \in X$ as an optimum for any given objective function $c \in \mathcal{C}$ by adding linear terms to $c$ on the coordinates corresponding to $S$.
This problem has many applications, e.g., when $X$ corresponds to the set of all traffic flows, the ability to control implies that one is able to induce all target flows by imposing tolls on the edges in~$S$. 

Our first result shows the equivalence between controllability and identifiability. If $X$ is convex, or if $X$ consists of binary vectors, then $S$ controls $X$ if and only if the restriction of $x$ to $S$ uniquely determines $x$ among all solutions in $X$.
In the convex case, we further prove that the family of controlling sets forms a matroid. This structural insight yields an efficient algorithm for computing minimum-weight controlling sets from a description of the affine hull of $X$.

While the equivalence extends to matroid base families, the picture changes sharply for other discrete domains.
We show that when~$X$ is equal to the set of $s$-$t$-paths in a directed graph, deciding whether an identifying set of a given cardinality exists is $\Sigma\mathsf{_2^P}$-complete. The problem remains $\mathsf{NP}$-hard even on acyclic graphs. For acyclic instances, however, we obtain an approximation guarantee by proving a tight bound on the gap between the smallest identifying sets for $X$ and its convex hull, where the latter corresponds to the $s$-$t$-flow polyhedron.
\end{abstract}

\begin{document}

\begin{titlepage}

\maketitle

\vspace{0.5cm}
\setcounter{tocdepth}{2} 
\tableofcontents

\end{titlepage}

\section{Introduction}

A fundamental goal of the study of complex systems is the ability to steer them into a favorable state.
On an intuitive level, such a steering process can be phrased as the  ability to determine a relatively large number of parameters of the system by controlling a comparatively small number of them.
In this paper, we study this problem for the case where the system is described as the optimal solutions to a (combinatorial) minimization problem.
To describe the setting formally, let~$E$ be a finite ground set and let~$X \subseteq \R^E$ be a set of feasible states. Further, let~$\C$ be a set of cost functions~$c \colon X \to \R$ that map each state $x \in X$ to a cost value~$c(x) \in \R$.
Implicit in our work is the assumption that the system settles in one of the states
\begin{align}
\label{eq:original-minimum}
\arg\min \bigl\{c(x) : x \in X\bigr\}
\end{align}
minimizing the cost.

This is a suitable assumption, e.g., when $X$ describes the possible states of a physical system and~$c(x)$ is the energy of state~$x \in X$.
In this case, the assumption stipulates that the system settles in an energy-minimal state.
For a concrete application, consider the electric current in an electric network modeled as a directed graph $G = (V,E)$ with a given source node~$s$ and a given target node~$t$.
Let~$\smash{b \in \mathbb{R}^V}$ denote the corresponding balance vector of a unit flow, i.e., $b_v = -1$ if $v=s$, $b_v = 1$ if $v=t$, and $b_v = 0$ otherwise. 
Each edge~$e \in E$ has a \emph{resistance}~$R_e$.
Interpreting positive flow values as flow in the direction of the arc, and negative flow as flow against the direction of the arc, we see that the set $X$ of all feasible flows satisfying the given balances is given by the set
\begin{align*}
X = \bigl\{x \in \mathbb{R}^E : \textstyle\sum_{e \in \delta^-(v)} x_e - \sum_{e \in \delta^+(v)} x_e =  b_v \text{ for all } v \in V\bigr\},
\end{align*}
where $\delta^+(v)$ and $\delta^-(v)$ denote the set of edges leaving and entering node~$v$, respectively. 
By Thompson's principle \citep[see, e.g.,][]{doyle1984random}, the electric flow minimizes the total energy dissipation defined as $c(x) = \frac{1}{2}\sum_{e \in E} x_e^2 R_e$.

Another application domain are systems with selfish agents where equilibria can be expressed as the minimal of potential functions.
For example, consider the traffic flows in the non-atomic model of~\citet{wardrop1952some}.  Here, we are given a directed graph $G = (V,E)$ modeling a road network with flow-dependent travel time functions $\ell_e \colon \R_+ \to \R_+$.
In addition, there is a finite set of commodities~$K$ such that for every commodity~$i \in K$ a total traffic volume of $d_i \in \R_+$ travels from node $s_i \in V$ to node $t_i \in V$ in the network.
Let $\smash{b_i \in \R^V}$ denote the balance vector of commodity $i \in K$ on node~$v \in V$ defined as~$b_{v,i} = -d_i$ if $v = s_i$, $b_{v,i} = d_i$ if $v = t_i$, and $b_{v,i} = 0$, otherwise.
Then, the set of feasible states corresponds to all ways to route the traffic, i.e., 
\begin{multline*}
X = \bigl\{ x \in \R_+^E : x_e = \textstyle\sum_{i \in K} x_{e,i} \text{ with } \textstyle\sum_{e \in \delta^-(v)} x_{e,i} = \sum_{e \in \delta^+(v)} x_{e,i} = b_{v,i} \text{ for all } v \in V \text{ and } i \in K\bigr\}.
\end{multline*}
It is well-known \citep{wardrop1952some,beckman1956studies} that non-atomic traffic equilibria are the optimal solutions to the optimization problem $\min \bigl\{ \textstyle\int_{0}^{x_e} \ell_e(z)\,\mathrm{d}z : x \in X\bigr\}$.
Similarly, in an atomic congestion game, we are given a directed graph $G = (V,E)$ with travel time functions as before. In addition, there is a finite set of players~$N$ and each player~$i$ travels from node~$s_i \in V$ to node~$t_i \in V$ in the network. A strategy of player~$i$ is to choose a path from $s_i$ to $t_i$ in the network. The set of induced load vectors is given by
\begin{align*}
X = \bigl\{ x \!\in\! \mathbb{R}^E_+ : x_e \!=\! \textstyle\sum_{i \in N} x_{e,i} \text{ with $x_{i} \!=\! (x_{e,i})_{e \in E}$ is the incidence vector of an $s_i$-$t_i$-path for all $i \!\in\! N$}\bigr\}.
\end{align*}
It is well-known \citep{rosenthal1973class,monderer1996potential} that the minima of the potential function defined as $\smash{c(x) = \sum_{e \in E} \sum_{k=1}^{x_e} \ell_e(k)}$ are pure Nash equilibria of the game.

In this paper, we consider the natural task of a system designer to assert control over an as small as possible subset $S \subseteq  E$ of the ground set in order to steer the system into \emph{all possible states}.
More formally, assume that a feasible set~$X$ and a set of  cost functions~$\C \subseteq \{c \mid c \colon X \to \mathbb{R}\}$ is given.
The task is to determine a subset $S \subseteq E$ of minimum cardinality such that for \emph{every} target state $x^* \in X$ and for \emph{every} cost function $c \in \C$ there is a vector $\gamma \in \R^S$ such that
\begin{align}
\label{eq:parametric-minimum}
\tag{$P_{\gamma}$}
x^* \in \arg\min \bigl\{c(x) + \textstyle\sum_{e \in S} \gamma_e x_e : x \in X\bigr\}.
\end{align}
We are further interested in a weighted variant of the problem where we are given a weight~$w_e \in \mathbb{R}_+$ for each element $e \in E$ and the task is to find $S \subseteq E$ such that for every state $\smash{x^* \in X}$ and for every cost function $c \in \C$ there is a vector $\smash{\gamma \in \R^S}$ satisfying \eqref{eq:parametric-minimum}, and the total weight $\sum_{e \in S} w_e$ is minimized among all such sets.

We note that the intervention of adding a term of $\gamma_e x_e$ for the elements $e \in S$ to the cost value of a solution~$x$ has a sensible interpretation in many applications.
For electric networks, e.g., a flow minimizing $\smash{\frac{1}{2}\sum_{e \in E} x_e^2 R_e + \sum_{e \in S} \gamma_e x_e}$ corresponds to an electric flow where the characteristic curve of the resistors at edge~$e$ has been shifted by $\gamma_e$ which can be achieved by adding appropriate diodes.
Thus, the set~$S$ corresponds to a minimal placement of variable diodes in an electric network that is able to induce every flow in the network as an electric flow.
For non-atomic traffic equilibria, e.g., the set~$S \subseteq E$ corresponds to a subset of the edges where a tollbooth (or, in more modern road toll implementations, a license plate reader) needs to be installed.
Setting $\gamma_e = 0$ for all $e \notin S$, a flow minimizing
\begin{align*}
\sum_{e \in E} \Biggl(\int_{0}^{x_e} \ell_e(z)\,\mathrm{d}z + x_e \,\gamma_e \Biggr) = \sum_{e \in E} \int_{0}^{x_e} \bigl( \ell_e(z) + \gamma_e \bigr) \,\mathrm{d}z
\end{align*}
effectively corresponds to an equilibrium where the travel time functions of the edges in~$S$ are shifted by $\gamma_e$.
This has the interpretation of a traffic equilibrium where each flow particle has the same trade-off between time and money and has to pay~$\gamma_e$ to the system when traversing an edge~$e \in S$.\footnote{We here also allow for negative tolls which effectively correspond to a subsidy when using a road segment, this has been done before, e.g., by \citet{Colini-Baldeschi18}.}
Similarly, for atomic congestion games, a flow minimizing
\begin{align*}
\sum_{e \in E} \sum_{k=1}^{x_e} \ell_e(k) + x_e \gamma_e = \sum_{e \in E} \sum_{k=1}^{x_e} \bigl( \ell_e(k) + \gamma_e\bigr)
\end{align*}
corresponds to an equilibrium where the travel time functions of the edges in~$S$ are shifted by~$\gamma_e$, so that the same interpretation as for non-atomic traffic networks applies.

In all the applications above it is natural to ask for a small set~$S$ (in the unweighted variant) or a weight-minimal set $S$ (in the weighted variant) as the installation and maintenance of tollbooths or variable diodes is costly, and, hence, a natural objective is to minimize these costs while being able to control the state.

We stress that we require that $S$ is able to induce \emph{every} state $x^*$ for \emph{every} cost function~$c \in \C$.
This is in contrast to a large part of the traffic toll literature where it is only required that for a single given cost structure $c$ a single given state (often a state minimizing a certain secondary objective) is reached.
Our approach allows for a more robust control where favorable states are still guaranteed to be enforceable even when the cost structure changes, e.g., because of technological changes or topological changes in the network.

\subsection{Our results}

For a set $X$ of feasible states and a set $\mathcal{C}$ of functions $c \colon X \to \mathbb{R}$, let us call a subset $S \subseteq E$ of elements \emph{controlling} for $(X,\mathcal{C})$ if for every $x^* \in X$ there exists a vector~$\smash{\gamma \in \mathbb{R}^S}$ such that $x^* \in \arg\min \bigl\{ c(x) + \sum_{e \in S} \gamma_e x_e : x \in X \bigr\}$.
The problem of finding controlling sets of small size is particularly challenging due to its bilevel nature: Even given a candidate set~$S$ one needs to verify that for every $x^* \in X$ and every $c \in \C$ there is a vector $\gamma$ such that $x^*$ appears as an optimal solution of \eqref{eq:parametric-minimum}.
To make the problem more tractable, we first give a characterization of controlling sets in terms of a different property. 
Let us call a subset $S \subseteq E$ of elements \emph{identifying} if each pair of distinct feasible states $x, x' \in X$ differs on $S$.

In \S~\ref{sec:control=identification},
we first show (\Cref{thm:convex}) that for convex sets~$X$ and convex cost functions~$\mathcal{C}$ the notions of controlling and identifying sets coincide.
The proof of this result further gives a characterization of the identifying (and hence, controlling) sets in terms of a certain matroid.
In particular, this allows us to compute minimum-weight controlling sets when a description of the affine hull of~$X$ is given.
We further show that the assumption that $\mathcal{C}$ contains only convex functions is tight: For a convex set~$X$, if $\mathcal{C}$ contains a single non-convex function, then there is no controlling set for $(X,\mathcal{C})$.

Next, we turn to the case where $X  \subseteq \{0,1\}^E$ consists of binary vectors.
Note that this includes all cases where the set of states is a family of subsets of $E$, such as, e.g., the paths in network or the spanning trees of a graph.
Also in this case, the notions of identifying and controlling sets coincide.
For the case that $X \subseteq \mathbb{Z}^E$ is a general set of integral vectors however, there may be identifying sets that are not controlling.

Motivated by the applications mentioned above involving flows and paths in networks, we turn to their controllability in \S~\ref{sec:controlling-paths-flows}.
We first give a characterization of the identifying sets for $s$-$t$-flows in \Cref{thm:flows} in terms of the cographic matroid.
This characterization further allows to compute minimum weight identifying sets for flows.
For the case that $X$ is the set of $s$-$t$-paths in a network without directed cycles, we provide a tight bound on the gap between the size of a minimum identifying set for such paths and for $s$-$t$-flows (which comprise the convex hull of the paths).
This yields a $\sqrt{m}$-approximation for finding minimum-size identifying sets for $s$-$t$-paths in acyclic networks, where $m$ is the number of arcs.
We also show that this problem is hard to approximate by a factor better than $2$.
For the general case in which the network may contain cycles, we derive significantly stronger hardness results, showing that the problem is $\Sigma\mathsf{_2^P}$-complete, that it is $\mathsf{NP}$-hard to verify whether a given set is identifying, and that no approximation factor significantly better than the trivial factor of $|E|$ can be obtained, unless~$\mathsf{P}=\mathsf{NP}$.

In \S~\ref{sec:matroids}, we then consider the case where $X$ is the set of bases of a matroid or the set of points of a polymatroid base polyhedron.
For both cases, we show that identifying (and thus controlling) sets correspond to a partition matroid defined by the connected components of the matroid/polymatroid.
Thus, again we can compute minimum-weight identifying sets in this case.

In \S~\ref{sec:explicit}, we finally consider the case where $\smash{X \subseteq \{0,1\}^E}$ is given as an explicit list of binary vectors (i.e., an explicitly given set of subsets of $E$).
For this case, we derive a $2\ln|X|$-approximation algorithm for the problem of computing an identifying (and hence, controlling) set of minimal weight, which we also show to be best possible, up to a constant factor, unless $\mathsf{P}=\mathsf{NP}$.

\subsection{Significance}

Our results provide a complete picture concerning the controllability of states for various systems of practical interest. In this section, we discuss the consequences for the controllability of the systems mentioned above.

For the problem of controlling an electrical network, \Cref{thm:convex} shows that the notion of controlling sets and identifying sets coincide, \Cref{thm:flows} gives a complete characterization of the identifying sets, and \Cref{cor:flows} yields a polynomial-time algorithm for computing an identifying set minimizing the weight of its elements.
In addition, the notion of identifying sets is meaningful for electric networks as well: These are sets of edges where measuring the electrical current and the corresponding connections allows to reconstruct the whole current in the network.

These positive results directly translate to non-atomic traffic networks with a single source~$s$ and a single target~$t$.
In addition, our characterization of identifying sets in \Cref{thm:flows} further allows to identify locations for tollbooths also in a setting with multiple sources and sinks under mild conditions on the network. Indeed, for networks where each edge~$e$ is contained in a directed cycle (which is clearly satisfied by real-world road networks), choosing the subset of directed edges~$S$ such that after the removal of~$S$ there is no undirected cycle in the network, is necessary to be controlling \emph{even for a single source--target pair}. Conversely, it is also sufficient to control flow between all source--target pairs.
More specifically, by charging a different toll $\gamma_{i,e}$ for each commodity~$i \in K$ and each edge~$e \in S$, all multi-commodity flows can be induced.

Previous work also considered non-atomic congestion games where for each commodity~$i \in K$, the total demand is split among the bases of a commodity-specific matroid rather than paths in a network \citep{FujishigeGHPZ17}. For single-commodity instances, the equilibria in this game can be obtained as the minimum of a convex function over a polymatroid base polytope.
For this case, \Cref{thm:polymatroid} establishes that the notions of controlling and identifying sets coincide and provides a complete characterization for these sets.
In addition, \Cref{cor:polymatroid} yields a polynomial-time algorithm for their computation.

Further, for toll enforcement in the systems above, it is also be important to figure our for a single non-atomic player which strategy it took in the network.
For single-commodity non-atomic congestion games, this problem is equal to finding an identifying set. \Cref{thm:path-general-digraphs-hard} implies that there is no $O(|E|^{1-\epsilon})$-approximation for finding a size-minimal identifying set for general directed graphs
and the $\Sigma\mathsf{^P_2}$-hardness result of \Cref{thm:paths-sigma-2-p} indicates a serious  barrier for practical computation, e.g., by ruling out compact IP formulations for the problem.
When the network does not contain directed cycles (as, e.g., the case when considering time-expanded networks in the context of dynamic flows), \Cref{thm:paths-hardness-DAGs} still implies that size-minimal identifying sets are hard to compute exactly, but \Cref{thm:paths-flow-gap} yields a tractable approximation for the problem.

\subsection{Related Work}

The controllability of traffic networks has been a longstanding research question.
The results of \citet{beckman1956studies} imply that the system-optimal flow can be obtained as the equilibrium when charging tolls on all edges.
\citet{cole2003pricing} showed that in single-commodity networks with heterogeneous players who value the trade-off between travel time and tolls differently, the system optimal flow can be obtained.
This result has been generalized to arbitrary networks \citep{fleischer2004tolls,karakostas2004edge,yang2004multi}.
These works obtain the toll prices from the dual information for an appropriate optimization problem; \citet{marcotte2009existence} gave a shorter proof of this result.
\citeauthor{fleischer2004tolls} also introduce the concept of \emph{enforceability}: they call a load vector \emph{enforceable} if there are tolls so that the load vector is obtained in equilibrium and give a characterization of enforceable load vectors.
\citet{harks2023unified} studied the enforceability of load vectors in a more general context that also includes market equilibria. 

The results above make no restriction on the set of elements that can have tolls.
\citet{verhoef2002second} studied a heuristic to minimize the total travel when tolls are only allowed on a given subset of edges.
\citet{hoefer2008taxing} provided a thorough complexity-theoretic analysis of this problem when the travel times are affine functions of the flow.
\citet{harks2015computing} generalized some of their positive results to non-affine travel time functions.
\citet{bonifaci2011efficiency} studied a more general model where tolls have to obey given edge-specific upper bounds.

\citet{hearn1998solving} introduced the minimum tollbooth problem, i.e., the problem of minimizing the cardinality of the edges that have a toll such that the system-optimal flow is enforced. 
\citet{bai2010heuristic} showed that this problem is $\mathsf{NP}$-hard and developed a metaheuristic for its solution.
\citet{bai2009combinatorial} proposed Benders cuts for a mixed-integer formulation of the problem.
\citet{basu2015new} strengthened this hardness result showing that the problem is $\mathsf{NP}$-hard to approximate within a factor of $1.13$ even for single-commodity instances with affine travel times. They further gave a polynomial algorithm for the special case that the network is single-commodity and series-parallel.
The minimum tollbooth problem has further been studied for atomic games. \citet{nickerl2021minimum} showed that finding a toll vector with minimal support such that there is a corresponding Nash equilibrium that is system-optimal is $\mathsf{W[2]}$-hard. 

While our work implicitly asks for a support of tolls that is robust under different states that need to be enforced and different cost structures, \citet{Colini-Baldeschi18} asked for tolls that are robust under different demand values.
A similar notion of tolls that are robust under different demand scenarios has been explored by \citet{ChristodoulouMP14} with the difference that here the tolls may be flow-dependent and count in the objective. Very recently, this problem has also been studied in a setting where the algorithm has access to a prediction for the demand \citep{christodoulou2025improving}.

Controllability is also closely related to inverse optimization, in which one seeks an objective function that renders a given solution optimal; see the recent survey by \citet{chan2025inverse}.
Specifically, in the context of combinatorial optimization problems, \citet{ahuja2001inverse} and \citet{berczi2025newton,berczi2023inverse} study the problem of finding such an objective function minimizing deviation with respect to a given norm from a nominal objective for shortest path, bipartite matching, and arborescence problems.
The controllability of systems has further been studied in the complex systems literature, e.g., by
\citet{liu2011controllability}.
Here, the dynamics of the system is more complicated and governed by differential equations rather than just the minimum of a cost function.

Identification problems, which we show to be largely equivalent to control problems in our context, arise also in the context of \emph{separating codes} in graphs~\citep{chakraborty2025open}, i.e., sets of nodes such that the neighborhood of a node inside the set suffices to identify the node.
When casting these problems into our framework of identifying sets, the ground set is the set of nodes and the set of solutions corresponds to the neighborhoods of the nodes in the graph. In particular, there is only a polynomial number of different solutions, and hence our logarithmic approximation results from \cref{sec:explicit} can be applied in this context.

\section{Preliminaries}
For $k \in \N$, we denote by $[k] := \{1,\dots,k\}$ the set of the first $k$ natural numbers.
Let $E$ be a finite set and let $X \subseteq \mathbb{R}^E$ be a non-empty set of feasible solutions.
Let $\mathcal{C}$ be a class of cost functions~$c \colon X \rightarrow \mathbb{R}$.
The following property is the central object of study of this paper.

\begin{definition}
A set $S \subseteq E$ \emph{controlling} for $(X, \mathcal{C})$ if for any $c \in \mathcal{C}$ and any $x^{\ast} \in X$ there is $\gamma \in \mathbb{R}^S$ such that the minimum $\min_{x \in X} \bigl\{c(x) + \sum_{e \in S} \gamma_e x_e\bigr\}$ exists and is attained at $x^{\ast}$.
\end{definition}

  Let $x_1,\dots,x_k \in X$ be a finite set of vectors and $\lambda_1, \dots, \lambda_k \in \R$. Then, $\smash{x = \sum_{i=1}^k \lambda_i x_i}$ is called a \emph{linear combination} of the vectors $x_1,\dots,x_k$. A linear combination is called an \emph{affine combination} if~$\smash{\sum_{i=1}^k \lambda_i = 1}$, and an affine combination is called a \emph{convex combination} if $\lambda_i \geq 0$ for all $i \in [k]$.
We denote by $\lin(X)$, $\aff(X)$, and $\conv(X)$ the linear hull, affine hull, and convex hull of $X$, i.e., the set of vectors that can be written as the linear, affine, or convex combination of finitely many vectors from $X$.
A set $X$ is a \emph{linear subspace} if $\lin(X) =X$, an \emph{affine subspace} if $\aff(X) =X$, and \emph{convex} if $\conv(X) = X$.
A set of vectors $\smash{\{x_0,\dots,x_k\} \subseteq X}$ is called \emph{affinely independent} if $x_1 - x_0,\dots,x_k - x_0$ are linearly independent.
In addition, $\{x_0,\dots,x_k\} \subseteq X$ is called an \emph{affine basis of $X$} if it is an inclusion-wise maximal affinely independent set.

A function $f : 2^E \to \R_+$ is called \emph{submodular} if $f(S \cap T) + f(S \cup T) \leq f(S) + f(T)$ for all~$S,T \in 2^E$. It is called \emph{monotone} if $f(S) \leq f(T)$ for all $S \subseteq T \subseteq E$, and \emph{normalized} if $f(\emptyset) = 0$.
For a set~$S \subseteq E$ and a vector $\smash{x \in \R^E}$, we write $x(S) = \sum_{e \in S} x_e$.
For a submodular, monotone, and normalized function~$f$, the set
\begin{align*}
B(f) = \bigl\{ x \in \mathbb{R}_+ : x(T) \leq f(T) \text{ for all $T \subseteq E$}, x(E) = f(E) \bigr\}
\end{align*}
the \emph{base polyhedron} associated with the polymatroid $(E,f)$. The function~$f$ is also called the \emph{polymatroid rank function}.

When a submodular, monotone, and normalized function in addition satisfies the properties that~$f(S) \in \N$ and $f(S) \leq |S|$ then it is the rank function of a \emph{matroid}.
The matroid is then defined as~$\mathcal{M}(f) = (E,\mathcal{I})$ where
\begin{align*}
\mathcal{I} = \bigl\{T \in 2^E : |T| \leq f(T) \bigr\}.
\end{align*}
For a matroid~$\mathcal{M} = (E,\mathcal{I})$, the sets~$S \in \mathcal{I}$ are called \emph{independent}, and the sets in $2^E \setminus \mathcal{I}$ are called \emph{dependent}.
A \emph{basis} of $\mathcal{M}$ is a maximal independent set, i.e., a set $S \in \mathcal{I}$ such that $S \cup \{e\} \notin \mathcal{I}$ for all~$e \in E \setminus S$.
A \emph{circuit} of $\mathcal{M}$ is a minimal dependent set, i.e., a set $S \in 2^E \setminus \mathcal{I}$ such that $S \setminus \{e\} \in \mathcal{I}$ for all $e \in S$.
For a matroid~$\mathcal{M}$, we also write 
$\operatorname{rk}(\mathcal{M})$ for the rank of that matroid, i.e., for the maximal size of an independent set of $\mathcal{M}$.

\section{Characterization of Controlling Sets}
\label{sec:control=identification}

In this section, we give a characterization of controlling sets for convex sets $X$ and convex cost functions~$\C$.
We call a set $S \subseteq E$ \emph{identifying} for $\smash{X \subseteq \mathbb{R}^E}$ if $\smash{S \cap \{e \in E : x_e \neq x'_e\} \neq \emptyset}$ for all $x, x' \in X$ with $x \neq x'$.
We first show that under a mild condition on the cost functions every controlling set is also identifying.
Formally, we call $(X,\mathcal{C})$ \emph{non-degenerated} if for every $x,x' \in X$ with $x \neq x'$ there is $c \in \mathcal{C}$ such that $c(x) \neq c(x')$.

\begin{lemma}\label{lem:ctrl=ident}
    Let $E$ be finite and $X \subseteq \mathbb{R}^E$ and $\mathcal{C}$ be a set of cost functions such that $(X, \mathcal{C})$ is non-degenerated. 
    If $S \subseteq E$ is controlling then it is also identifying.
\end{lemma}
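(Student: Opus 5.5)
We argue by contraposition. Suppose $S$ is not identifying, so there are $x, x' \in X$ with $x \neq x'$ and $x_e = x'_e$ for all $e \in S$. Since $(X,\mathcal{C})$ is non-degenerated, there is $c \in \mathcal{C}$ with $c(x) \neq c(x')$. After possibly swapping the roles of $x$ and $x'$, we may assume $c(x) > c(x')$. We then show that $x^* := x$ cannot be induced as a minimizer under $c$ for any choice of $\gamma$, which means $S$ is not controlling.

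Fix an arbitrary $\gamma \in \mathbb{R}^S$. Because $x$ and $x'$ agree on every coordinate in $S$, we have $\sum_{e \in S} \gamma_e x_e = \sum_{e \in S} \gamma_e x'_e$. Hence
\begin{align*}
c(x) + \sum_{e \in S} \gamma_e x_e > c(x') + \sum_{e \in S} \gamma_e x'_e .
\end{align*}
So $x$ has a strictly larger perturbed objective value than the feasible point $x'$. Consequently, even if $\min_{x'' \in X}\{c(x'') + \sum_{e\in S}\gamma_e x''_e\}$ exists, it is not attained at $x$.

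Since $\gamma$ was arbitrary, no $\gamma \in \mathbb{R}^S$ makes $x$ a minimizer for this $c$. This contradicts the definition of controlling, so $S$ is not controlling. The only point requiring care is the WLOG step: we must choose as target state the element of the pair with the larger $c$-value. There is no real obstacle here; the argument is essentially a one-line observation that terms added only on $S$ cannot distinguish solutions that agree on $S$.
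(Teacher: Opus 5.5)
Your proposal is correct and follows essentially the same argument as the paper. You take a pair agreeing on $S$, use non-degeneracy to find $c$ with $c(x) > c(x')$, and observe that the added linear terms on $S$ cancel, so $x$ can never be a minimizer. The paper phrases this as a proof by contradiction rather than contraposition, which makes no substantive difference.
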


\begin{proof}
Let $S$ be controlling. For a contradiction, let us suppose that $S$ was not identifying, i.e., there are $x, x' \in X$ with $x \neq x'$ such that $S \cap \{e \in E : x_e \neq x_e'\} = \emptyset$.
The non-degeneracy of $(X, \mathcal{C})$ implies that there is $c \in \mathcal{C}$ such that $c(x) \neq c(x')$ and it is without loss of generality to assume that~$c(x) > c(x')$. 
Since $S$ is controlling, there is $\gamma \in \mathbb{R}^S$ such that $x \in \arg\min_{y \in X} \bigl\{c(y) + \sum_{e \in S} \gamma_e y_e \bigr\}$.
We obtain
\begin{align*}
\textstyle  c(x) + \sum_{e \in S} \gamma_e x_e - \Bigl(  c(x') + \sum_{e \in S} \gamma_e x_e' \Bigr) = c(x) - c(x') > 0,
\end{align*}
where we used that $x_e = x_e'$ for all $e \in S$.
Thus, $x$ is not a minimizer of $c(x) + \sum_{e \in S} \gamma_e x_e$ for any~$\gamma \in \mathbb{R}^S$, contradicting the assumption that $S$ is controlling.
\end{proof}

In the remainder of this section, we turn to two important special cases, convex sets (\cref{sec:controlling-convex}) and discrete $0$-$1$-sets (\cref{sec:controlling-discrete}), for which we show that the converse of \cref{lem:ctrl=ident} is also true, i.e., identifying sets are also controlling sets.

\subsection{Controlling Sets in Convex Optimization}
\label{sec:controlling-convex}

We proceed to give a characterization of controlling sets when $X$ is convex and $\C$ consists of the set of convex cost functions. We will show that in this case the minimal controlling sets are the bases of a certain matroid.
To define this matroid, let $\{x_0, \dots, x_k\} \subseteq X$ be an affine basis of $X$ and define $\smash{A(F) := \bigl\{x_i - x_0 : i \in \{1, \dots, k\}\bigr\} \cup \bigl\{\chi_e : e \in F\bigr\}}$ for $F \subseteq E$.
Let
$$\mathcal{A}_X := \bigl\{F \subseteq E : A(F) \text{ is linearly independent} \bigr\}.$$
Note that $\mathcal{A}_X$ is a matroid on $E$ with $\operatorname{rk}(\mathcal{A}_X) = |E| - k$, where $k$ is the \emph{dimension} of the affine hull of $X$.
We obtain the following theorem.

\begin{theorem}\label{thm:convex}
    Let $\mathcal{C}$ be the set of convex cost functions. Let $X \subseteq \mathbb{R}^E$ be a convex set and let $S \subseteq E$. Then the following statements are equivalent.
    \begin{enumerate}[label=(\roman*.)]
        \item $S$ is a controlling set for $(X,\mathcal{C})$.\label{thm:convex:ctrl}
        \item $S$ is an identifying set for $X$.\label{thm:convex:ident}
        \item $E \setminus S \in \mathcal{A}_X$.\label{thm:convex:basis}
    \end{enumerate}
\end{theorem}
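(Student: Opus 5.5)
We will prove the cycle of implications (i)$\Rightarrow$(ii)$\Rightarrow$(iii)$\Rightarrow$(i).

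\emph{(i)$\Rightarrow$(ii).} This follows from \cref{lem:ctrl=ident} once we check that $(X,\mathcal{C})$ is non-degenerated. For $x\neq x'$, some coordinate $e$ has $x_e\neq x'_e$. The linear, hence convex, function $c(y)=y_e$ then separates $x$ and $x'$.

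\emph{(ii)$\Rightarrow$(iii).} Let $L=\lin\{x_i-x_0\}$ be the direction space of $\aff(X)$. The condition $E\setminus S\in\mathcal{A}_X$ says that $L\cap\lin\{\chi_e: e\notin S\}=\{0\}$, i.e., no nonzero $d\in L$ satisfies $d_e=0$ for all $e\in S$. We argue by contraposition. Suppose such a $d\neq 0$ exists. Since $X$ is convex, it has nonempty relative interior in $\aff(X)$; pick $x$ in it. Then $x'=x+\varepsilon d\in X$ for small $\varepsilon>0$. The two points $x,x'$ are distinct and agree on $S$, so $S$ is not identifying. The only delicate points are that relative interiors of nonempty convex sets in finite dimension are nonempty, and that the affine basis spans $L$.

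\emph{(iii)$\Rightarrow$(i).} This is the main step. Fix a convex $c$ and $x^*\in X$. We want $\gamma\in\R^S$ with $c(x)-c(x^*)+\sum_{e\in S}\gamma_e(x_e-x^*_e)\ge 0$ for all $x\in X$, i.e., $-\gamma$, padded with zeros off $S$, should be a subgradient of $c$ at $x^*$ relative to $X$. The assumption (iii) means the projection $\pi_S$ restricted to $L$ is injective. Hence every linear functional on $L$ is of the form $d\mapsto\sum_{e\in S}g_e d_e$ for some $g\in\R^S$, since the dual map of an injective map is surjective. So it suffices to find some linear functional $\ell$ on $L$ with $c(x)\ge c(x^*)+\ell(x-x^*)$ for all $x\in X$, and then set $\gamma=-g$.

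This is exactly the existence of a subgradient, which is the obstacle. A convex function on a convex set need not have one at relative boundary points: take $-\sqrt{x}$ at $0$ on $[0,1]$. Here the minimum with added linear terms would fail to be attained at $x^*$. I would handle this through the paper's convention on "convex cost functions". In the intended setting these are presumably convex functions on $\R^E$ or on an open neighbourhood, finite everywhere, restricted to $X$. Such functions have subgradients everywhere by the supporting hyperplane theorem applied to the epigraph. Restricting the subgradient to $L$ gives $\ell$. If the class is genuinely all convex functions on $X$, I would check whether the paper allows this. Otherwise I would note that the implication fails at boundary points and restrict to functions admitting subgradients, for instance those extendable to convex functions on $\aff(X)$.

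Given $\ell$, the function $c(x)+\sum_{e\in S}\gamma_e x_e$ with $\gamma=-g$ satisfies $c(x)-\ell(x)\ge c(x^*)-\ell(x^*)$ on $X$. So the minimum exists and is attained at $x^*$, which proves (i).
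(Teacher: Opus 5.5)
Your argument matches the paper's proof step for step. You prove (i)$\Rightarrow$(ii) via \cref{lem:ctrl=ident}, and you also check non-degeneracy, which the paper leaves implicit. You prove (ii)$\Rightarrow$(iii) by moving from a relative interior point along a nonzero direction of $\aff(X)$ that vanishes on $S$. You prove (iii)$\Rightarrow$(i) by turning a subgradient at $x^*$ into a $\gamma$ supported on $S$ that agrees with its negative on the direction space; the paper does this by solving the linear system whose rows are $A(E\setminus S)$, and your duality argument is equivalent.

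Your caveat is well founded. The paper simply asserts that every convex $c$ has $\partial c(x)\neq\emptyset$ at all $x\in X$, which is false at relative boundary points. Your example $c(x)=-\sqrt{x}$ on $X=[0,1]$, with $S=E$ and $x^*=0$, actually refutes (iii)$\Rightarrow$(i) as literally stated. So a restriction is genuinely needed for that implication, for instance to convex functions that extend finitely to $\aff(X)$ or to an open convex neighbourhood of $X$.
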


Note that \ref{thm:convex:basis} is equivalent to $S$ containing a basis of $\mathcal{A}_X^{\star}$, the dual matroid of $\mathcal{A}_X$. Hence, the minimal controlling sets of $X$ are exactly the bases of the matroid $\mathcal{A}_X^{\star}$. We note further that~$\operatorname{rk}(\mathcal{A}^{\star}_X) = |E| - \operatorname{rk}(\mathcal{A}_X) = k$, which, as discussed above, is the dimension of the affine hull of~$X$.

\begin{proof}[Proof of \cref{thm:convex}]
If the dimension~$k$ of the affine hull of $X$ is $0$, $X$ contains a single point only, every set if controlling and identifying, and $\mathcal{A}_X$ is the free matroid where all sets are independent, so the theorem holds in this case. In the following, we assume that $k \geq 1$. 

    Note that \ref{thm:convex:ctrl} implies \ref{thm:convex:ident} by \cref{lem:ctrl=ident}. We will show that \ref{thm:convex:ident} implies \ref{thm:convex:basis} and that \ref{thm:convex:basis} implies~\ref{thm:convex:ctrl}.
    
    To show that \ref{thm:convex:ident} implies \ref{thm:convex:basis}, assume that $S$ is an identifying set for $X$ and assume by contradiction that $E \setminus S \notin \mathcal{A}_X$. By definition of $\mathcal{A}_X$ this implies that there is $\lambda \in \mathbb{R}^k$ and $\mu \in \mathbb{R}^{E \setminus S}$ with~$(\lambda, \mu) \neq 0$ such that 
    \begin{align}
   \textstyle    \sum_{i=1}^{k} \lambda_i (x_i - x_0) + \sum_{e \in E \setminus S} \mu_e \chi_e = 0.\label{eq:lin-dep}
    \end{align}
    Note that $\Delta := \sum_{i=1}^{k} \lambda_i (x_i - x_0) \in \mathbb{R}^E \setminus \{0\}$ because the vectors $\chi_e$ for $e \in F$ are linearly independent.
    Note further that \eqref{eq:lin-dep} implies that $\Delta_e = 0$ for all $e \in S$.
    Let~$x'$ be a point in the interior of~$\conv(\{x_0, \dots, x_k\}) \subseteq X$ which exists as $k \geq 1$.
    Then, by the convexity of $X$, there is $\varepsilon > 0$ such that $\smash{x'' := x' + \varepsilon \Delta \in X}$.
    But $x''_e - x'_e = \varepsilon \Delta_e = 0$ for all $e \in S$ and thus $S$ is not identifying for $X$.
    We conclude that \ref{thm:convex:ident} implies~\ref{thm:convex:basis}.
    
    Now assume that $E \setminus S \in \mathcal{A}_X$.
    Let $c \in \mathcal{C}$ and $x \in X$.
    Recall that for a function $c : X \to \R$ a vector~$g \in \R^n$ is a \emph{subgradient} at $x$ if
    \begin{align*}
    c(z) &\geq c(x) + g^\top (z-x) && \text{ for all $z \in X$}.
    \end{align*}
    The set of subgradients of $c$ at $x$ is called the \emph{subdifferential} and is denoted by $\partial c(x)$. When $c$ is convex, then for all $x \in X$, the subdifferential $\partial c(x)$ is non-empty. We further have that $x^*$ is a minimizer of $c$ if and only if $0 \in \partial c(x^*)$.

    Let $x \in X$ and $d \in \partial c(x)$ be arbitrary.
    Let $\gamma \in \mathbb{R}^E$ be a solution to the system
    \begin{align*}
        (x_i - x_0)^\top \gamma & = - d^\top (x_i - x_0) && \text{ for all $i \in [k]$},\\
        \gamma(e) & = 0 && \text{ for all $e \in E \setminus S$}.
    \end{align*}
    Note that such a solutions exists because the rows of the coefficient matrix of the system correspond to the vectors in $A(F)$, which are linearly independent because $E \setminus S \in \mathcal{A}_X$.
    As $d$ is a subgradient of $c$ at $x$, we have
    \begin{align*}
    c(z) &\geq c(x) + d^\top(z-x) && \text{ for all $z \in X$.}
    \end{align*}
    We then obtain
    \begin{align}
    \label{eq:subgradient-gamma}
    c(z) + \gamma^\top z \geq c(x) + \gamma^\top z + d^\top(z-x)
    \end{align}
    Using that $x_0,\dots,x_k$ is as affine basis of $X$, we can write $z  = x_0 + \sum_{i \in [k]} \lambda_i (x_i - x_0)$ and $x = x_0 + \sum_{i \in [k]} \mu_i (x_i - x_0)$ for appropriate coefficients $\lambda_i, \mu_i \in \R$, and we obtain
    \begin{align*}
    \gamma^\top z + d^\top(z-x) &= \gamma^\top x_0 + \sum_{i \in [k]} \lambda_i \gamma^\top (x_i - x_0) + d^\top x_0 + \sum_{i \in [k]} \lambda_i d^\top (x_i - x_0) \\
    &\quad  -d^\top x_0 - \sum_{i \in [k]} \mu_i d^\top (x_i - x_0) \\
    &= \gamma^\top x_0 + \sum_{i \in [k]} \mu_i \gamma^\top (x_i - x_0) \\
    &= \gamma^\top x.
    \end{align*}
    Plugging this equation into \eqref{eq:subgradient-gamma}, this yields
    $
    c(z) + \gamma^\top z \geq c(x) + \gamma^\top x$,
    i.e., $0$ is a subgradient of the function $x \mapsto c(x) + \gamma^\top x$.
    Hence $x$ is an optimal solution to $\min_{x \in X} \bigl\{\sum_{e \in E} c_e(x_e) + \sum_{e \in S} \gamma_e x_e \bigr\}$.
\end{proof}

Note that by \cref{thm:convex}, the identifying/controlling sets for $X$ form a matroid.
We can optimize linear functions over this matroid when given access to an independence oracle.
Such an oracle can be obtained if we are given the affine hull of $X$, e.g., in the form of an affine basis $\{x_0, \dots, x_k\} \subseteq X$, and we obtain the following immediate corollary.

\begin{corollary}
    There is an algorithm that, given a finite set $E$, a weight function $w : E \rightarrow \mathbb{R}$ and an affine basis $\{x_0, \dots, x_k\} \subseteq X$ of a convex set $X$, computes in polynomial time in $|E|$ a subset $S \subseteq E$ such that $S$ is an identifying set for $X$ of minimum weight $\sum_{e \in S} w_e$. 
\end{corollary}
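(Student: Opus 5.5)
The plan is to reduce the corollary to weighted optimization over a matroid. By \cref{thm:convex}, $S$ is identifying for $X$ if and only if $E \setminus S \in \mathcal{A}_X$. So minimizing $w(S)$ over identifying sets $S$ is the same as maximizing $w(E \setminus S)$ over independent sets $F = E \setminus S$ of the matroid $\mathcal{A}_X$, since $w(S) = w(E) - w(F)$. If some weights are negative, elements with $w_e < 0$ can always be added to $S$ without losing the identifying property, because supersets of identifying sets are identifying. Hence we may put all such elements into $S$ and work with the contracted/deleted problem. Equivalently, we maximize $w(F)$ only over elements with $w_e \ge 0$, and the greedy algorithm handles this directly by never taking elements of negative weight into $F$.

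The algorithmic steps are as follows.
\begin{enumerate}
\item Form the vectors $x_i - x_0$ for $i \in [k]$ from the given affine basis. These are linearly independent by assumption.
\item Sort the elements of $E$ by nonincreasing weight.
\item Run the matroid greedy algorithm. Start with $F = \emptyset$ and scan the elements $e$ in sorted order. Add $e$ to $F$ if $w_e \ge 0$ and $A(F \cup \{e\})$ is linearly independent.
\item Output $S = E \setminus F$.
\end{enumerate}

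Correctness of the greedy algorithm for maximum-weight independent sets in a matroid is classical. We use the fact, noted before the theorem, that $\mathcal{A}_X$ is a matroid: it is the restriction to $\{\chi_e\}$ of the linear matroid represented by the vectors $A(E)$, contracted by the independent set $\{x_i - x_0\}$. The greedy output $F$ is independent, so $S$ is identifying. Optimality of $F$ among nonnegative-weight independent sets gives minimality of $w(S)$, because any identifying $S'$ has complement independent in $\mathcal{A}_X$.

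For the running time, each independence test is a rank computation of at most $k + |E| \le 2|E|$ vectors in $\mathbb{R}^E$. Note that $k \le |E|$ because the $x_i - x_0$ are independent in $\mathbb{R}^E$. This rank can be computed by Gaussian elimination in polynomial time, with polynomial bit complexity for rational input. There are $|E|$ such tests, so the total time is polynomial.

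The only mildly delicate point is the treatment of negative weights together with the bit-size of the input. Everything else is a direct application of \cref{thm:convex} plus the greedy algorithm, so no step is a genuine obstacle.
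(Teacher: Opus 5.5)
Your proposal is correct and follows the paper's route. The paper likewise derives the corollary from \cref{thm:convex}: identifying sets are exactly the complements of independent sets of the matroid $\mathcal{A}_X$, and an independence oracle for $\mathcal{A}_X$ comes from the affine basis via linear algebra. You make explicit what the paper leaves implicit, namely the greedy algorithm for maximum-weight independent sets (including the handling of negative weights) and the Gaussian-elimination independence test.
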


We next observe that when $X$ is convex but $\C$ contains a cost function that is not convex, no controlling set exists.

\begin{lemma}
Let $\smash{X \subseteq \mathbb{R}^E}$ be convex, and let $\C$ contain a non-convex function~$\tilde{c} : X \to \R$. Then, there is no controlling set for $(X, \C)$.
\end{lemma}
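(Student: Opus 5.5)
Since $\tilde c$ is not convex, there are $y, y' \in X$ and $\lambda \in (0,1)$ with $\tilde c(z) > \lambda \tilde c(y) + (1-\lambda)\tilde c(y')$, where $z := \lambda y + (1-\lambda) y'$. By convexity of $X$ we have $z \in X$. The plan is to show that, for every $S \subseteq E$, the point $z$ cannot be a minimizer of $\tilde c + \sum_{e\in S}\gamma_e x_e$ over $X$, whatever $\gamma \in \R^S$ is chosen. Since the definition of a controlling set requires this for every $c\in\C$ and every $x^*\in X$, taking $c=\tilde c$ and $x^* = z$ then shows that no $S$ is controlling.

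The key observation is that adding a linear term does not affect the violation of convexity. Fix $S$ and $\gamma \in \R^S$, and write $\ell(x) := \sum_{e \in S}\gamma_e x_e$, which is linear. Then $\ell(z) = \lambda \ell(y) + (1-\lambda)\ell(y')$, so the function $f := \tilde c + \ell$ satisfies
\[
f(z) > \lambda f(y) + (1-\lambda) f(y').
\]
A convex combination of two reals is at least their minimum, so $\min\{f(y), f(y')\} \le \lambda f(y) + (1-\lambda) f(y') < f(z)$. Hence $f(y) < f(z)$ or $f(y') < f(z)$, and in either case $z \notin \arg\min_{x\in X} f(x)$.

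No single step is a real obstacle. The points needing care are to choose the target point $x^* = z$ independently of $S$ and $\gamma$, and to rely on convexity of $X$ so that $z \in X$ is an admissible target. If $\arg\min$ is read as also requiring the minimum to exist, the argument still goes through, because we have shown that $z$ fails to attain the minimum regardless.
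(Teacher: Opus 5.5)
Your proof is correct, but it takes a different route from the paper. The paper cites an external fact (Nesterov, Lemma 3.1.6): a function on a convex set whose subdifferential is nonempty everywhere is convex. This yields a point $x$ with $\partial\tilde c(x)=\emptyset$. The paper then argues that if the full ground set $E$ were controlling, inducing $x$ as a minimizer of $\tilde c+\gamma^\top x$ would make $-\gamma$ a subgradient of $\tilde c$ at $x$, a contradiction. The reduction from an arbitrary controlling $S$ to $E$ is left implicit; it holds because any $\gamma\in\R^S$ extends by zeros.

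You instead work directly with a witness $(y,y',\lambda)$ of non-convexity. You observe that adding a linear term preserves the strict violation, so $z$ always has a strictly cheaper competitor among $y,y'$. This in effect inlines the proof of the contrapositive of Nesterov's lemma at the specific point $z$. It handles every $S$ and every $\gamma$ at once, and needs neither subgradient machinery nor the reduction to $S=E$.

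The paper's formulation fits the subgradient framework of the proof of \cref{thm:convex}. Its argument implicitly shows that, under full control, a point is inducible exactly when the subdifferential there is nonempty. Your argument is fully self-contained and exhibits an explicit target that can never be induced.
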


\begin{proof}
It is well-known that if $X$ is a convex set, and $c \colon X \to \R$ is a function such that $\partial c(x) \neq \emptyset$ for all $x \in X$, then $c$ is convex~\citep[Lemma 3.1.6]{nesterov2004}.
As $\tilde{c}$ is non-convex and $X$ is non-empty, there thus must be some $x \in X$ such that $\partial \tilde{c}(x) = \emptyset$.
Now assume by contradiction that $X = E$ is controlling for $X$ and $\C$. Then there is $\gamma \in \R^E$ such that $x \in \arg\min_{y \in X} \{\tilde{c}(x) + \gamma^\top x\}$. Defining~$c \colon X \to \R$ as $x \mapsto \tilde{c}(x) + \gamma^\top x$, this implies that $0 \in \partial c(x)$, i.e.,
\begin{align*}
&& \tilde{c}(z) + \gamma^\top z &\geq \tilde{c}(x) + \gamma^\top x && \text{ for all $z \in X$} \\
\Leftrightarrow && \tilde{c}(z) &\geq \tilde{c}(x) - \gamma^\top (z - x) && \text{ for all $z \in X$}, 
\end{align*}
which is a contradiction as the latter inequality implies that $-\gamma$ is a subgradient of $\tilde{c}$ at $x$.
\end{proof}

\subsection{Controlling Sets in Discrete Optimization}
\label{sec:controlling-discrete}

We now turn to the case where $X$ is a discrete set of binary vectors.
We observe that in this case any identifying set is also controlling.

\begin{theorem}
\label{thm:identifying-is-controlling-discrete}
    Let $E$ be a finite ground set, $X \subseteq \{0, 1\}^E$. Then if $S$ is identifying it is also controlling for the class of all cost functions. 
\end{theorem}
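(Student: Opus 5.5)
Given an identifying set $S$, an arbitrary cost function $c \colon X \to \R$, and a target $x^* \in X$, the plan is to write down an explicit toll vector $\gamma \in \R^S$ that makes $x^*$ a minimizer. Two facts make this work. First, $X \subseteq \{0,1\}^E$ is finite, so $c$ takes only finitely many values and the minimum of $c(x) + \sum_{e \in S}\gamma_e x_e$ over $X$ always exists; we only need $x^*$ to attain it. Second, every $x \in X \setminus \{x^*\}$ differs from $x^*$ in some coordinate $e \in S$, because $S$ is identifying. For a binary coordinate this means exactly one of $x_e = 1, x^*_e = 0$ or $x_e = 0, x^*_e = 1$ holds.

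\textbf{Choice of $\gamma$.} Let $M := \max_{x \in X} c(x) - \min_{x \in X} c(x) + 1$, which is finite. For $e \in S$ set
\[
\gamma_e := -M \text{ if } x^*_e = 1, \qquad \gamma_e := M \text{ if } x^*_e = 0.
\]
This rewards agreeing with $x^*$ on $S$. For any $x \in X$,
\[
\textstyle\sum_{e \in S} \gamma_e x_e - \sum_{e \in S} \gamma_e x^*_e = M \cdot \bigl|\{e \in S : x_e \neq x^*_e\}\bigr|.
\]
To check this coordinatewise: if $x^*_e = 1$ and $x_e = 0$, the contribution is $0 - (-M) = M$. If $x^*_e = 0$ and $x_e = 1$, the contribution is $M - 0 = M$. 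If $x_e = x^*_e$, the contribution is $0$.

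\textbf{Comparison.} Take $x \neq x^*$. Since $S$ is identifying, the count $|\{e \in S : x_e \neq x^*_e\}|$ is at least one. Therefore
\[
c(x) + \gamma^\top x - c(x^*) - \gamma^\top x^* \;\geq\; M + c(x) - c(x^*) \;\geq\; M - \bigl(\max c - \min c\bigr) = 1 \;>\; 0 .
\]
So $x^*$ is in fact the unique minimizer, which is what controlling requires. Since $c$ was arbitrary, this covers the class of all cost functions.

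\textbf{Main obstacle.} There is no deep step here. The point needing care is that binarity is what turns each disagreement on $S$ into a penalty of exactly $M$ with the right sign. For general integer vectors, deviations on $S$ could cancel in the linear term or fail to be penalized uniformly, which is consistent with the paper's remark that identifying need not imply controlling for $X \subseteq \mathbb{Z}^E$. Finiteness of $X$, which guarantees that $M$ exists and that the minimum is attained, comes free from $X \subseteq \{0,1\}^E$.
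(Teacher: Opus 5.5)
Your proof is correct and follows the paper's argument. Both use a big-$M$ toll, $-M$ on coordinates of $S$ where $x^*_e=1$ and $+M$ where $x^*_e=0$, so that each disagreement with $x^*$ on $S$ costs exactly $M$; the only difference is cosmetic, since the paper takes $M = 2\max_{x \in X}|c(x)|$ and gets a weak inequality, while your $M = \max c - \min c + 1$ gives a strict one and hence makes $x^*$ the unique minimizer.
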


\begin{proof}
    Let us assume $S$ is identifying.
    Let $c \in \mathcal{C}$ and $x^* \in X$.
    Let $M := 2 \cdot \max_{x \in X} |c(x)|$.
    Let~$\gamma_e := -M$ for all $e \in S$ with $x^*_e = 1$ and $\gamma_e := M$ for all $e \in S$ with $x^*_e = 0$.
    Then, for any~$x \in X$, we have~$\sum_{e \in S} \gamma_e (x_e - x^*_e) \geq M$
    because there is at least one $e \in S$ with $x_e \neq x^*_e$, and, as a consequence, we have~$c(x) + \sum_{e \in E} \gamma_e x_e \geq c(x^*) + \sum_{e \in S} \gamma_e x^*_e$.
    Thus, $x^*$ is an optimizer of $\min_{x \in X} \bigl\{ c(x) + \sum_{e \in S} \gamma_e x_e \bigr\}$.
\end{proof}

\begin{corollary}
    Let $E$ be a finite ground set, $X \subseteq \{0, 1\}^E$ and let $\mathcal{C}$ be a set of cost functions such that $(X, \mathcal{C})$ is non-degenerate. Then $S \subseteq E$ is identifying if and only if it is controlling for $X$ and $\mathcal{C}$. 
\end{corollary}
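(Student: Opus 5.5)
The corollary follows by combining the two directions already established. The plan is to invoke \cref{lem:ctrl=ident} for one implication and \cref{thm:identifying-is-controlling-discrete} for the other.

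For the direction ``controlling $\Rightarrow$ identifying'', I will apply \cref{lem:ctrl=ident} directly. Its only hypothesis is that $(X,\mathcal{C})$ is non-degenerated, which the corollary assumes. Binarity of $X$ plays no role here.

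For the direction ``identifying $\Rightarrow$ controlling'', I will use \cref{thm:identifying-is-controlling-discrete}. It states that for $X \subseteq \{0,1\}^E$, an identifying set $S$ is controlling for the class of \emph{all} cost functions $c\colon X \to \R$. Since $\mathcal{C}$ is a subclass of all cost functions, and the definition of controlling quantifies universally over $c \in \mathcal{C}$, the property restricts to $\mathcal{C}$. Hence $S$ is controlling for $(X,\mathcal{C})$.

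The one step that needs care is that the minimum in the definition of controlling actually exists. The proof of \cref{thm:identifying-is-controlling-discrete} relies on $M = 2\max_{x\in X}|c(x)|$ being finite, and on the minimum over $X$ being attained. Both hold because $X \subseteq \{0,1\}^E$ is finite, so every $c$ is bounded on $X$ and every minimum over $X$ is attained. Beyond this check, the argument is a direct combination of the two earlier results.
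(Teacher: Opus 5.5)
Your proposal is correct and takes the same approach as the paper. The paper states this corollary without proof as the immediate combination of \cref{lem:ctrl=ident} (controlling $\Rightarrow$ identifying, using non-degeneracy) and \cref{thm:identifying-is-controlling-discrete} (identifying $\Rightarrow$ controlling for all cost functions, hence for any subclass $\mathcal{C}$); your extra finiteness check is harmless but already covered by that theorem.
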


\begin{remark}
    We remark that in the binary setting, i.e., $X \subseteq \{0, 1\}^E$, the notion of identifying sets can be equivalently expressed in terms of the corresponding set system $\mathcal{F}(X) := \{F \subseteq E : \chi_F \in X\}$ whose incidence vectors are the elements of $X$.
    In this case, $S \subseteq E$ is identifying for $X$ if and only if~$(F \Delta F') \cap S \neq \emptyset$ for all $F, F' \in \mathcal{F}(X)$ with $F \neq F'$.
\end{remark}

We close this section by showing with an example that in general identifying sets need not be controlling, even when all vectors of $X$ have integer entries.

\begin{example}[Identifying set that is not controlling]
Let $E = \{1,2,3\}$ and consider the feasible set  
\begin{align*}
X &= \bigl\{x \in \mathbb{Z}^E : x_1, x_2 \in \{0, 1\}, x_3 = x_1 + 2x_2 \bigr\} = \bigl\{(0, 0, 0), (0, 1, 1), (1, 0, 2), (1, 1, 3) \bigr\}.
\end{align*}
We observe that $S = \{3\}$ is an identifying set, but we claim that $S = \{3\}$ is not controlling for $(X,\mathcal{C})$ when $\mathcal{C}$ is the set of all linear functions.
To see this claim, suppose that $S = \{3\}$ was controlling for all linear functions.
This implies in particular that for $c(x) = x_1 - x_2 + x_3$, there is $\smash{\gamma}$ such that $\smash{x^{(1)} = (0,0,0) \in \arg\min \bigl\{c(x) + \gamma x_3 \bigr\}}$.
Writing $\smash{x^{(2)} = (0,1,1)}$ and $\smash{x^{(4)} = (1,1,3)}$, we obtain the inequalities
\begin{align*}
c(x^{(1)}) &\leq c(x^{(2)}) + \gamma &&\Leftrightarrow & 0 &\leq -1 + \gamma, \\
c(x^{(1)}) &\leq c(x^{(4)}) + 3 \gamma &&\Leftrightarrow & 0 &\leq 3 \gamma,
\end{align*}
which is a contradiction.
\end{example}

\section{Controlling Flows and Paths in Networks}
\label{sec:controlling-paths-flows}

In this section, we assume that we are given a directed graph (digraph) $G = (V, E)$ together with two nodes $s, t \in V$.
We are interested in the controlling/identifying sets for the $s$-$t$-flows and $s$-$t$-paths in $G$, respectively, i.e., for
\begin{align*}
    \Xflow & := \bigl\{x \in \mathbb{R}_+^E : x \text{ is an $s$-$t$-flow of value $1$ in $G$} \bigr\}, \quad \text{and}\\
    \Xpath & := \bigl\{\chi_P :  \text{$P$ is an $s$-$t$-path in $G$} \bigr\},
\end{align*}
where $\chi_P \in \{0,1\}^E$ is the indicator variable for the (edge set of) path $P$, i.e., the entry of $\chi_P$ corresponding to $e \in E$ is $1$ if $e$ lies on $P$, and is equal to $0$ otherwise.

Note that $\smash{\Xflow}$ is convex and that $\smash{\Xpath \subseteq \{0, 1\}^E}$.
Suppose that $\mathcal{C}$ is a set of convex functions.
Hence, in both cases, by our results from \cref{sec:control=identification}, controlling and identifying sets coincide for the respective choice of $X$.
We will thus focus on computation of identifying sets for $\smash{\Xflow}$ and $\smash{\Xpath}$, respectively.

We first discuss identifying sets for $\smash{\Xflow}$ in \cref{sec:controlling-flows}, which by \cref{thm:convex} form a matroid and thus allow for efficient computation.
We then discuss $\smash{\Xpath}$ for the case that $G$ is acyclic in \cref{sec:paths-DAG}. Although the problem of computing minimum identifying sets is $\mathsf{NP}$-hard even in this case, we are able to provide a tight bound on the gap between minimum identifying sets for $\smash{\Xflow}$ and $\smash{\Xpath}$, yielding an approximation for the latter.
Finally, in \cref{sec:paths-general-digraphs}, we consider computing identifying sets for $\smash{\Xpath}$ in general digraphs, for which we obtain strong hardness results.

\subsection{Controlling $s$-$t$-Flows}
\label{sec:controlling-flows}

We obtain the following theorem as a special case of \cref{thm:convex} applied to convex set $\Xflow$.

\begin{theorem}
\label{thm:flows}
    Let $G = (V, E)$ be a digraph, $s, t \in V$ such that there is a $s$-$t$-path in $G$, and $$E' := \{e \in E : e \text{ is on an $s$-$t$-path or directed cycle in $G$}\}.$$
    Then $S \subseteq E$ is identifying for $\Xflow$ if and only if $(V, E' \setminus S)$ contains no undirected cycle.
\end{theorem}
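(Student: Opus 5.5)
By \cref{thm:convex} applied to the convex set $X := \Xflow$, the set $S$ is identifying if and only if $E \setminus S \in \mathcal{A}_X$. Equivalently, there is no nonzero vector $\Delta$ in the direction space $L := \lin\{x - x_0 : x \in X\}$ of $\aff(X)$ with $\Delta_e = 0$ for all $e \in S$. This is exactly what the first part of that proof uses. The plan is to show that $L$ equals the (signed) cycle space of the undirected graph $(V, E')$:
\[
L = Z := \Bigl\{z \in \mathbb{R}^E : z_e = 0 \text{ for } e \notin E', \ \textstyle\sum_{e \in \delta^-(v)} z_e - \sum_{e \in \delta^+(v)} z_e = 0 \text{ for all } v \in V\Bigr\}.
\]
Granting this, the theorem follows from a standard fact. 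A nonzero $z \in Z$ vanishing on $S$ exists if and only if $(V, E' \setminus S)$ contains an undirected cycle. Indeed, an undirected cycle $C$ gives the vector with entries $\pm 1$ on $C$, with signs according to whether each arc is traversed forwards or backwards, and this vector lies in $Z$. Conversely, the support of a nonzero circulation has no node of degree one, hence contains an undirected cycle.

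For $L \subseteq Z$: every $x \in X$ is nonnegative with balance vector $b$. By flow decomposition, $x$ is a nonnegative combination of $s$-$t$-paths and directed cycles, so $\supp(x) \subseteq E'$. The difference of two flows of value $1$ has zero balance everywhere. Hence $x - x_0 \in Z$ for all $x \in X$.

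For $Z \subseteq L$, which I expect to be the main step, I will construct a flow $\bar{x} \in X$ with $\bar{x}_e > 0$ for every $e \in E'$. Take $\bar{x} := \frac{1}{p}\sum_{P} \chi_P + \sum_{C} \chi_C$, where $P$ ranges over the $p \geq 1$ many $s$-$t$-paths and $C$ over the directed cycles of $G$. Here we use that adding the indicator vector of a directed cycle to an $s$-$t$-flow preserves all balances and nonnegativity. This holds even when the cycle is not reachable from $s$, and this observation is what makes cycle edges count. Now let $z \in Z$. For $\varepsilon > 0$ small enough, $\bar{x} + \varepsilon z \geq 0$, because $z$ vanishes outside $E'$ and $\bar{x}$ is strictly positive on $E'$. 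Moreover $\bar{x} + \varepsilon z$ has balance $b$, so it lies in $X$. Thus $z = \frac{1}{\varepsilon}\bigl((\bar{x} + \varepsilon z) - \bar{x}\bigr) \in L$.

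Combining the two inclusions, $E \setminus S \notin \mathcal{A}_X$ holds exactly when some nonzero $z \in Z$ vanishes on $S$. By the cycle-space fact above, this is equivalent to $(V, E' \setminus S)$ containing an undirected cycle, which proves the claim. The only points needing care are the flow-decomposition claim that supports lie in $E'$ and the strict positivity of $\bar{x}$ on $E'$. The latter holds by the very definition of $E'$.
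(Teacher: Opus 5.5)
Your proof is correct. It rests on the same ingredients as the paper's proof: supports of flows lie in $E'$, differences of flows are circulations, averaging yields a flow that is strictly positive where needed, and one perturbs along a cycle. You organize these ingredients differently, though.

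The paper argues directly with pairs of flows and never actually invokes \cref{thm:convex}. If $E' \setminus S$ is a forest, the support of $x - x'$ contains an undirected cycle, which must meet $S$. Conversely, for an undirected cycle $C \subseteq E' \setminus S$, it averages flows $x^{(e)}$, $e \in C$, to get a flow positive on $C$ and augments along $C$.

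You instead prove a structural identity: the direction space of $\aff(\Xflow)$ is exactly the circulation space $Z$ of $(V,E')$. You then read off the theorem from (ii)$\Leftrightarrow$(iii) of \cref{thm:convex} plus the standard cycle-space fact. This yields more than the statement itself. It identifies $\mathcal{A}_X$ explicitly as the graphic matroid of $(V,E')$, with the arcs of $E \setminus E'$ as coloops. It gives $\dim \Xflow$ as the cyclomatic number of $(V,E')$. It also turns the paper's informal remark that the theorem is a special case of \cref{thm:convex}, and the ensuing cographic-matroid discussion, into an actual derivation.

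Your global perturbation $\bar{x} + \varepsilon z$ also avoids the degenerate case $C^- = \emptyset$, where the paper's choice $\varepsilon = \min_{e \in C^-} x_e$ is undefined. You also make explicit why arcs on directed cycles unreachable from $s$ still carry positive flow.

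The paper's argument, in turn, is more elementary and self-contained. It needs positivity only on the single cycle $C$ rather than on all of $E'$, and it makes no appeal to the general convex theory.
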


Before we turn to the proof of \cref{thm:flows}, we discuss some implications of the theorem.
A consequence of \cref{thm:flows} is that the minimal identifying sets for $\Xflow$ are the complements of spanning forests of $(V, E')$, or, in other words, they are the bases of the co-graphic matroid of the underlying undirected graph of $(V, E')$.
Furthermore, note that $E'$ can be constructed in polynomial time from $G$:
Determining the arcs of $E$ that are on a cycle can be done by determining the strongly connected components of $G$.
Moreover, any edge $e = (v, w) \in E$ that is not contained on a cycle is in $E'$ if and only if there is an $s$-$v$-path and a $w$-$t$-path in $G$ (note that the fact that $e$ does not lie on a cycle implies that these paths are node-disjoint, and hence $e$ lies on an $s$-$t$-path ).
Hence, we have an efficient independence oracle for the matroid whose bases are the minimal identifying sets of $\Xflow$. 
We obtain the following corollary.

\begin{corollary}\label{cor:flows}
    There is an algorithm that given a digraph $G = (V, E)$, two nodes $s, t \in V$, and weights $w \in \mathbb{R}^E$, finds an identifying set $S$ of $\Xflow$ minimizing $\sum_{e \in S} w_e$ in polynomial time.
\end{corollary}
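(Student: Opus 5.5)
The plan is to reduce \cref{cor:flows} to a minimum-weight spanning forest computation, using \cref{thm:flows} as the structural ingredient. By \cref{thm:flows}, $S$ is identifying for $\Xflow$ if and only if $(V, E' \setminus S)$ contains no undirected cycle. The condition involves only $S \cap E'$, so every edge of $E \setminus E'$ is irrelevant to identification. Minimizing $\sum_{e \in S} w_e$ is therefore equivalent to the following: include every $e \in E \setminus E'$ with $w_e < 0$, and choose $S' := S \cap E'$ so that $E' \setminus S'$ is a forest and $w(S')$ is minimal.

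The algorithm has three steps.

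\emph{Compute $E'$.} Find the strongly connected components of $G$ in linear time. An arc lies on a directed cycle exactly when both endpoints are in the same component. Then perform a forward search from $s$ and a backward search from $t$. As argued after \cref{thm:flows}, an arc $(v,w)$ not on a cycle belongs to $E'$ exactly when $v$ is reachable from $s$ and $t$ is reachable from $w$.

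\emph{Choose the forest.} We need $S' \subseteq E'$ minimizing $w(S')$ subject to $F := E' \setminus S'$ being acyclic in the underlying undirected graph. Since $w(S') = w(E') - w(F)$, this amounts to finding a maximum-weight forest $F$ in the undirected multigraph on $E'$.

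\emph{Solve that forest problem greedily.} Run Kruskal's algorithm on the positive-weight edges only: process them in decreasing order of weight and add an edge whenever it does not close a cycle. Edges with $w_e \le 0$ are never added to $F$, so they all end up in $S'$, which is desirable or harmless.

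Correctness is the standard greedy optimality for graphic matroids. Independent sets are forests, and the restriction to positive weights is the usual reduction from maximum-weight independent set to greedy on a matroid. Equivalently, because the complements of forests form an up-closed family whose minimal members are the bases of the cographic matroid, the optimal $S'$ is the complement of a maximum-weight forest. With the rule for $E \setminus E'$ added back, $S$ is optimal.

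All steps are polynomial: graph searches, SCC computation, sorting, and union–find. The one point needing care is the handling of weights, which the statement allows to be arbitrary reals, so negative weights must be dealt with by the rules above. Parallel and antiparallel arcs of $E'$ become parallel undirected edges and form 2-cycles, and Kruskal handles these correctly.
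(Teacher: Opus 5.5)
Your proposal is correct and follows the same route as the paper. You invoke \cref{thm:flows} to reduce identification to $E' \setminus S$ being acyclic in the underlying undirected graph of $(V,E')$, compute $E'$ via strongly connected components and $s$/$t$ reachability, and then optimize with the matroid greedy algorithm (Kruskal on the complement), while also making explicit the handling of arbitrary real weights and of arcs in $E \setminus E'$, which the paper leaves implicit in its appeal to an independence oracle for the cographic matroid.
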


\begin{proof}[Proof of \cref{thm:flows}]
    Assume that $E' \setminus S$ does not contain an undirected cycle.
    Let $x, x' \in \Xflow$ with $x \neq \emptyset$. Note that $x_e = x'_e = 0$ for all $e \in E \setminus E'$ by construction of $E'$.
    Note that $x - x'$ is a circulation, and hence its support is the union undirected cycles in $E'$, and because $x \neq x'$, there must be at least one such undirected cycle $C$ in the support.
    Because $C \not\subseteq E' \setminus S$, there must be~$e \in S$ such that $x_e - x'_e \neq \emptyset$, i.e., $x_e \neq x'_e$. 
    Thus $S$ is identifying for $\smash{\Xflow}$.

    For the converse, let $S$ be an identifying set for $\smash{\Xflow}$.
    By contradiction assume there is an undirected cycle $C \subseteq E' \setminus S$.
    Because $C \subseteq E'$, every $e \in C$ appears on some $s$-$t$-path or directed cycle, and therefore for every $e \in C$ there exists a flow $\smash{x^{(e)} \in \Xflow}$ with $\smash{x^{(e)}_e > 0}$.
    Let $\smash{x := \frac{1}{|C|} \sum_{e \in C} x^{(e)}}$.
    Note that $x$ is a convex combination of points in $\smash{\Xflow}$ and hence $\smash{x \in \Xflow}$ and that moreover~$x_e > 0$ for all $e \in C$ by construction.
    Now traverse $C$ in an arbitrary direction and let $C^+ \subseteq C$ denote the set of arcs that are traversed in forward direction and let $C^- \subseteq C$ denote the set of arcs that are traversed in backwards direction.
    Let $\varepsilon := \min_{e \in C^-} x_e$ and define $x'$ by 
    \begin{align*}
        x'_e := 
        \begin{cases}
            x_e + \varepsilon & \text {for } e \in C^+,\\
            x_e - \varepsilon & \text {for } e \in C^-,\\
            x_e & \text {for } e \in E \setminus C.
        \end{cases}
    \end{align*}
    Note that $\smash{x' \in \Xflow}$, because $x'_e \geq 0$ for all $e \in E$ by choice of $\varepsilon$ and because $x'$ fulfills flow conservation as it arose from $x$ by augmenting flow along a cycle in the residual network of $x$.
    Because $x_e = x'_e$ for all $e \in E \setminus C \supseteq S$, we obtain a contradiction to $S$ being identifying for $\smash{\Xflow}$.
\end{proof}

\subsection{Controlling $s$-$t$-paths in directed acyclic graphs}
\label{sec:paths-DAG}

We now turn our attention to $\smash{\Xpath}$.
We first consider the special case where the digraph $G$ is a directed acyclic graph (DAG).
As a first positive result, we observe that it is possible to verify in polynomial time whether a given set $S \subseteq E$ is identifying for $\smash{\Xpath}$ in this case (whereas the same problem is $\mathsf{NP}$-hard in the case where $G$ is an arbitrary digraph, as we shall see in \cref{sec:paths-general-digraphs}).

\begin{theorem}\label{thm:paths-DAG-verification}
    There is a polynomial algorithm that, given a DAG~$G = (V, E)$, two vertices~$s, t \in V$, and a set $S \subseteq E$, decides whether $S$ is identifying for $\smash{\Xpath}$
\end{theorem}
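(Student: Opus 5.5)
We reduce the question to a reachability question in an auxiliary "pair graph". In a DAG every $s$-$t$-walk is a path, so $S$ fails to be identifying exactly when there are two distinct $s$-$t$-paths $P \neq P'$ with $P \cap S = P' \cap S$. Since $G$ is acyclic, both paths visit the arcs of $S$ in a topological order. So the condition $P \cap S = P' \cap S$ says precisely this: $P$ and $P'$ traverse the same sequence $e_1, \dots, e_r$ of $S$-arcs, in the same order, and between consecutive $S$-arcs (and before $e_1$, after $e_r$) both use only arcs of $E \setminus S$.

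Let $H := (V, E \setminus S)$, which is again a DAG. For nodes $u, v$ define $N(u,v)$ as the number of $u$-$v$-paths in $H$, truncated at $2$, meaning we only record whether it is $0$, $1$, or at least $2$. All these values can be computed in polynomial time by dynamic programming along a topological order of $H$.

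Now build an auxiliary DAG $D$. Its nodes are $s$, $t$, and the arcs of $S$. Add an arc in $D$ from $s$ to $e=(a,b) \in S$ if $N(s,a) \geq 1$. Add an arc from $e=(a,b)$ to $f=(c,d)$ if $N(b,c) \geq 1$. Add an arc from $e=(a,b)$ to $t$ if $N(b,t) \geq 1$, and an arc from $s$ to $t$ if $N(s,t) \geq 1$. Label each arc of $D$ "double" if the corresponding value of $N$ is $2$.

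The claim is that $S$ is not identifying if and only if $D$ contains an $s$-$t$-path that uses at least one double arc. For one direction, two paths $P \neq P'$ with the same $S$-sequence induce the same $D$-path. Since $P \neq P'$, they differ on some segment between consecutive $S$-arcs, and that segment consists of two distinct $H$-paths, so the corresponding $D$-arc is double. For the converse, choose $H$-paths for every segment, taking two different ones on a double segment. Concatenating the segments gives walks in $G$, and these are paths because $G$ is acyclic. The two resulting paths are distinct and have the same intersection with $S$.

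The existence of such a $D$-path is checked by reachability in the layered graph $D \times \{0,1\}$, where the second coordinate records whether a double arc has been used so far. This takes polynomial time.

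The main obstacle is the converse direction of the claim: we must ensure that the concatenation never re-enters an $S$-arc or repeats a node, so that the result really is a path with the prescribed intersection with $S$. The segments are chosen inside $H$, so they avoid $S$. Acyclicity of $G$ rules out repeated nodes. The truncated path counts make the double-arc test exact.
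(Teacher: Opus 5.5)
Your argument is correct, but it takes a different and somewhat heavier route than the paper.

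The paper first deletes every node that does not lie on an $s$-$t$-path. It then proves a purely local criterion: $S$ is identifying if and only if $(V, E \setminus S)$ contains no two internally node-disjoint $v$-$w$-paths for any $v, w \in V$. Equivalently, for every $v$ the arcs of $E \setminus S$ reachable from $v$ form an arborescence rooted at $v$, which is tested by one DFS per node. The key point is that in a DAG a local ambiguity in $(V, E \setminus S)$ between $v$ and $w$ lifts to two distinct $s$-$t$-paths by gluing on an arbitrary common $s$-$v$-prefix and $w$-$t$-suffix. These may contain arcs of $S$, but they do so identically for both paths, so one never needs to record which $S$-arcs are traversed.

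Your auxiliary DAG $D$ and the product with $\{0,1\}$ do this bookkeeping explicitly. A path in $D$ from $s$ to $t$ through a double arc amounts to a pair $u, v$ with at least two $u$-$v$-paths in $(V, E \setminus S)$, such that $u$ is reachable from $s$ and $v$ reaches $t$ in $G$. You restrict $u$ to $s$ or heads of $S$-arcs and $v$ to $t$ or tails of $S$-arcs, which loses nothing, since any witness can be extended backwards and forwards inside $(V, E\setminus S)$.

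Your version needs no pruning step, because reachability is built into $D$, and correctness follows from a mechanical segment-by-segment decomposition. The paper's version gives a simpler algorithm without path counting or an auxiliary graph. It also gives a clean structural description of identifying sets in DAGs: after pruning, $(V, E \setminus S)$ has at most one path between any two nodes. The same prefix/suffix gluing idea reappears in the proof of \cref{thm:paths-flow-gap}.

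One detail you should state explicitly: your dynamic program must count the trivial path, i.e., $N(u,u) = 1$. Otherwise two consecutive $S$-arcs $(a,b), (b,c)$ on a path, which are separated by an empty segment at $b$, would not yield an arc of $D$.
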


\begin{proof}
    Without loss of generality, we may assume that for every $v \in V$ there is an $s$-$v$-path and a $v$-$t$-path in~$G$.
    If this is not the case for some $v \in V$, we can remove~$v$ from the graph without changing the set of $s$-$t$-paths.
    For $v \in V$, let $E_v$ denote the set of edges in $E \setminus S$ whose tail is reachable from $v$ in the digraph $G' := (V, E \setminus S)$.
    Note that each $E_v$ can be computed easily in linear time via DFS starting from~$v$.
    We show that the set $S$ is identifying if and only if there are no nodes $v, w \in V$ such that there are two internally node-disjoint disjoint $v$-$w$-paths in $G'$, which is the case if and only if $E_v$ is an arborescence rooted at~$v$ for every $v \in V$.
    
    First, assume that there are no nodes~$v, w \in V$ such that there are two internally node-disjoint~$v$-$w$-paths in $G'$.
    Consider any two distinct~$s$-$t$-paths $P, Q$ in $G$ and note that every weakly connected component in the symmetric difference $P \Delta Q$ is the union of two internally node-disjoint $v$-$w$-paths for some $v, w \in V$. Hence $(P \Delta Q) \cap S \neq \emptyset$ for any two such paths, i.e., $S$ is identifying.
    Conversely, assume that there are $v, w \in V$ such that $G'$ contains two internally node-disjoint $v$-$w$-paths, say $P'$ and $Q'$. Then let $R_{sv}$ and $R_{wt}$, respectively, be an $s$-$v$-path and a $w$-$t$-path, respectively, which exist by our earlier assumption.
    Note that the concatenated paths $P := R_{sv} \circ P' \circ R_{wt}$ and~$Q := R_{sv} \circ Q' \circ R_{wt}$ are both $s$-$t$-paths (because $G$ is a DAG and hence the concatenation cannot contain a cycle). Thus,~$P \Delta Q \subseteq P' \cup Q' \subseteq E \setminus S$, i.e., $S$ is not identifying.
\end{proof}

Unfortunately, despite \cref{thm:paths-DAG-verification}, finding a minimum-size identifying set for $\Xpath$ is $\mathsf{NP}$-hard even for DAGs. In fact, we obtain the following hardness-of-approximation result.

\begin{theorem}\label{thm:paths-hardness-DAGs}
    Assuming the unique games conjecture and $\mathsf{P} \neq \mathsf{NP}$, the following problem does not have a $(2 - \varepsilon)$-approximation for any $\varepsilon > 0$: 
    Given a DAG $G = (V, E)$, two vertices $s, t \in V$, find a minimum-size identifying set for $\smash{\Xpath}$.
\end{theorem}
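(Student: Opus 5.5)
We reduce from \textsc{Vertex Cover}. Assuming the unique games conjecture, \citet{khot2008vertex} showed that \textsc{Vertex Cover} is hard to approximate within $2-\varepsilon$. Moreover, the hard instances can be taken to have a minimum vertex cover of size $\Omega(n)$. The tool is the characterization from the proof of \cref{thm:paths-DAG-verification}: in a DAG in which every node lies on an $s$-$t$-path, a set $S$ is identifying if and only if $G' = (V, E\setminus S)$ contains no \emph{diamond}, i.e., no pair of nodes $v,w$ joined by two internally node-disjoint $v$-$w$-paths. Given an undirected graph $H=(U,F)$, we will build a DAG $G$ with the following features. There is one distinguished \emph{vertex arc} $a_u$ for each $u \in U$. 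The diamonds of $G$ correspond to the edges $\{u,w\}\in F$, and each such diamond passes through both $a_u$ and $a_w$. All remaining \emph{auxiliary} arcs are placed so that destroying diamonds with them is never cheaper than using vertex arcs.

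\textbf{Step 1: the gadget.} Fix an ordering $u_1,\dots,u_n$ of $U$ so that $G$ is acyclic. For each edge $\{u,w\}\in F$ with $u$ before $w$, add private nodes $x_{uw}, y_{uw}$ and two internally disjoint $x_{uw}$-$y_{uw}$ routes, one through $a_u$ and one through $a_w$. To prevent unwanted diamonds arising from sharing $a_u$ across gadgets, the routes should enter $a_u$ and leave it only through arcs that are private to the gadget; the whole construction is then hung off a single $s$-$t$ spine so that every node lies on an $s$-$t$-path. We must avoid the naive design of connecting every $a_u$ directly to $s$ and $t$. That design creates $s$-$t$ diamonds between any two surviving vertex arcs, which is exactly what has to be ruled out.

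\textbf{Step 2: completeness.} For a vertex cover $C$ of $H$, we show that $S=\{a_u : u\in C\}$ is identifying. Using the criterion above, we verify that $G-S$ has no diamond. We do this by case analysis on where a hypothetical pair of internally disjoint paths could branch and merge. The point is that every cycle in the underlying undirected graph of $G$ passes through two vertex arcs of some gadget, and at least one of these is deleted.

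\textbf{Step 3: soundness.} Given any identifying set $S$, we transform it into a vertex cover of size at most $|S|$. Each auxiliary arc in $S$ is replaced by a vertex arc of the gadget it belongs to. The gadget for $\{u,w\}$ must lose at least one arc, and each auxiliary arc lies in a single gadget, so this yields a cover $C$ with $|C|\le |S|$. Consequently, a $(2-\varepsilon)$-approximation for identifying sets would give a $(2-\varepsilon)$-approximation for \textsc{Vertex Cover}. If the construction unavoidably incurs an additive overhead (for instance, spine arcs that must be cut), we absorb it using the fact that the optimum is $\Omega(n)$ on hard instances, while keeping the overhead $o(n)$.

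\textbf{Main obstacle.} The main difficulty is Step 1, together with the diamond-freeness check in Step 2. We need a DAG in which the vertex arcs are shared across all gadgets of incident edges, yet the sharing creates no diamond beyond the intended ones once a vertex cover is removed. My first attempt would be to chain the gadgets in series along the $s$-$t$ spine and route each vertex arc's in- and out-connections through distinct, gadget-specific arcs. I would then check carefully that any two disjoint paths must close up inside a single gadget.
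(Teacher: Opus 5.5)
There is a genuine gap. The DAG of Step~1 and the diamond-freeness check of Step~2 carry the entire proof, and you never supply them. Worse, Step~2 as stated is false for \emph{every} realization of your scheme.

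Let $H$ be the path with edges $\{u,w\}$ and $\{w,z\}$, and take the cover $C=\{w\}$, so $S=\{a_w\}$ with $a_w=(p_w,q_w)$. Let $P_1$ consist of an $s$-$x_{uw}$-path, then the $a_u$-route of the gadget for $\{u,w\}$, then a $y_{uw}$-$t$-path; in a DAG this is an $s$-$t$-path. It avoids $a_w$ for three reasons:
\begin{itemize}
\item $a_w$ is not on the $a_u$-route, by internal disjointness;
\item it is not before $x_{uw}$, since then $q_w$ would reach $x_{uw}$, which reaches $p_w$ along the $a_w$-route of the same gadget, closing a directed cycle;
\item it is not after $y_{uw}$, by the symmetric argument.
\end{itemize}
Likewise, the path $P_2$ through the $a_z$-route of the gadget for $\{w,z\}$ avoids $a_w$. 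Moreover $P_1\neq P_2$: the nodes $x_{uw},x_{wz}$ are private, so on a common path one gadget segment would have to end before the other starts, and either order yields a directed cycle through $a_w$. So two distinct $s$-$t$-paths have the same empty trace on $S$, and $S$ is not identifying. The same reasoning shows that your first attempt, placing two gadgets that share a vertex arc in series along a spine, is not even acyclic.

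Hence the additive overhead you hedge about in Step~3 is unavoidable. Every identifying set must contain auxiliary arcs that separate the entrances of gadgets sharing a vertex arc $a_v$. This is needed even when $a_v\in S$: two distinct $s$-$p_v$-paths in $G-S$, extended by a common $p_v$-$t$-path, already have equal traces. For the star $K_{1,d}$ this forces $\Omega(d)$ auxiliary arcs, against a cover of size $1$. Indeed, distinct entrances of $p_c$ need distinct last $S$-arcs on their $s$-paths, or a cut on their private entrance routes. So whether the overhead can be made negligible is exactly the unproven crux, not a technicality to be absorbed.

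The paper's proof faces the same issue head-on. Its three-layer DAG $s\to u_e\to v_i\to t$ pays the overhead $E_s=\{(s,u_e)\}$, one arc per edge, explicitly. It encodes covers by arcs $(v_i,t)$ after normalizing away middle-layer arcs. It then replicates each vertex node $\ell$ times, with $\ell$ of order $|\bar E|/\varepsilon$, so that averaging over $i$ makes $|E_s|$ negligible.

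Be aware, however, that this amplification fails as written. For $e=\{v,w\}$ and $\ell\ge 2$, the paths $s\,u_e\,v_i\,t$, $i\in[\ell]$, differ only on arcs at the copies $v_i$. So every identifying set needs such an arc for all but one copy of each non-isolated vertex. The optimum is therefore at least $(\ell-1)n'$, where $n'$ is the number of non-isolated vertices, rather than at most $|E_s|+\ell k$. Moreover, $E_s$ together with all arcs $(v_i,t)$ of non-isolated vertices is already nearly optimal while encoding only the trivial cover. The obstacle you identified is real, and neither your sketch nor the paper's argument, as it stands, overcomes it.
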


\begin{proof}
    We establish the hardness result via a reduction from \textsc{Vertex Cover}: Given an undirected graph $\bar{G} = (\bar{V}, \bar{E})$, find a minimum-size set of vertices $U \subseteq \bar{V}$ such that every edge in $\bar{E}$ is incident to at least one vertex in $U$.
    Under the unique games conjecture, \textsc{Vertex Cover} does not admit a~$(2 - \varepsilon)$-approximation for any~$\varepsilon > 0$ unless $\mathsf{P} = \mathsf{NP}$~\citep{khot2008vertex}.

    We construct a digraph $G = (V, E)$ as follows.
    We set $\smash{\ell := \big\lceil \frac{4|E|}{\varepsilon} \big\rceil}$.
    The node set of $G$ is given by
    \begin{align*}
        V & := \bigl\{s, t\bigr\} \cup \bigl\{u_e : e \in \bar{E}\bigr\} \cup \bigl\{v_i : v \in \bar{V}, i \in [\ell] \bigr\}.
    \end{align*}
    Further define $E_s := \{(s, u_e) : e \in \bar{E}\}$, $E' := \{(u_e, v_i) : e \in \bar{E}, v \in e, i \in [\ell]\}$, and~$E_i := \{(v_i, t) : v \in \bar{V}\}$ for $i \in [\ell]$ and construct the edge set of $G$ as
    \begin{align*}
        E & := E_s \cup E'  \cup E_1 \cup \dots \cup E_{\ell}.
    \end{align*}
    See \cref{fig:paths-dag-hardness} for a depiction of the construction.

    We will use the following claim which we proof in \cref{app:hardness-DAGs}.
    
    \begin{figure}[tb]
    \begin{center}
    \begin{tikzpicture}[xscale=1.8,yscale=1]
    	
        \path (-3, 0) node[node,label=below:{$v_1$}] (v1) {}
            ++(1, 0) node[node,label=below:{$v_2$}] (v2) {}
            ++(1, 0) node[node,label=below:{$v_3$}] (v3) {};
            
        \draw (v1) edge node[above] {$e$} (v2);
        \draw (v2) edge node[above] {$f$} (v3);
    	
        \path node[node,label=below:{$s$}] (s) {}
            +(1, 1) node[node,label=below:{$u_e$}] (ue) {}
            +(1, -1) node[node,label=below:{$u_f$}] (uf) {}
            ++(2, 2.4) node[node,label=above:{$v_{1,1}$}] (v11) {}
            +(0, -0.3) node (v1m) {$\vdots$}
            ++(0, -0.8) node[node,label=below:{$v_{1,\ell}$}] (v12) {}
            ++(0, -1.2) node[node,label=above:{$v_{2,1}$}] (v21) {}
            +(0, -0.3) node (v2m) {$\vdots$}
            ++(0, -0.8) node[node,label=below:{$v_{2,\ell}$}] (v22) {}
            ++(0, -1.2) node[node,label=above:{$v_{3,1}$}] (v31) {}
            +(0, -0.3) node (v3m) {$\vdots$}
            ++(0, -0.8) node[node,label=below:{$v_{3,\ell}$}] (v32) {}
            ++(1, 2.4) node[node,label=below:{$t$}] (t) {};
        
        \draw[-latex] (s) edge (ue);
        \draw[-latex] (s) edge (uf);
        
        \draw[-latex] (ue) edge (v11);
        \draw[-latex] (ue) edge (v12);
        \draw[-latex] (ue) edge (v21);
        \draw[-latex] (ue) edge (v22);
        
        \draw[-latex] (uf) edge (v21);
        \draw[-latex] (uf) edge (v22);
        \draw[-latex] (uf) edge (v31);
        \draw[-latex] (uf) edge (v32);
        
        \draw[-latex] (v11) edge (t);
        \draw[-latex] (v12) edge (t);
        \draw[-latex] (v21) edge (t);
        \draw[-latex] (v22) edge (t);
        \draw[-latex] (v31) edge (t);
        \draw[-latex] (v32) edge (t);
        
    \end{tikzpicture}
    \end{center}

    \vspace*{-0.3cm}
    
    \caption{%
    \label{fig:paths-dag-hardness}%
    Example of the construction from the proof of \cref{thm:paths-hardness-DAGs}. 
    The undirected graph on the left depicts an instance of \textsc{Vertex Cover}.
    The digraph resulting from the construction is depicted on the right.
    The arc set~$\{(s, u_e), (v_{2,1}, t), \dots, (v_{2,\ell},t)\}$ is a minimum-size identifying set for the $s$-$t$-paths in the digraph, corresponding to the vertex cover $\{v_2\}$ for the undirected graph.
    }
    \end{figure}
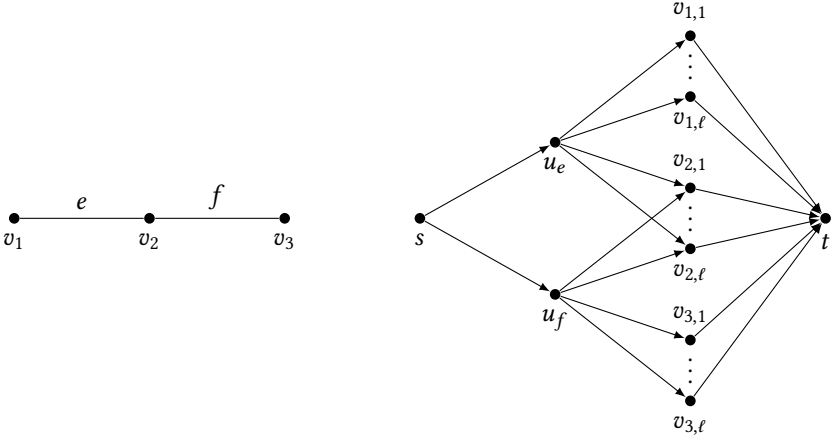

    \begin{restatable}{claim}{identifyingSetVertexCover}\label{clm:identifying-set-vertex-cover}
        There is a polynomial-time algorithm that, given an identifying set $S$ for $\smash{\Xpath}$, constructs a set $S'$ such that $|S'| \leq |S|$, 
        $S' \cap E' = \emptyset$, and $U_i := \{v \in \bar{V} : (v_i, t) \in S' \cap E_i\}$ is a vertex cover for each~$i \in [\ell]$.
    \end{restatable} 
    
    In particular, \cref{clm:identifying-set-vertex-cover} implies that the size of a minimum identifying set for $\smash{\Xpath}$ is at most $|E_s| + \ell k$, where $k$ is the size of a minimum vertex cover in $\bar{G}$. Thus, any $(2 - \varepsilon)$-approximation would yields an identifying set $S$ of size at most $(2 - \varepsilon)(|E_s| + \ell k)$.
    Applying \cref{clm:identifying-set-vertex-cover} to $S$ yields~$S'$ of size $|S'| \leq |S|$ such that $U_i := \{v \in \bar{V} : (v_i, t) \in S' \cap E_i\}$ is a vertex cover in $G$ for each~$i \in [\ell]$.
    In particular, $\sum_{i=1}^{\ell} |U_i| \leq (2 - \varepsilon)(|E| + \ell k)$.
    By the pigeonhole principle there must be an $i$ such that 
    $$|U_i| \leq \frac{(2 - \varepsilon)|E_s|}{\ell} + (2 - \varepsilon) k < \biggl(2 - \frac{\varepsilon}{2}\biggr) k,$$
    where the final inequality follows from $|E_s| = |E|$ and $\smash{\ell \geq \frac{4|E|}{\varepsilon}}$.
    Hence a $(2 - \varepsilon)$-approximation for finding a minimum-size identifying set for $\smash{\Xpath}$ yields a $(2 - \tfrac{\varepsilon}{2})$-approximation for \textsc{Vertex Cover}.
\end{proof}

On the positive side, we can derive the following tight bound on the gap between the minimum size of a identifying set for $\smash{\Xpath}$ and for its convex hull, the set of $s$-$t$-flows of value $1$. 

\begin{theorem}\label{thm:paths-flow-gap}
    Let $G = (V, E)$ be a DAG, let $s, t \in V$. Let~$S$ be a minimum-size identifying set for~$\smash{\Xpath}$ and let~$S'$ be a minimum-size identifying set for~$\smash{\Xflow}$.
    Then $\smash{|S'| \leq \frac{(|S| + 1)|S|}{2}}$. Moreover, for every $k \in \mathbb{N}$, there exists an instance where $|S| = k$ and the aforementioned bound holds with equality.
\end{theorem}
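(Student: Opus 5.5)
The approach is to reduce both quantities to counting. Prune $G$ to the nodes lying on some $s$-$t$-path; this changes neither $\Xpath$ nor $\Xflow$. Since $G$ is a DAG, the set $E'$ of \cref{thm:flows} is then all of $E$, and the graph is weakly connected. By \cref{thm:flows}, $|S'|$ equals the cycle rank $|E|-|V|+1$. Every $v\neq s$ has in-degree at least $1$ and $s$ has in-degree $0$, so
\[
|S'| = \sum_{v\neq s}\bigl(\deg^-(v)-1\bigr).
\]
So it suffices to show that a DAG admitting an identifying set $S$ with $|S|=k$ has total excess in-degree at most $k(k+1)/2$.

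For the upper bound I would use the characterization from the proof of \cref{thm:paths-DAG-verification}: in $G'=(V,E\setminus S)$ no two nodes are joined by two internally node-disjoint paths. Call a node a \emph{root} if it has no in-arc in $G'$. Every root is $s$ or the head of an arc of $S$, so there are at most $k+1$ roots. For each node $v$ let $R(v)$ be the set of roots with a $G'$-path to $v$.

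The first step is to show that if $u_1,u_2$ are distinct $G'$-in-neighbours of $v$, then $R(u_1)\cap R(u_2)=\emptyset$. Given a common root, take the last node common to a path to $u_1$ and a path to $u_2$; this produces two internally disjoint paths ending at $v$, a contradiction. Hence at $v$ the $G'$-in-arcs merge pairwise disjoint root sets. The $G'$-excess $\deg^-_{G'}(v)-1$ can then be charged to distinct pairs of roots: for instance, charge $v$ to the pairs consisting of a fixed root in $R(u_1)$ and a fixed root in $R(u_i)$ for each $i\ge2$.

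The next step is to show that no pair of roots is charged twice. The plan here is to use that $G$ is a DAG and that every node lies on an $s$-$t$-path; two merge points for the same pair should again yield two disjoint paths in $G'$. The remaining excess comes from the $k$ arcs of $S$ themselves, each contributing at most one unit of in-degree.

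The main obstacle is the exact accounting in this last part. Naively there are $\binom{k+1}{2}$ root pairs plus $k$ arcs of $S$, which overshoots $k(k+1)/2$ by $k$. I would first try to show that each arc of $S$ entering $v$ either makes $v$ a root, in which case it is not counted as excess, or can be absorbed into the pair charging. The alternative would be to pair the root $s$ against each head, saving exactly $k$. If the pair-uniqueness claim fails, I would switch to an induction on $k$: delete the topologically last arc of $S$ and show that the cycle rank drops by at most $k$.

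For tightness I would look for a DAG family with cycle rank exactly $k(k+1)/2$ and an identifying set of size $k$. The first candidate is the complete DAG on $k+2$ nodes $s=v_0,\dots,v_{k+1}=t$, whose cycle rank is $\binom{k+1}{2}$. Its $s$-$t$-paths correspond to subsets of $\{v_1,\dots,v_k\}$. I would check whether a suitable choice of $k$ arcs, for example one arc entering each intermediate node $v_i$, is identifying via \cref{thm:paths-DAG-verification}. If not, I would build the family recursively, adding one arc of $S$ and $k+1$ new independent cycles at each step.
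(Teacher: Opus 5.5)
Your reduction $|S'|=\sum_{v\neq s}(\deg^-(v)-1)$ is correct, and so is the observation that distinct $G'$-in-neighbours of a node have disjoint root sets. The step you defer, that no pair of roots is charged twice, is false, and so is the bound ``$G'$-excess $\le$ number of root pairs'' it is meant to give. Take nodes $s,r_1,r_2,a_1,\dots,a_m,t$ with arcs $(s,r_i)$, $(r_i,a_j)$, $(a_j,t)$ for $i\in\{1,2\}$, $j\in[m]$. Let $S=\{(s,r_1),(s,r_2)\}\cup\{(a_j,t):2\le j\le m\}$. Each path $s\,r_i\,a_j\,t$ is determined by its trace on $S$, so $S$ is identifying. $G'$ has only the roots $s,r_1,r_2$. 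Yet each $a_j$ has $G'$-in-neighbours $r_1,r_2$ with $R(r_1)=\{r_1\}$ and $R(r_2)=\{r_2\}$. So every $a_j$ must be charged to $\{r_1,r_2\}$, and the $G'$-excess is $m>3$. Two merge points of the same roots do not yield two disjoint paths in $G'$, because they can leave $G'$ through different arcs of $S$. An accounting that only looks at entry points cannot see this. The ``overshoot by $k$'' you flag is not the real obstacle: with $\rho$ roots besides $s$ one has $|S'|=k-\rho+(G'\text{-excess})$ and $k-\rho+\binom{\rho+1}{2}\le\binom{k+1}{2}$. Your induction fallback is only announced, not argued.

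The missing idea is to charge to pairs (entry point, exit point), which is what the paper does. Fix a spanning $s$-arborescence $T$ and a topological order $\psi$, and put $S'=E\setminus T$. Map each $e=(v,w)\in S'\setminus S$ to $(v_s(e),v_t(e))$. Here $v_s(e)$ is the last node on $T[s,v]$ that is $s$ or the head of an arc of $S\cap T$. And $v_t(e)$ is the $\psi$-first node of $\{t\}\cup\{\text{tails of arcs of }S\}$ reachable from $w$ in $(V,E\setminus S)$. Two arcs with the same pair give two distinct $s$-$t$-paths with equal traces on $S$, so the map is injective. Since $\psi(v_s(e))<\psi(v_t(e))$ and $v_t(e)$ is not reachable from $v_s(e)$ inside $T\setminus S$, at most $|S\setminus T|\,|S\cap T|+\binom{|S\cap T|+1}{2}$ pairs occur. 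Adding $|S'\cap S|=|S\setminus T|$ gives at most $\binom{|S|+1}{2}$. In the example above, it is exactly the distinct exits $a_2,\dots,a_m,t$ that separate the $m$ merge arcs.

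Your tightness candidate also fails. One arc entering each intermediate node is not identifying already for $k=2$. For $S=\{(v_0,v_1),(v_1,v_2)\}$ the paths $v_0v_3$ and $v_0v_2v_3$ share the empty trace. For $S=\{(v_0,v_1),(v_0,v_2)\}$ the paths $v_0v_1v_3$ and $v_0v_1v_2v_3$ share a trace. For $k\ge3$ the complete DAG on $k+2$ nodes has no identifying set of size $k$ at all. The arcs of $G'$ with tail reachable from $v_0$ must form an arborescence. Counting then forces $v_0$ to reach only $v_k,v_{k+1}$, with all arcs leaving $v_1,v_2$ kept in $G'$. This gives two internally disjoint $v_1$-$v_3$-paths. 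The fix is to split every intermediate node $v_i$ into an arc $(v_i^{\mathrm{in}},v_i^{\mathrm{out}})$ and let $S$ be these $k$ arcs. The cycle rank stays $\binom{k+1}{2}$, and the trace on $S$ is the set of visited intermediate nodes, which determines the path.
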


Before we prove \cref{thm:paths-flow-gap}, we observe that it yields the following approximation result.

\begin{corollary}
    There is a $\smash{\sqrt{|E|}}$-approximation for computing a minimum-size identifying set for~$\smash{\Xpath}$ when $G$ is DAG.
\end{corollary}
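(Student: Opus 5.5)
The corollary follows from \cref{thm:paths-flow-gap} together with \cref{cor:flows}: we compute a minimum-size identifying set $S'$ for $\Xflow$ and output it. First, $S'$ is identifying for $\Xpath$. Distinct $s$-$t$-paths $P \neq Q$ give distinct points $\chi_P \neq \chi_Q$ of $\Xflow$, since every $\chi_P$ is a unit $s$-$t$-flow. Because $S'$ is identifying for $\Xflow$, these two vectors differ on some $e \in S'$. Second, $S'$ can be computed in polynomial time by \cref{cor:flows} with unit weights, using the characterization from \cref{thm:flows}: $S'$ is $E'$ minus a spanning forest of $(V,E')$, together with all of $E \setminus E'$. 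It remains to bound $|S'|$ in terms of $\mathrm{OPT} := |S|$, the size of a minimum identifying set for $\Xpath$.

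The key estimate comes from \cref{thm:paths-flow-gap}: $|S'| \le \frac{(|S|+1)|S|}{2}$. Since $|S'| \le |E|$ trivially, we get
\[
|S'| \le \min\Bigl\{|E|, \tfrac{|S|(|S|+1)}{2}\Bigr\}.
\]
We then split into two cases according to whether $|S| \ge \sqrt{|E|}$. If $|S| \ge \sqrt{|E|}$, then $|S'| \le |E| \le \sqrt{|E|}\,|S|$. If $|S| < \sqrt{|E|}$, then $\frac{|S|+1}{2} \le \sqrt{|E|}$ whenever $|S| \ge 1$, using $|S|+1 \le 2|S| \le 2\sqrt{|E|}$, and hence $|S'| \le \sqrt{|E|}\,|S|$.

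The main obstacle is the degenerate case $|S| = 0$. There the ratio bound must hold exactly, so we need $|S'| = 0$ as well. When $|S|=0$ there is at most one $s$-$t$-path, and the quadratic bound gives $|S'| \le 0$ directly; alternatively, with at most one path $\Xflow$ is a single point, so $\emptyset$ is identifying. Thus the algorithm outputs an identifying set of size at most $\sqrt{|E|}\cdot\mathrm{OPT}$ in all cases. No further difficulty arises, since all the real work is contained in \cref{thm:paths-flow-gap}.
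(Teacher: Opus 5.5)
Your proposal is correct and follows the paper's proof: output a minimum-size identifying set $S'$ for $\Xflow$ computed via \cref{cor:flows}, note that it identifies $\Xpath \subseteq \Xflow$, and split on whether $|S| \ge \sqrt{|E|}$, using $|S'| \le |E|$ in one case and \cref{thm:paths-flow-gap} in the other. One small slip: a minimum-size identifying set for $\Xflow$ is $E'$ minus a spanning forest of $(V,E')$ and does \emph{not} include $E \setminus E'$, since those arcs carry zero flow in every flow. Minimality is exactly what lets you apply \cref{thm:paths-flow-gap}, and your own $|S|=0$ discussion implicitly assumes it.
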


\begin{proof}
    By \cref{cor:flows}, we can compute a minimum-size identifying set for $\smash{\Xflow}$ in polynomial time.
    Note that $S'$ is identifying for $\smash{\Xpath}$ because $\smash{\Xpath \subseteq \Xflow}$ (in fact, $\smash{\Xflow = \operatorname{conv}(\Xpath)}$.
    Let~$S$ be a minimum-size identifying set for $\smash{\Xpath}$.
    We distinguish two cases.
    First, if $|S| \geq \sqrt{|E|}$, then $|S'| \leq |E| \leq \sqrt{|E|}|S|$.
    Second, if $|S| < \sqrt{|E|}$, then $|S'| \leq \frac{(|S| + 1)|S|}{2} \leq \sqrt{|E|}|S|$, where the first inequality follows from \cref{thm:paths-flow-gap}.
    Thus in both cases $|S'| \leq \sqrt{|E|} |S|$ and we obtain the desired approximation result.
\end{proof}

We now prove \cref{thm:paths-flow-gap}.

\begin{proof}[Proof of \cref{thm:paths-flow-gap}]
    Without loss of generality, we assume that for each $e \in E$, there is an~$s$-$t$-path in $E$ that contains $e$ (we can efficiently identify arcs for which this is not the case and remove them).
    We fix a topological ordering $\psi : V \rightarrow \{1, \dots, |V|\}$ of the nodes of $G$, i.e., $(v, w) \in E$ implies that $\psi(v) \leq \psi(w)$.
    Let $T \subseteq E$ be an arborescence in $G$ rooted at $s$.
    Let $S' := E \setminus T$.
    Note that the arborescence $T = E \setminus S'$ contains no undirected cycles, and therefore \cref{thm:flows} implies that~$S'$ is identifying for $X'$.
    Let
    \begin{align*}
    V_s &:= \{s\} \cup \{v \in V : \text{$v$ is the head of an arc in $S \cap T$}\}, \text{ and }\\
    V_t &:= \{t\} \cup \{v \in V : \text{$v$ is the tail of an arc in $S$}\}.
    \end{align*}
    For each arc $e = (v, w) \in S'$, let $v_s(e)$ be the last element of $V_s$ on the unique $s$-$v$-path $T[s, v]$ in $T$. Let further $v_t(e)$ be minimal with respect to $\psi$ in $\{u \in V_t : \text{there is a $w$-$u$-path in $(V, E \setminus S)$}\}$.
    We will use the following claims which we prove in \Cref{app:claim-proofs}. 

    \begin{restatable}{claim}{distinctPairs}
    \label{clm:distinct-pairs}
        Let $e, e' \in S' \setminus S$. If $v_s(e) = v_s(e')$ and $v_t(e) = v_t(e')$ then $e = e'$.
    \end{restatable}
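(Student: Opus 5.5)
The plan is to argue by contradiction: assuming $e \neq e'$, I will build two distinct $s$-$t$-paths whose symmetric difference avoids $S$. Let $e = (v,w)$ and $e' = (v',w')$ be distinct arcs in $S' \setminus S$ with $a := v_s(e) = v_s(e')$ and $b := v_t(e) = v_t(e')$. Write $G' := (V, E \setminus S)$. The goal is two \emph{distinct} $a$-$b$-paths in $G'$. Given those, I prepend a fixed $s$-$a$-path and append a fixed $b$-$t$-path, both of which exist by the standing assumption that every arc lies on an $s$-$t$-path. Because $G$ is a DAG, both concatenations are $s$-$t$-paths, and their symmetric difference lies in $E \setminus S$. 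This contradicts $S$ being identifying, exactly as in the converse direction of the proof of \cref{thm:paths-DAG-verification}.

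The two $a$-$b$-paths are $P_1 := T[a,v] \circ e \circ R$ and $P_2 := T[a,v'] \circ e' \circ R'$. Here $R$ is a $w$-$b$-path in $G'$ and $R'$ is a $w'$-$b$-path in $G'$; both exist by the definition of $v_t$.
\begin{itemize}
\item $T[a,v] \subseteq E \setminus S$: by definition, $a$ is the last node of $V_s$ on $T[s,v]$. If some arc of $S \cap T$ lay on $T[a,v]$, its head would be a later element of $V_s$ on $T[s,v]$, which is impossible.
\item $e \notin S$ holds by assumption, and $R$ lies in $G'$ by construction.
\end{itemize}
So $P_1$ is an $a$-$b$-walk in $G'$, and since $G$ is acyclic it is a path. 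The same reasoning applies to $P_2$.

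The main point is to show $P_1 \neq P_2$. Suppose they are equal as arc sequences. In $P_1$, every arc before $e$ is a tree arc, and $e' \notin T$ because $e' \in S' = E \setminus T$. Since $e' \in P_2 = P_1$ and $e' \neq e$, the arc $e'$ must occur strictly after $e$. Symmetrically, reading the same sequence as $P_2$, the arc $e$ must occur strictly after $e'$. These two orderings contradict each other, so $P_1 \neq P_2$.

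With $P_1 \neq P_2$ established, the concatenation step above gives the contradiction, so $e = e'$. I expect only two points to need care: checking that the concatenations are genuine simple paths, which follows from acyclicity, and verifying that $T[a,v]$ avoids $S$. The ordering argument itself is short.
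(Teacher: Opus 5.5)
Your proof is correct and is essentially the paper's argument. Both build the two paths through $T[s,v]\circ e$ and $T[s,v']\circ e'$, continued by paths in $(V,E\setminus S)$ to $v_t(e)$ and then a common suffix, so they share everything outside the $v_s(e)$--$v_t(e)$ segment and agree on $S$; the contradiction then comes from every arc before $e$ (resp.\ $e'$) being a tree arc while $e,e'\notin T$. The differences are cosmetic: you show distinctness before invoking identifiability (the paper first derives $P=P'$), and you explicitly justify that $T[v_s(e),v]$ avoids $S$, which the paper leaves implicit.
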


    \begin{restatable}{claim}{numberOfPairs}
    \label{clm:number-of-pairs}
        Let $e \in S' \setminus S$. Then the following statements hold true:
        \begin{enumerate}
            \item $\psi(v_s(e)) \leq \psi(v_t(e))$\label{prop:topological}
            \item There is no $v_s(e)$-$v_t(e)$-path in $T \setminus S$.\label{prop:no-path}
        \end{enumerate}
    \end{restatable}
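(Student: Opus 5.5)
The plan is to derive both statements from a single construction: a path $R$ from $v_s(e)$ to $v_t(e)$ in the digraph $G' := (V, E \setminus S)$ that passes through $e$. Statement~\ref{prop:topological} then follows from the topological order. Statement~\ref{prop:no-path} follows by comparing $R$ with a hypothetical second path inside $T \setminus S$ and using that $S$ is identifying.

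Fix $e = (v,w) \in S' \setminus S$. I would carry out the following steps.

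\begin{enumerate}
\item \emph{$v_t(e)$ is well defined.} By the standing assumption, $w$ lies on an $s$-$t$-path, so some $w$-$t$-path exists in $G$. Follow it from $w$. Either it reaches $t$ using only arcs of $E \setminus S$, or the first arc of $S$ on it has its tail in $V_t$ and that tail is reachable from $w$ in $G'$. Since $t \in V_t$, the set over which $v_t(e)$ is minimized is nonempty.

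\item \emph{The subpath $T[v_s(e), v]$ avoids $S$.} Suppose it contained an arc of $S$. That arc lies in $S \cap T$, so its head is in $V_s$, and the head lies on $T[s,v]$ strictly after $v_s(e)$. This contradicts the choice of $v_s(e)$ as the last element of $V_s$ on $T[s,v]$.

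\item \emph{Building $R$.} Let $P_w$ be a $w$-$v_t(e)$-path in $G'$, which exists by definition of $v_t(e)$. Since $e \notin S$, the concatenation $R := T[v_s(e), v] \circ e \circ P_w$ is a $v_s(e)$-$v_t(e)$-walk in $G'$. It is a path because $G$ is acyclic.

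\item \emph{Proof of \ref{prop:topological}.} All arcs of $R$ respect the topological ordering $\psi$, so $\psi(v_s(e)) \le \psi(v) < \psi(w) \le \psi(v_t(e))$.
\end{enumerate}

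For \ref{prop:no-path}, suppose for contradiction that $Q$ is a $v_s(e)$-$v_t(e)$-path in $T \setminus S$. Then $Q$ and $R$ are both $v_s(e)$-$v_t(e)$-paths in $G'$. They are distinct, because $e \in S' = E \setminus T$ lies on $R$ but not on $Q \subseteq T$.

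Next extend both to $s$-$t$-paths. The node $v_s(e)$ is either $s$ or the head of an arc, and $v_t(e)$ is either $t$ or the tail of an arc. Every arc lies on an $s$-$t$-path, so there is an $s$-$v_s(e)$-path $A$ and a $v_t(e)$-$t$-path $B$ in $G$. As in the proof of \cref{thm:paths-DAG-verification}, acyclicity guarantees that $A \circ Q \circ B$ and $A \circ R \circ B$ are genuine $s$-$t$-paths. They are distinct, and their symmetric difference is contained in $Q \cup R \subseteq E \setminus S$. This contradicts $S$ being identifying for $\Xpath$.

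The only step needing care is the observation in step~2 that the tree segment from $v_s(e)$ to $v$ contains no arc of $S$. This follows directly from how $V_s$ and $v_s(e)$ are defined, so I do not expect a serious obstacle beyond keeping the bookkeeping straight.
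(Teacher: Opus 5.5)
Your proof is correct and follows essentially the same route as the paper. Part (1) comes from the $v_s(e)$-$v_t(e)$-path $T[v_s(e),v]\circ e\circ P_w$, and part (2) from exhibiting two distinct $s$-$t$-paths that agree on $S$: one through $e$, and one through the hypothetical path in $T\setminus S$. The only differences are that the paper uses the tree prefix $T[s,v_s(e)]$ where you use an arbitrary $s$-$v_s(e)$-path $A$, and that it leaves implicit what you verify explicitly (that $T[v_s(e),v]$ avoids $S$ and that $v_t(e)$ is well defined); neither changes the argument.
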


    For $v \in V_t$, let $\Gamma(v) := \{u \in V_S : \exists e \in S' \setminus S, v_s(e) = u, v_t(e) = v\}$.
    Note that $|S' \setminus S| = \sum_{v \in V_t} |\Gamma(v)|$ by \cref{clm:distinct-pairs}.  
    To bound $\sum_{v \in V_t} |\Gamma(v)|$, we distinguish two cases.
    
    First, consider any $v_t \in V_t$ that is the head of an arc in $S \setminus T$.
    Let $u'$ be the last node of $V_s$ on the unique $s$-$v$-path in $T$.
    By \cref{clm:number-of-pairs} statement (2), $u \notin \Gamma(v_t)$ and therefore $|\Gamma(v)| \leq |V_s| - 1 = |S \cap T|$ in this case.
    Second, we order the nodes $v \in V_t$ that are the head of some arc $e = (u, v) \in S \cap T$ according to the topological ordering of the tails $u$ of the corresponding arcs. 
    Let $v_1, \dots, v_k$ with~$k = |S \cap T|$ denote this ordering of these nodes and let $u_1, \dots, u_k$ denote the ordering of the corresponding heads. Note that in fact $V_s := \{u_1, \dots, u_k\} \cup \{s\}$.
    Let us further denote $v_{k+1} := t$ and $u_0 := s$.
    We show that $|\Gamma(v_i)| \leq i - 1$ for each $i \in [k+1]$.
    To see this, note that $u_j \notin \Gamma(v_i)$ for $j \geq i$ by \cref{clm:number-of-pairs} statement (1).
    Moreover, there is $j' < i$ such that $u_{j'}$ is the last node in $V_s$ on the unique~$s$-$v_u$-path in $T$, and hence $u_{j'} \notin \Gamma(v_i)$ by \cref{clm:number-of-pairs} statement (2).
    Thus $|\Gamma(v_i)| \leq i - 1$ for each $i \in [k+1]$.
    We conclude that
    \begin{align*}
        |S' \setminus S| & = \sum_{v \in V_t} |\Gamma(v)| \leq
        |S \setminus T| \cdot |S \cap T| + \sum_{i = 1}^{|S \cap T| + 1} (i - 1)\\
        & \leq |S \setminus T| \cdot |S \cap T| + \frac{(|S \cap T| + 1) \cdot |S \cap T|}{2}.
    \end{align*}
    Therefore,
    \begin{align*}
        |S'|  &= |S' \setminus S| + |S \setminus S'| = |S' \setminus S| + |S \setminus T|  = |S \setminus T| \cdot (|S \cap T| + 1) + \frac{(|S \cap T| + 1) \cdot |S \cap T|}{2}\\
         &= \frac{(|S \cap T| + 1) (|S| + |S \setminus T|)}{2}
         = \frac{(|S| - |S \setminus T| + 1) \cdot (|S| + |S \setminus T|)}{2} \\
         &= \frac{(|S| + 1) \cdot |S| + (1 - |S \setminus T|) \cdot |S \setminus T|}{2}
        \leq \frac{(|S| + 1) \cdot |S|}{2},
    \end{align*}
    where the final inequality follows from $(1 - |S \setminus T|) \cdot |S \setminus T| \leq 0$ because~$|S \setminus T|$ is a nonnegative integer.

    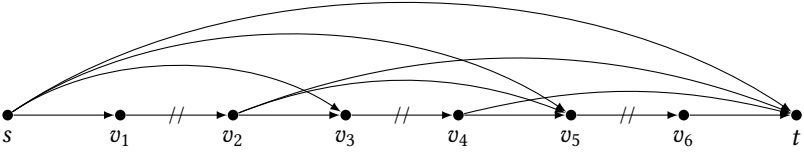
\begin{figure}[tb]
    \begin{center}
    \begin{tikzpicture}[xscale=1.5,yscale=1]
        \path node[node,label=below:{$s$}] (s) {}
            ++(1, 0) node[node,label=below:{$v_1$}] (v1) {}
            ++(1, 0) node[node,label=below:{$v_2$}] (v2) {}
            ++(1, 0) node[node,label=below:{$v_3$}] (v3) {}
            ++(1, 0) node[node,label=below:{$v_4$}] (v4) {}
            ++(1, 0) node[node,label=below:{$v_5$}] (v5) {}
            ++(1, 0) node[node,label=below:{$v_6$}] (v6) {}
            ++(1, 0) node[node,label=below:{$t$}] (t) {};
        
        \draw[-latex] (s) -- (v1);
        \draw[-latex] (s) edge[bend left=45] (v3);
        \draw[-latex] (s) edge[bend left=45] (v5);
        \draw[-latex] (s) edge[bend left=45] (t);
        
        \draw[-latex] (v1) -- node[midway, scale=0.7] {$//$} (v2);
        \draw[-latex] (v2) -- (v3);
        \draw[-latex] (v2) edge[bend left=30] (v5);
        \draw[-latex] (v2) edge[bend left=30] (t);

        \draw[-latex] (v3) -- node[midway, scale=0.7] {$//$} (v4);
        \draw[-latex] (v4) -- (v5);
        \draw[-latex] (v4) edge[bend left=20] (t);

        \draw[-latex] (v5) -- node[midway, scale=0.7] {$//$} (v6);
        \draw[-latex] (v6) -- (t);
        
    \end{tikzpicture}
    \end{center}

    \vspace*{-0.3cm}
    
    \caption{%
    \label{fig:paths-flows-tight}%
    Example showing that the bound given in \cref{thm:paths-flow-gap} is tight with $k = 3$. The marked edges $(v_1, v_2), (v_3, v_4), (v_5, v_6)$ form an identifying set for the $s$-$t$-paths $X^G_{s, t}$. Any minimum-size identifying set for the set of $s$-$t$-flows of value $1$, however, is the complement of a spanning tree in the underlying undirected graph and thus has cardinality $|E| - (|V| - 1) = k(k+1)/2 = 6$.
    }
    \end{figure}

    To show that the bound is tight, we construct the following family of example instances, one for each $k \in \mathbb{N}$.
    The set of nodes is $V = \{v_0, \dots, 2k + 1\}$ with $s = v_0$ and $t = v_{k+1}$.
    For~$i \in \{0, \dots, k\}$ we add the arcs $(v_{2i}, v_{2j + 1})$ for each $j \in \{i, \dots, k\}$. Moreover, we add the arc $e_i = (v_{2i-1}, v_{2i})$ for~$i \in \{1, \dots, k\}$.
    See \cref{fig:paths-flows-tight} for an illustration of the construction.

    Note that $S = \{e_1, \dots, e_k\}$ is an identifying set for $\smash{\Xpath}$. This is because for any $s$-$t$-path $P$ and~$i \leq j$, we have $(v_{2i}, v_{2j + 1}) \in P$ if and only if $e_i \in P$ (or $i = 0$, in which case $v_{2i} = 2$), $e_{j+1} \in P$ (or $j = k$, in which case $v_{2j + 1} = t$) and $e_{\ell} \notin P$ for all $\ell \in \{i+1, \dots, j\}$.
    Hence any $s$-$t$-path $P$ in $G$ is uniquely determined by $P \cap S$.

    Let $S'$ be a minimum-size identifying set for the set of $s$-$t$-flows.
    Note that every  arc in the constructed graph lies on an $s$-$t$-path and thus $S'$ must be the complement of a spanning tree in the underlying undirected graph of $G$.
    Thus, $|S'| = |E| - (|V| - 1)$.
    Since $|V| = 2k+2$ and
    $$|E| = \sum_{i = 0}^{k} \bigl(k - (i-1)\bigr) + k = \frac{(k+2)(k+1)}{2} + k,$$
    we conclude that
    $$|S'| = \frac{(k+2)(k+1)}{2} - (k + 1) = \frac{k(k+1)}{2} = \frac{|S|(|S| + 1)}{2},$$
    which shows that the bound given in the theorem is tight.
\end{proof}

\subsection{Controlling $s$-$t$-paths in general digraphs}
\label{sec:paths-general-digraphs}

We now turn to the case of general digraphs.
We first derive two hardness results, showing that it is hard even to verify whether a given set of arcs is identifying (contrasting \cref{thm:paths-DAG-verification}) for~$\smash{\Xpath}$ in general digraphs and that moreover, unless $\mathsf{P} = \mathsf{NP}$, the problem of finding a minimum-size identifying set in this setting does not allow for a non-trivial approximation, i.e., one that is significantly better than the trivial $|E|$-approximation achieved by simply choosing $S = E$ (contrasting \cref{thm:paths-flow-gap}).

\begin{theorem}\label{thm:path-general-digraphs-hard}
    The following problem is $\mathsf{NP}$-hard:
    Given a digraph $G = (V, E)$, two nodes $s, t \in V$, and $S \subseteq E$, decide whether $S$ is identifying for $\smash{\Xpath}$.\\
    Moreover, unless $\mathsf{P} = \mathsf{NP}$, the following problem does not admit an $O(|E|^{1-\varepsilon})$-approximation for any~$\varepsilon > 0$: 
    Given a digraph $G = (V, E)$ and $s, t \in V$, find a minimum-size identifying set for $\smash{\Xpath}$.
\end{theorem}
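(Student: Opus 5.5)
We reduce from an $\mathsf{NP}$-complete path problem in general digraphs, namely the \emph{two vertex-disjoint paths problem}: given a digraph $H$ and distinct vertices $a_1, b_1, a_2, b_2$, decide whether $H$ contains an $a_1$-$b_1$-path and an $a_2$-$b_2$-path that are vertex-disjoint. This problem is $\mathsf{NP}$-complete by Fortune, Hopcroft, and Wyllie. The basic gadget is as follows. Build $G$ from $H$ by adding $s$ and $t$ together with the arcs $(s,a_1)$, $(b_1,a_2)$, $(a_2, t)$, and one extra arc $(s, t)$ or a short bypass. The aim is that $s$-$t$-paths which traverse $H$ twice correspond to disjoint path pairs.

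Concretely, we would like two $s$-$t$-paths that agree on $S$ to exist if and only if the instance is a yes-instance. We choose $S$ to be all arcs except a small controlled set. A convenient choice is to make $S = E(H)\cup\{\dots\}$ minus two parallel routes. Since identification is hard to negate directly, we instead encode the problem so that non-identifiability means the existence of a single long path. Let $P_0$ be the direct path $s\to t$ through new arcs $(s,x)$, $(x,t)$, and let $S$ consist of all arcs of $H$. Then two distinct paths agree on $S$ exactly when both avoid $H$ entirely, or both use identical $H$-arcs.

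To make this work we add a second route from $s$ to $t$ through $a_1$, $H$, and $b_1$, then via a new arc to $a_2$, through $H$ again, and finally from $b_2$ to $t$. This second route uses the $H$-arcs of two disjoint paths. We also add a twin route: the same connections but with the connector arcs duplicated outside $S$, so that any such path has a twin differing only outside $S$. With this setup, a pair of distinct paths agreeing on $S$ exists if and only if some $s$-$t$-path enters $H$. Entering $H$ requires passing $a_1$-$b_1$ and then $a_2$-$b_2$ without repeating vertices, which is exactly a pair of vertex-disjoint paths. The other routes are kept unique by placing their arcs in $S$. This shows that verification is $\mathsf{NP}$-hard. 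Equivalently, "$S$ not identifying" is $\mathsf{NP}$-complete, which is enough.

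For inapproximability, we amplify this gadget. If the instance is a no-instance, a small set (the two duplicated connector arcs plus $O(1)$ others) is identifying, because $H$ is never usable. If it is a yes-instance, we want every identifying set to be huge. To achieve this, we replace each arc of $H$ by a bundle of $N$ parallel paths of length two, so that a yes-certificate path can choose among many bundles. Distinguishing all resulting $s$-$t$-paths then forces $S$ to hit nearly all bundles, giving $\Omega(N|E(H)|)$ arcs. Choosing $N = |E(H)|^{1/\varepsilon}$ yields a gap of $|E|^{1-\varepsilon}$.

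The main obstacle is the lower bound in the yes case. A single disjoint-path pair only certifies a few usable $H$-arcs, while the bundles on unusable arcs need not be hit. We would therefore place the bundling on the connector arcs along which every yes-path must travel, replicating them $N$ times in series and parallel so that there are many parallel choices on many consecutive segments. Each segment's bundle must be distinguished, which requires $N-1$ arcs per segment. In the no case, these segments are unreachable or lie on a unique path and so need only $O(1)$ arcs. Making "unreachable in the no case" precise, by placing the amplifier between $b_1$ and $a_2$ so that it lies only on paths that complete both traversals, is the step we would verify most carefully.
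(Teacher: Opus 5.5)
Your verification reduction rests on a false claim: that an $s$-$t$-path entering $H$ must go from $a_1$ to $b_1$, cross the connector to $a_2$, and then go from $a_2$ to $b_2$ without repeating vertices. Nothing forces this. A path can enter at $a_1$ and reach $b_2$ using arcs of $H$ only, never touching $b_1$ or $(b_1,a_2)$. (Your arc $(a_2,t)$ should also be $(b_2,t)$.) In fact, an $s$-$t$-path entering $H$ exists if and only if $b_2$ is reachable from $a_1$ in $H$ plus the connector, which is a polynomial-time condition. This breaks the reduction under either reading of your construction. If the twins sit on all connector arcs, then ``$S$ is not identifying'' encodes mere reachability. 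If only $(b_1,a_2)$ is duplicated, then your equivalence with ``some path enters $H$'' fails, because a path from $a_1$ to $b_2$ inside $H$ has no twin. The same mistake invalidates your no-case bound in the approximation part. $H$ is usable in a no-instance, and the $s$-$t$-paths through $H$ that avoid the connector may need on the order of $|E(H)|$ arcs to be told apart. So in general there is no identifying set of size $O(1)$.

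The repair is what your last paragraph reaches for: putting the amplifier on the connector between $b_1$ and $a_2$ is correct, and it turns your argument into the paper's. The hard question is whether some $s$-$t$-path uses one specific arc $e=(b_1,a_2)$, which is equivalent to the existence of vertex-disjoint $a_1$-$b_1$- and $a_2$-$b_2$-paths. The paper takes exactly this as its source problem and replaces only $e$ by a bundle $B$ of parallel arcs, with everything else handled as follows.
\begin{itemize}
\item If no $s$-$t$-path uses $e$, no arc of $B$ lies on any $s$-$t$-path, so $E\setminus B$ is identifying.
\item If some path uses $e$, swapping bundle arcs shows that every identifying set contains at least $|B|-1$ arcs of $B$.
\end{itemize}
Thus $S:=E\setminus B$, already with $|B|=2$, gives the verification hardness. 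The gap is $|E\setminus B|$ versus $|B|-1$, not $O(1)$ versus something large, so $|B|$ must be polynomially larger than the rest of the graph. With a single bundle of size about $|E(H)|^{1/\varepsilon}$ (your choice of $N$), the gap is only of order $|E|^{1-\varepsilon}$, which does not rule out an $O(|E|^{1-\varepsilon})$-approximation. The paper therefore takes $|B|=|E'|^{\lceil 2/\varepsilon\rceil}+1$. The bypass route, the bundling of arcs of $H$, and the series replication of segments are all unnecessary.
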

\begin{proof}
    We reduce from the following problem: Given a digraph $G' = (V, E')$, two nodes $s, t \in V$, and an arc $e \in E'$, does there exist an $s$-$t$-path that contains $e$?
    This problem is well-known to be $\mathsf{NP}$-hard, as it is equivalent to finding a pair of node-disjoint paths, one from $s$ to the tail of $e$ and one from the head of $e$ to $t$~\citep{fortune1980directed}.

    Let $\varepsilon > 0$.
    We construct a digraph $G = (V, E)$  where $E$ is derived from $E'$ by replacing $e = (v, w)$ with a bundle $B$ of~$\smash{|E'|^{\lceil 2/ \varepsilon \rceil} + 1}$ parallel arcs from $v$ to $w$.
    Note that if $e$ is not contained in an $s$-$t$-path in $G'$ then none of the edges in $B$ is contained in an $s$-$t$-path in $G$; hence in this case $E \setminus B$ is an identifying set for $\smash{\Xpath}$.
    Conversely, if there is an $s$-$t$-path $P$ in $G'$ containing $e$ then $P \setminus \{e\} \cup \{e'\}$ is an $s$-$t$-path in $G$ for every $e' \in B$; hence in this case, every identifying set contains at least $|B| - 1$ edges from $B$.
     
    By the above, deciding whether $S := E \setminus B$ is identifying for $\smash{\Xpath}$ is equivalent to deciding whether $G'$ has an $s$-$t$-path containing $e$, and thus the former problem is also $\mathsf{NP}$-hard.
    Likewise, any~$|E|^{1-\varepsilon}$-approximation algorithm for finding a minimum-size identifying set for $\smash{\Xpath}$ can distinguish between the case where the minimum size of such a set is at most $|E \setminus B| = |E'| - 1$, which is equivalent to $e$ not being contained in an $s$-$t$-path in $G'$, or where it is at least $$|B| - 1 = |E'|^{\lceil 2/\varepsilon \rceil} > |E'|^{\lceil 2/\varepsilon \rceil - 1} (|E'| - 1) > |E|^{1- \varepsilon} (|E'|-1),$$ which is equivalent to $e$ being contained in an $s$-$t$-path in $G'$.
    Hence, the existence of such an approximation algorithm implies $\mathsf{P} = \mathsf{NP}$.
\end{proof}

Note that both hardness results in \cref{thm:path-general-digraphs-hard} are based on the hardness of deciding whether a given arc appears on any $s$-$t$-path or not, raising the question whether this is the only source of hardness for the problem of finding small identifying sets for~$\smash{\Xpath}$ in general digraphs.
Our next theorem answers this question in the negative by establishing $\Sigma\mathsf{_2^P}$-hardness.
In particular, safe for a collapse of the polynomial hierarchy, this has the following implications:
\begin{itemize}
    \item The problem of finding small identifying sets for $\smash{\Xpath}$ cannot be formulated as an integer program of polynomial size.
    \item Even after a hypothetical preprocessing step that would remove all arcs not appearing on any $s$-$t$-path, the problem cannot be solved in polynomial time.\footnote{Note that such a preprocessing step can be carried out by a non-adaptive $\mathsf{NP}$-oracle and hence a polynomial-time algorithm using such an oracle would imply $\Sigma\mathsf{_2^P} = \mathsf{\theta_2^P}$.}
\end{itemize}

\begin{theorem}\label{thm:paths-sigma-2-p} \;
    The following problem is $\Sigma\mathsf{^P_2}$-complete: \;
    Given a digraph $G = (V, E)$, two nodes~$s, t \in V$, and $k \in \mathbb{N}$, decide whether there is an identifying set $S$ for $\smash{\Xpath}$ with $|S| \leq k$.
\end{theorem}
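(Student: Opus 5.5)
Membership in $\Sigma\mathsf{^P_2}$ is straightforward. We guess $S$ with $|S| \le k$; the universal part then checks that for all pairs of $s$-$t$-paths $P \neq Q$ we have $(P \Delta Q) \cap S \neq \emptyset$. The complementary existential check (do there exist two distinct paths agreeing on $S$?) is an $\mathsf{NP}$ question with polynomial-size certificates, so the whole problem is of the form $\exists S\,\forall (P,Q)$ with a polynomial-time predicate.

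For hardness, I would reduce from $\exists\forall$-3SAT: decide whether $\exists y\,\forall z$ such that $\varphi(y,z)$ holds, with $\varphi$ in DNF, or equivalently whether $\exists y$ such that $\psi(y,z)$ (CNF) is unsatisfiable in $z$. The general plan combines two gadgets.

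\textbf{Choice gadget.} For each $y_i$, build a gadget whose cheapest identifying sets come in exactly two types, encoding $y_i = 1$ and $y_i = 0$. A natural candidate is two parallel $a_i$-$b_i$ arcs $y_i^{1}, y_i^{0}$ on a segment that every relevant path must traverse. An identifying set must contain at least one of them, and the budget $k$ equals the number of $y$-variables plus a forced part. Bundles of parallel arcs (as in \cref{thm:path-general-digraphs-hard}) let us make certain arcs ``expensive''. Any arc we never want in $S$ is replaced by a bundle of $k+2$ parallel copies, so that choosing it cannot pay off. That bundle would itself force $S$ to include copies whenever it lies on a path, so instead the replacement should be a long series-parallel structure, or a rigidity argument should force a canonical $S$. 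I would fix this by making the forced part explicit.

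\textbf{Verification gadget.} I follow the Fortune--Hopcroft--Wyllie style of \cref{thm:path-general-digraphs-hard}: whether two distinct paths agreeing on $S$ exist should encode ``$\exists z$ satisfying $\psi(y,z)$''. Concretely, the graph contains a ``diamond'' of two internally disjoint routes (an undirected cycle avoiding $S$), and both routes can be completed to simple $s$-$t$-paths only if there is a path through variable gadgets for $z$ and clause gadgets that avoids revisiting nodes. Because the paths are simple, consistency of the $z$-assignment between variable and clause gadgets is enforced. The unchosen arc $y_i^{1-\cdot}$ is traversable freely, while the chosen arc in $S$ blocks the corresponding literal's use without being detected. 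Equivalently, the path segments that may differ between $P$ and $Q$ avoid $S$ only when they are consistent with $y$.

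\textbf{Correctness.} Both directions then need to be verified. First, if a good $y$ exists, the corresponding $S$ of size $k$ is identifying: any two distinct paths must differ in $S$, since otherwise they would yield a satisfying $z$ of $\psi(y,\cdot)$. Second, any identifying $S$ with $|S| \le k$ is, after a normalization step analogous to \cref{clm:identifying-set-vertex-cover}, of canonical form encoding some $y$, and then no satisfying $z$ exists.

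The main obstacle is the gadget design. The construction must ensure that all other ``accidental'' pairs of paths in the gadget graph (differing only outside $S$) are already separated by the forced part of $S$, without blowing up the budget, while the single intended ambiguity survives exactly when $\psi(y,z)$ is satisfiable. I would first try the classical switch gadget for disjoint paths, with doubled arcs placed only at the diamond's two branches, and check carefully that the symmetric difference of any two paths agreeing on $S$ must contain such a diamond.
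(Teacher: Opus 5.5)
Your membership argument is fine, but the hardness half is a plan rather than a reduction, and the parts you do pin down would not work.

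Your choice gadget, two parallel $a_i$-$b_i$ arcs on a segment every relevant path traverses, cannot encode a truth value. Swapping the two arcs is an automorphism of $G$ fixing $s$ and $t$. So $S$ is identifying if and only if its image is, and it does not matter which of $y_i^1,y_i^0$ you put into $S$. Relatedly, an arc of $S$ does not ``block'' a path from using it. Both paths of an ambiguous pair may use it in their common part; membership in $S$ only prevents the arc from lying in $P \Delta Q$. So the choice arcs can influence identifiability only if they sit where the two paths of a candidate ambiguous pair differ. Then you need something that forces the other branch to avoid $S$ yet stay synchronized with the first, and the proposal does not supply this. Your device for arcs that should never enter $S$ also fails, as you notice yourself: a bundle on an $s$-$t$-path forces all but one copy into $S$. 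The promised fix is left open.

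The missing ideas are the ones that carry the paper's reduction.
\begin{enumerate}
\item It passes through \textsc{Forbidden Pairs $s$-$t$-Path Interdiction}, whose $\exists\forall$ structure is a covering one. Valid paths encode falsifying assignments and $S$ must hit all of them. With budget $n$, the arc-disjoint paths $Q_i$ force exactly one arc per existential variable.
\item Interdiction is turned into identification by taking two parallel copies $D_0,D_1$ followed by a third copy $D_2$ whose arcs are declared known ($\bar E=E_2$). Forbidden pairs force the $D_0$/$D_1$ part of a valid path to replicate its $D_2$ part. Hence the only pairs not separated by $\bar E$ are $(P_0,P_1)$, and separating them amounts to hitting the copy of $P$ in $D_0$ or $D_1$.
\item Forbidden pairs are simulated in the single-commodity setting by Fortune--Hopcroft--Wyllie switch gadgets. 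The single path traverses each gadget once before reaching $s$ and once after leaving $t$. A separate argument shows that non-proper paths are already separated by known arcs.
\item Each known arc is replaced by the five-arc gadget $B_e$. It forces at least two arcs into any identifying set, while two suitable arcs already reveal the route a path takes through it. So known arcs cost an additive $2|\bar E|$ in the budget.
\end{enumerate}
Your ``diamond'' verification gadget gestures at the third ingredient. Without the other three, nothing ties the choice of $S$ to the universal quantifier, so hardness remains unproved.
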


\begin{proof}[Proof (sketch)]
    To establish hardness, we make use of an auxiliary problem, called \textsc{Forbidden Pairs $s$-$t$-Path Interdiction}, which asks for a set of edges that intersect any $s$-$t$-path except those containing certain forbidden pairs of arcs.
    We show that this problem is $\Sigma\mathsf{^P_2}$-hard via reduction from \textsc{2-Quantified 3-DNF-Satisfiability}.
    Then, to reduce  \textsc{Forbidden Pairs $s$-$t$-Path Interdiction} to the problem of finding a small identifying set for $\Xpath$, we make use of a gadget introduced by \citet{fortune1980directed} for showing $\mathsf{NP}$-hardness of \textsc{Disjoint Paths} with two commodities.
    We show that by appropriately embedding these gadgets into three distinct copies of the graph and combining them with a new gadget that forces certain arcs to be contained in any identifying set, we can employ them in the single-commodity context of our identifying set problem.
    This yields a sequence of four intermediate reductions that ultimately produce a digraph in which the identifying sets for $\smash{\Xpath}$ correspond exactly to the arc sets interdicting all valid paths from the \textsc{Forbidden Pairs $s$-$t$-Path Interdiction} instance.
    The complete proof can be found in \cref{app:paths-sigma-2-p}.
\end{proof}

\section{Controlling Matroids and Polymatroids}
\label{sec:matroids}

In this section, we discuss controlling sets for the case that $X$ is described by a polymatroid or matroid.

Let~$B(f)$ be a polymatroid with polymatroid rank function~$f$.
The set~$B(f)$ is clearly convex, so \Cref{thm:convex} can be applied to obtain a characterization of the controlling and identifying sets for~$B(f)$.
To this end, some polymatroid terminology is needed.
A polymatroid $(E,f)$ is \emph{connected} if there is no non-empty proper subset $T \subsetneq E$ such that $f(T) + f(E \setminus T) = f(E)$.
Every polymatroid~$(E,f)$ has a unique partition into connected polymatroids, i.e., there is a unique partition $\{E_1,\dots,E_k\}$ of $E$ such that $(E_i,f)$ is connected for all~$i \in [k]$ and $f(T) = \sum_{i \in [k]} f(E_i \cap T)$ for all $T \subseteq E$, and we then have $\dim B(f) = |E| - k$; see \citet[Theorem 3.38 \& Corollary 3.40]{fujishige1991submodular}.
Thus, \Cref{thm:convex} implies that $S$ has to contain at least $|E|-k$ elements. 
The following yields a complete characterization of the controlling sets of a polymatroid in terms of its connected components.

\begin{theorem}
\label{thm:polymatroid}
Let $X = B(f)$ for some polymatroid rank function~$f$ and let~$\C$ be the set of convex functions.
Let $\{E_1,\dots,E_k\}$ be the partition of $E$ induced by the connected components of $X$.
Then, the following are equivalent:
\begin{enumerate}[label=(\roman*.)]
    \item $S$ is a controlling set for $(X,\C)$. \label{it:polymatroid-controlling}
    \item $S$ is an identifying set for $X$. \label{it:polymatroid-identifying}
    \item $|S \cap E_i| \geq |E_i| - 1$ for all $i \in [k]$.\label{it:polymatroid-partition}
\end{enumerate}
\end{theorem}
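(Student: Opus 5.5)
Since $B(f)$ is convex, \cref{thm:convex} already gives the equivalence of \ref{it:polymatroid-controlling} and \ref{it:polymatroid-identifying}, as well as the equivalence of both with $E \setminus S \in \mathcal{A}_X$. It therefore remains to show that \ref{it:polymatroid-identifying} and \ref{it:polymatroid-partition} are equivalent. We plan to do this by determining the linear space $L := \lin\{x - x' : x, x' \in B(f)\}$, i.e., the direction space of $\aff(B(f))$. The key identity to establish is
\begin{align*}
L = \bigl\{ y \in \R^E : y(E_i) = 0 \text{ for all } i \in [k] \bigr\}.
\end{align*}
The inclusion ``$\subseteq$'' holds because $x(E_i) = f(E_i)$ for every $x \in B(f)$. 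This in turn follows from $x(E_i) \le f(E_i)$ and $x(E) = f(E) = \sum_i f(E_i)$. Both spaces have dimension $|E| - k$, using the cited fact $\dim B(f) = |E|-k$. Hence equality holds.

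Given this description of $L$, identifiability becomes a linear-algebra statement. By convexity, $S$ fails to be identifying if and only if there is a nonzero $y \in L$ with $y_e = 0$ for all $e \in S$. The argument is the same as in the proof of \cref{thm:convex}: starting from a relative interior point $x'$, the point $x' + \varepsilon y$ stays in $B(f)$ and agrees with $x'$ on $S$. We therefore characterize when such a $y$ exists.

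If $|E_i \setminus S| \ge 2$ for some $i$, pick distinct $e, e' \in E_i \setminus S$ and set $y = \chi_e - \chi_{e'}$. Then $y \in L$, $y \neq 0$, and $y$ vanishes on $S$, so $S$ is not identifying. Conversely, suppose $|E_i \setminus S| \le 1$ for all $i$, and let $y \in L$ vanish on $S$. Within each $E_i$, the vector $y$ is supported on at most one element, and $y(E_i) = 0$ forces that entry to be zero. Hence $y = 0$ and $S$ is identifying. This proves the equivalence of \ref{it:polymatroid-identifying} and \ref{it:polymatroid-partition}.

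The main obstacle is the dimension/affine-hull step. We intend to rely on the cited Fujishige result for $\dim B(f) = |E|-k$. We also need $B(f)$ to be nonempty, which holds by the greedy algorithm, and the existence of a relative interior point, which holds for any nonempty convex set. If we want to avoid citing the dimension result, we would instead show directly that $\chi_e - \chi_{e'} \in L$ for $e, e'$ in the same connected component. We would do this via exchange, $x + \varepsilon(\chi_e - \chi_{e'}) \in B(f)$, using connectivity to avoid a tight set separating $e$ from $e'$. This direct route is more delicate, so citing the result is the preferred path.
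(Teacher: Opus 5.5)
Your proposal is correct. It agrees with the paper on (i)$\Leftrightarrow$(ii), which both obtain from \cref{thm:convex}. It also agrees on (iii)$\Rightarrow$(ii), which in both cases rests on $x(E_i)=f(E_i)$ for every $x\in B(f)$. The routes differ only in (ii)$\Rightarrow$(iii). The paper invokes a different structural result of Fujishige (Theorem~3.36): there is a base $x$ with $x(T)<f(T)$ for all nonempty proper $T\subsetneq E_i$. From this it derives $x_e>0$ and asserts that $x+\epsilon(\chi_{e'}-\chi_e)\in B(f)$ for small $\epsilon$, which exhibits two bases agreeing on $S$. You instead take the dimension formula $\dim B(f)=|E|-k$, which the paper cites just before the theorem, and use it to determine the direction space of $\aff B(f)$ as $\{y : y(E_i)=0 \text{ for all } i\}$. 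You then reuse the relative-interior perturbation from the proof of \cref{thm:convex}. Your version is more linear-algebraic and handles all components and all constraints of $B(f)$ at once. In particular, you never need to check that no tight set crossing several components blocks the exchange, a point the paper passes over. As a side benefit, it identifies the matroid $\mathcal{A}_X$ explicitly: $F\in\mathcal{A}_X$ if and only if $|F\cap E_i|\le 1$ for all $i$. Thus (iii) is literally condition (iii) of \cref{thm:convex}. The paper's version is more constructive and polymatroid-specific. It is essentially the ``direct route'' you sketch at the end, with the connectivity argument delegated to Fujishige's theorem.
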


\begin{proof}
The equivalence between \ref{it:polymatroid-controlling} and \ref{it:polymatroid-identifying} follows from \Cref{thm:convex}.
We proceed to show the implication~\ref{it:polymatroid-identifying} $\Rightarrow$ \ref{it:polymatroid-partition}
For a contradiction, suppose that $S$ is identifying but there is $i \in [k]$ such that $S \cap E_i \leq |E_i| - 2$.
Using \citet[Theorem~3.36]{fujishige1991submodular}, there is $x \in X$ such that $x(T) < f(T)$ for all $T \subsetneq E_i$ with $T \neq \emptyset$.
Let $e, e' \in E_i \setminus S$ with $e \neq e'$.
We claim that $x_{e} > 0$. Indeed, using that $x(T) < f(T)$ for all $T \subsetneq E_i$ with $T \neq \emptyset$, we have in particular that 
\begin{align*}
x_e = x(E_i) - x(E_i \setminus \{e\}) > f(E_i) - f(E_i \setminus \{e\}) \geq 0, 
\end{align*}
where for the strict inequality we used that $f(E_i) = x(E_i)$ for every ~$x \in X$.
This then implies that there is $\epsilon > 0$ such that $y = x + \epsilon(\chi_{e'} - \chi_{e}) \in X$.
By construction, $x_S = y_S$, a contradiction to the assumption that~$S$ is identifying.

To show \ref{it:polymatroid-partition} $\Rightarrow$ \ref{it:polymatroid-identifying} suppose that $|S \cap E_i| \geq |E_i| -1$ for all $i \in [k]$. Then for each $x,y \in X$ with~$x_S = y_S$, we have~$x(E_i) = y(E_i)$ for all~$i \in [k]$ and hence also~$x = y$ so that $S$ is identifying.
\end{proof}

\Cref{thm:polymatroid} also gives rise to an efficient algorithm optimizing a linear objective over the set of controlling sets for $X = B(f)$, when $f$ is given by a value oracle, that given $U \subseteq E$ returns $f(U)$.

\begin{corollary}
\label{cor:polymatroid}
There is an (oracle) polynomial-time algorithm solving
\begin{align*}
\min \bigl\{ w^\top \chi_S : S \text{ is identifying for $X$} \bigr\}, 
\end{align*}
when $w \in \R^E$ and $X = B(f)$ is a polymatroid given by a value oracle for $f$.
\end{corollary}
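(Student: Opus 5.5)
By \cref{thm:polymatroid}, a set $S$ is identifying for $X = B(f)$ if and only if $|S \cap E_i| \geq |E_i| - 1$ for every connected component $E_i$. So the optimization problem splits into independent subproblems, one for each component, and the algorithm has two stages: compute the partition $\{E_1,\dots,E_k\}$ with the value oracle, then solve each subproblem greedily.

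For the second stage, note that each $E_i$ has exactly one free choice. Every element with $w_e \leq 0$ may be included at no cost, so the optimum on $E_i$ takes all of $E_i$ except possibly one element, and that element should be one of maximum positive weight. Concretely, if $\max_{e \in E_i} w_e > 0$, we take $S \cap E_i = E_i \setminus \{e_i^*\}$ with $e_i^* \in \arg\max_{e \in E_i} w_e$; otherwise we take $S \cap E_i = E_i$. Optimality then follows because the constraints on different components do not interact.

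The main work is the first stage: computing the connected components in oracle-polynomial time. I would use the standard connectivity relation for polymatroids. Call two elements $e, e'$ \emph{related} if some base $x \in B(f)$ allows an exchange, i.e., $x + \varepsilon(\chi_{e'} - \chi_e) \in B(f)$ for some $\varepsilon > 0$. The components of $B(f)$ are the connected components of the graph defined by this relation. The proof of \cref{thm:polymatroid} already shows this for one direction: inside a component, a point with $x(T) < f(T)$ for all proper nonempty $T \subsetneq E_i$ admits such exchanges.

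To compute the relation, I would start from a single base $x$ produced by the greedy algorithm. Each exchange capacity $\tilde{c}(x,e',e) = \min\{f(T) - x(T) : e' \in T, e \notin T\}$ is obtained by submodular function minimization, so the exchangeability graph $D(x)$ can be built in polynomial time. The components should then be the strongly connected components of $D(x)$, using Fujishige's result that the components are determined by the tight-set structure at any base (\citet{fujishige1991submodular}). If that citation cannot be pinned down, an alternative is to test directly whether a set $T$ is separating, i.e., whether $f(T) + f(E \setminus T) = f(E)$. Since $g(T) = f(T) + f(E \setminus T) - f(E)$ is submodular and nonnegative, the pairwise question ``is there a separator with $e \in T$ and $e' \notin T$?'' can be answered by minimizing $g$ over sets that contain $e$ and exclude $e'$. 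Separators are closed under union and intersection, so $e \sim e'$ exactly when no such separator achieves value $0$, and the components are the resulting equivalence classes.

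I expect this pairwise-separator argument to be the most robust route. The main obstacle is checking carefully that the closure of separators under union and intersection gives exactly the partition used in \cref{thm:polymatroid}.
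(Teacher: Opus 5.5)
Your proposal is correct along the pairwise-separator route you end up preferring, and that route differs from the paper's. The paper fixes a single base $x$, forms the graph $G(x)$ with arcs $(e,e')$ for $e'\in\dep(x,e)$, and cites \citet[Lemma~3.41]{fujishige1991submodular} to read off $E_1,\dots,E_k$ as the connected components of $G(x)$. This needs only one base and $|E|$ dependence computations, but it rests on Fujishige's structure theory. You instead decide each pair by minimizing the symmetric submodular function $g(T)=f(T)+f(E\setminus T)-f(E)$ over $\{T: e\in T,\ e'\notin T\}$. That costs $O(|E|^2)$ submodular minimizations, but it is self-contained and works directly with the definition of connectivity used in \cref{thm:polymatroid}.

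The step you flag as the main obstacle closes quickly. If $A$ is a separator, adding $f(T)+f(A)\ge f(T\cup A)+f(T\cap A)$ and $f(T\cup A)+f(E\setminus A)\ge f(E)+f(T\setminus A)$ gives $f(T)=f(T\cap A)+f(T\setminus A)$ for every $T$. Decomposing across $A$ and across $B$ then gives $f(A\cap B)+f(E\setminus(A\cap B))=f(A\cap B)+f(A\setminus B)+f(E\setminus A)=f(A)+f(E\setminus A)=f(E)$. So separators form a Boolean algebra. Its atoms satisfy $f(T)=\sum_i f(T\cap E_i)$ and are connected, since a separator of $(E_i,f)$ is also one of $(E,f)$; hence they are exactly the $E_i$. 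Your greedy stage is also more careful than the paper's ``remove the most expensive element from each $E_i$'', which is only correct when every component contains an element of nonnegative weight.

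Your first route, however, is wrong as stated and should be dropped or corrected. The components are the \emph{weakly} connected components of the exchangeability graph, not the strongly connected ones, and for a greedy base the strongly connected components are always singletons. If $x$ comes from the greedy algorithm with order $e_1,\dots,e_n$, every prefix $\{e_1,\dots,e_j\}$ is tight. Hence $\dep(x,e_j)\subseteq\{e_1,\dots,e_j\}$, every arc points to an earlier element, and $D(x)$ is acyclic. For example, take $E=\{a,b\}$ and $f(T)=\min\{|T|,1\}$, which is connected; greedy gives $x=(1,0)$, and the only arc is $(b,a)$.

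The correct statement, which is what the paper uses, follows from the tight-set structure you mention. For any base $x$, a separator $T$ and its complement are both tight, so no arc joins them. Conversely, if $T$ is a union of weak components, both $T$ and $E\setminus T$ are closed under $\dep(x,\cdot)$, hence tight, so $f(T)+f(E\setminus T)=x(E)=f(E)$.
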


\begin{proof}
By \Cref{thm:polymatroid}~\ref{it:polymatroid-partition}, optimizing over $S$ is equivalent to optimizing over the dual of a partition matroid which can be done with the greedy algorithm that simply removes the most expensive element from each set $E_i$, $i \in [k]$. The only thing left to argue is that we can compute the partition $\{E_1,\dots,E_k\}$ in polynomial time with access to a value oracle for $f$.

Consider the \emph{dependence function} of the polymatroid $\dep : B(f) \times E \to 2^E$ defined as
\begin{align*}
\dep(x,e) = \bigl\{e' \in E : \text{ there is $\epsilon > 0$ with $x + \epsilon(\chi_e - \chi_{e'}) \in B_f$} \bigr\}. 
\end{align*}
Intuitively, $\dep(x,e)$ contains those elements~$e'$ from which a sufficiently small shift from $e'$ to $e$ does not leave the feasible region.
The dependence function can be computed in polynomial time when having access to a value oracle for $f$; see, e.g., the discussion by \citet[\S~3.3c]{fujishige1991submodular}.
From the dependence function, define a directed graph $G(x) = (E, A(x))$ with
$A(x) = \bigl\{(e,e') : e' \in \dep(x,e)\bigr\}$.
The partition $\{E_1,\dots,E_k\}$ is the equal to the connected components of $G(x)$ \citep[Lemma 3.41]{fujishige1991submodular}.
In summary, the partition $\{E_1,\dots,E_k\}$ can be computed in polynomial time and the result follows.
\end{proof}

We proceed to give a similar characterization for matroids.

\begin{theorem}
\label{thm:matroid}
    Let $\mathcal{M} = (E,\mathcal{I})$ be a matroid and $X$ be the set of incidence vectors of bases of $\mathcal{M}$ and let~$\mathcal{C}$ be the class of non-decreasing cost functions. Then, the following are equivalent:
\begin{enumerate}[label=(\roman*.)]
\item $S$ is a controlling set for $(X,\mathcal{C})$.\label{it:matroid-controlling}
\item $S$ is an identifying set for $X$.\label{it:matroid-identifying}
\item $|S \cap C| \geq |C| - 1$ for all circuits $C$ of $\mathcal{M}$. \label{it:matroid-circuit}
\end{enumerate}    
\end{theorem}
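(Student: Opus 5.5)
The equivalence \ref{it:matroid-controlling}$\Leftrightarrow$\ref{it:matroid-identifying} will come from \cref{sec:controlling-discrete}. Since $X \subseteq \{0,1\}^E$, \cref{thm:identifying-is-controlling-discrete} shows that every identifying set is controlling for the class of all cost functions, and hence also for the smaller class $\mathcal{C}$. For the converse I want to use \cref{lem:ctrl=ident}, which requires $(X,\mathcal{C})$ to be non-degenerated. Take two distinct bases $B \neq B'$. Since $|B|=|B'|$, there is $e \in B \setminus B'$, and the function $c(x) := x_e$ is linear and non-decreasing with $c(\chi_B)=1 \neq 0 = c(\chi_{B'})$. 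This gives non-degeneracy, so \ref{it:matroid-controlling}$\Rightarrow$\ref{it:matroid-identifying}. If $\mathcal{C}$ is meant to consist of functions on $X$ and ``non-decreasing'' refers to the coordinatewise order, this $c$ qualifies.

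For \ref{it:matroid-identifying}$\Rightarrow$\ref{it:matroid-circuit} I will argue by contraposition. Suppose some circuit $C$ contains two distinct elements $e,f \notin S$. I plan to produce two bases $B, B'$ with $B \,\Delta\, B' = \{e,f\}$, so that $\chi_B$ and $\chi_{B'}$ agree on $S$. To build them, take $I := C \setminus \{e\}$, which is independent, and extend it to a basis $B$; then $f \in B$ and $e \notin B$. The fundamental circuit of $e$ with respect to $B$ is $C$ itself, because $C \subseteq B \cup \{e\}$ and that fundamental circuit is unique. Therefore $B' := B - f + e$ is a basis, because exchanging $e$ against any element of its fundamental circuit yields a basis. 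This contradicts the assumption that $S$ is identifying.

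For \ref{it:matroid-circuit}$\Rightarrow$\ref{it:matroid-identifying}, suppose $B \neq B'$ are bases agreeing on $S$, so $B \,\Delta\, B' \subseteq E \setminus S$. Pick $e \in B' \setminus B$. The fundamental circuit $C$ of $e$ in $B$ must contain some $f \in B \setminus B'$, since otherwise $C \subseteq B'$, which is impossible. Both $e$ and $f$ lie in $B \,\Delta\, B' \subseteq E\setminus S$, so $|C \setminus S| \geq 2$, contradicting \ref{it:matroid-circuit}.

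The main point needing care is the exchange step in \ref{it:matroid-identifying}$\Rightarrow$\ref{it:matroid-circuit}: I must justify that $C$ is exactly the fundamental circuit of $e$ in $B$ and that swapping $f$ for $e$ keeps a basis. Both follow from the uniqueness of the circuit in $B+e$. Beyond that, the only subtlety is confirming that the cost class is rich enough for \cref{lem:ctrl=ident}, which the linear function $c(x)=x_e$ settles.
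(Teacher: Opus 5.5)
Your proposal is correct and follows the paper's proof essentially step for step: (i)$\Leftrightarrow$(ii) via the binary-case theorem and the controlling-implies-identifying lemma, (ii)$\Rightarrow$(iii) by extending a circuit minus one element to a basis and exchanging along the unique fundamental circuit, and (iii)$\Rightarrow$(ii) by a fundamental-circuit contradiction. In that last step you locate a second element $f \in B \setminus B'$ on the circuit where the paper instead derives $C \subseteq B$, which is only a rephrasing. Your explicit check of non-degeneracy via $c(x)=x_e$ fills in a step the paper leaves implicit.
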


\begin{proof}
\ref{it:matroid-controlling} $\Rightarrow$ \ref{it:matroid-identifying} follows from \Cref{lem:ctrl=ident}.
\ref{it:matroid-identifying} $\Rightarrow$ \ref{it:matroid-controlling} follows from  \Cref{thm:identifying-is-controlling-discrete}. We proceed to show $\neg$\ref{it:matroid-circuit} $\Rightarrow$ $\neg$\ref{it:matroid-identifying}.

Let $S$ be such that there is a circuit~$C$ with $|S \cap C| \leq |C|-2$. We show that then $S$ is not identifying. Let $e,f \in C \setminus S$ with $e \neq f$ be arbitrary.
As $C$ is minimal dependent, we have that~$C \setminus \{f\}$ is independent and, in particular, there is a basis~$B$ such that~$C \subseteq B \cup \{f\}$.
A basic fact for matroids is that adding an element to a set closes at most one circuit, so when adding $f$ to $B$, the unique circuit closed must be $C$.
Since $e \in C$, we have that $C' = (B \cup \{f\}) \setminus \{e\}$ is a basis as well.
By construction, $B$ and $B'$ cannot be distinguished with $S$, completing the proof for this direction.

We proceed to show \ref{it:matroid-circuit} $\Rightarrow$ \ref{it:matroid-identifying}. Let $\mathcal{K}$ denote the circuits of $\mathcal{M}$.
For a contradiction, assume that~$|S \cap C| \geq |C| -1$ for all circuits $C \in \mathcal{K}$ but there are bases $B,B'$ with $B \neq B$ such that~$B \cap S = B' \cap S$.
Let $e \in B \setminus B'$ be arbitrary.
If~$e \in S$, we have $e \in B \cap S$ but $e \notin B' \cap S$, a contradiction.
So we can assume that $e \notin S$.
Then,~$B' \cup \{e\}$ closes a fundamental circuit~$C \in \mathcal{K}$.
Since~$e \in C$ and $e \notin S$ our assumption on $S$ implies that $C \setminus \{e\} \subseteq S$.
We further have~$C \setminus \{e\} \subseteq B'$.
As~$B$ and~$B'$ coincide on~$S$, this yields that $C \setminus \{e\} \subseteq B$ and further, since $e \in B$, that $C \subseteq B$. This is a contradiction since $B$ cannot contain a circuit.
\end{proof}

\Cref{thm:matroid} also gives rise to an efficient algorithm optimizing a linear objective over the set of controlling sets when $X$ is the set of incidence vectors of bases of a matroid $\mathcal{M}$ and $\mathcal{M}$ is given by an independence oracle.

\begin{corollary}
\label{cor:matroid}
There is an (oracle-)polyomial-time algorithm solving
\begin{align*}
\min \bigl\{ w^\top \chi_S : S \text{ is identifying for } X\bigr\}, 
\end{align*}
where $w \in \mathbb{R}^E$ and $X$ is the set of incidence vectors of a matroid $\mathcal{M}$ given by an independence oracle.
\end{corollary}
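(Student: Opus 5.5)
By \Cref{thm:matroid}, a set $S$ is identifying for $X$ exactly when $|S \cap C| \geq |C| - 1$ for every circuit $C$ of $\mathcal{M}$, so we work with this circuit condition. First, observe that a loop $e$ (a circuit $\{e\}$) imposes no constraint. Second, observe that if $e, f$ are non-loops lying in a common circuit, then the condition forbids both $e$ and $f$ from lying outside $S$.

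The key structural step is to show that the condition is equivalent to the following. Let $E_1, \dots, E_k$ be the connected components of $\mathcal{M}$ with loops discarded (equivalently, the classes of the relation ``$e$ and $f$ lie on a common circuit'', which is an equivalence relation by the standard circuit-elimination argument). The claim is that $S$ is identifying if and only if $|E_i \setminus S| \leq 1$ for every non-trivial component, i.e., every component with at least two elements.

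For the forward direction, suppose two distinct elements of one component lie outside $S$. They share a circuit $C$, so $|C \setminus S| \geq 2$, violating the condition. For the backward direction, suppose every circuit $C$ satisfies $|C \setminus S| \geq 2$ would fail; each circuit lies inside a single component, so $|C \setminus S| \leq 1$ holds automatically. Singleton components that are coloops lie in no circuit and are unconstrained, and the same holds for loops.

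Hence the feasible complements $E \setminus S$ are exactly the independent sets of a partition matroid. Its blocks are the components of size at least $2$, each with capacity $1$, together with all remaining elements (loops and coloops) with unbounded capacity. Minimizing $w^\top \chi_S$ is then the same as maximizing $w(E \setminus S)$ over this partition matroid. For $w \ge 0$ the optimum is simple: put every loop and coloop into $E \setminus S$, and from each non-trivial component remove from $S$ one element of maximum weight. For arbitrary $w \in \R^E$, additionally put every negative-weight element into $S$, which preserves the identifying property since supersets of identifying sets are identifying. Concretely, leave out of $S$ all nonnegative loops and coloops, plus the maximum-weight element of each non-trivial component if that weight is nonnegative.

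The main obstacle is computing the components with an independence oracle. First find a basis $B$ greedily. An element $f \notin B$ is a loop iff $\{f\} \notin \mathcal{I}$. For a non-loop $f \notin B$, compute its fundamental circuit $C(B,f) = \{f\} \cup \{e \in B : B - e + f \in \mathcal{I}\}$ with $|B|$ oracle calls. Build the bipartite graph joining $f$ to each $e \in C(B,f) \setminus \{f\}$. It is a standard fact, which I would cite rather than reprove, that the connected components of this fundamental-circuit graph coincide with the matroid's connected components. Under this fact, elements of $B$ with no incident edge are exactly the coloops. A graph search then yields the partition using $O(|E|^2)$ oracle calls, and the greedy selection above finishes the algorithm in polynomial time.
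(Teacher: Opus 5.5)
Your proposal is correct and follows essentially the same route as the paper: it reduces the circuit condition of \Cref{thm:matroid} to requiring that $S$ contain all but one element of each connected component, computes the components from the bipartite fundamental-circuit graph of a basis, and finishes with a greedy selection. Your explicit proof of the component reduction and your treatment of negative weights (which the paper's ``simple greedy algorithm'' glosses over) are useful refinements, not a different approach.
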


\begin{proof}
For a matroid $\mathcal{M}$, one can define an equivalence relation $\sim_{\mathcal{M}}$ by $e \sim_{\mathcal{M}} f$ if either $e=f$, or there is a circuit $C$ with $\{e,f\} \subseteq C$. For a proof that this indeed is an equivalence relation, see, e.g., \citep[Proposition 4.1.2]{oxley2011matroid}.
The equivalence classes of this equivalence relation are also called the \emph{connected components} of the matroid.

It is not hard to convince ourselves that the connected components of a matroid can be computed in oracle-polynomial time.
For this, take an arbitrary basis $B$ of the matroid~$\mathcal{M}$ and define the \emph{fundamental graph} of $\mathcal{M}$ with respect to $B$ as follows:
The graph is bipartite and the nodes on the left side of the partition are the elements in $B$ and the nodes on the right side of the partition are the elements in $E \setminus B$. There is an edge $\{i,j\}$ with $i \in B$ and $j \in E \setminus B$ if and only if $i$ is contained in the fundamental cycle closed by $j$. A consequence of \citet[Proposition~4.3.2]{oxley2011matroid} is that the connected components of this graph correspond to the connected components of the matroid. For the computation of the graph, we only need to be able to compute fundamental cycles for a given basis and a given element outside the basis which clearly can be done with polynomial many calls to an independence oracle.

Having computed the connected components, \Cref{thm:matroid} implies that $S$ needs to contain all but one element from each connected component.
Thus, the optimization can be done by a simple greedy algorithm.
\end{proof}

\section{Controlling $X$ When the List of Solution Is Given Explicitly}
\label{sec:explicit}

In the preceding sections, we considered several cases in which $X$ were given implicitly and potentially of exponential size in the encoding length of the input (this is, e.g., the case when $X$ is the set of $s$-$t$-paths of a given digraph).
In this section, we consider the complexity of finding a minimal identifying set when $X \subseteq \{0, 1\}^E$ is given explicitly in the input.
It turns out that this setting is closely related to the \textsc{Set Cover} problem, which, given a finite universe $U$ and a set system~$\mathcal{S} \subseteq 2^U$ asks for minimum-size subset $\mathcal{S}' \subseteq \mathcal{S}$ such that $\bigcup_{T \in \mathcal{S}'} T = U$.
It is well known that \textsc{Set Cover} allows for a $\ln |U|$-approximation (even when looking for a min-weight instead of min-size subset of $\mathcal{S}$) via a simple greedy algorithm and that this is best possible, unless $\mathsf{P} = \mathsf{NP}$~\cite{feige1998threshold}.
We first show that this approximation result carries over:

\begin{theorem}
    There is an $2 \ln |X|$-approximation algorithm for the following problem:
    Given~$X \subseteq \{0, 1\}^E$ and weights $w \in \mathbb{R}^E_+$, find an identifying set $S$ for $X$ minimizing $\sum_{e \in S} w_e$.
\end{theorem}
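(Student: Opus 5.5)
We reduce to weighted \textsc{Set Cover}. Take the universe $U := \bigl\{\{x, x'\} : x, x' \in X, x \neq x'\bigr\}$ of unordered pairs of distinct solutions, so $|U| = \binom{|X|}{2} \leq |X|^2/2$. For each $e \in E$, let $T_e := \bigl\{\{x, x'\} \in U : x_e \neq x'_e\bigr\}$ and give it weight $w_e$. By definition, $S \subseteq E$ is identifying for $X$ if and only if $\bigcup_{e \in S} T_e = U$. This holds because every pair of distinct solutions must differ on some coordinate in $S$. Hence identifying sets and set covers of $(U, \{T_e\}_{e \in E})$ correspond one-to-one with equal weights. The instance has size polynomial in $|E|$ and $|X|$, so it can be built in polynomial time since $X$ is given explicitly.

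We then run the weighted greedy algorithm for \textsc{Set Cover}. It repeatedly picks the element $e$ minimizing $w_e / |T_e \cap U_{\text{uncovered}}|$, treating weight-zero elements first. Its guarantee is $H_{|U|} \leq 1 + \ln |U|$ times the optimum, or more precisely $H_d$ with $d := \max_e |T_e|$. Since $|U| < |X|^2/2$, we get $\ln|U| \leq 2\ln|X| - \ln 2$, and so $H_{|U|} \leq 1 + \ln |U| \leq 2 \ln |X| + 1 - \ln 2$.

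The main obstacle is the additive constant $1 - \ln 2 > 0$, which must be absorbed to get exactly $2\ln|X|$. We would handle it in two ways.

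First, we bound $d$ more tightly. A single coordinate $e$ splits $X$ into the solutions with $x_e = 0$ and those with $x_e = 1$, so $|T_e| \leq \lfloor |X|/2 \rfloor \cdot \lceil |X|/2 \rceil \leq |X|^2/4$. Then $H_d \leq 1 + \ln(|X|^2/4) = 2\ln|X| + 1 - 2\ln 2 < 2\ln|X|$, since $2\ln 2 > 1$.

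Second, we treat small cases separately. For $|X| = 1$ the empty set is optimal. For $|X| = 2$ there is one pair, so $d = 1$ and greedy is exact.

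This yields the claimed $2\ln|X|$ approximation ratio.
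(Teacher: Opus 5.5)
Your proposal is correct and takes the same route as the paper: reduce to weighted \textsc{Set Cover} on the universe of unordered pairs of distinct solutions, where $T_e$ is the set of pairs separated by $e$, and then run the greedy algorithm. Your bound $|T_e| \leq |X|^2/4$ together with the $H_d$ guarantee is a genuine refinement: the paper simply cites a $\ln|U|$ guarantee for greedy and thereby glosses over the additive constant, which you correctly identify and absorb (with $d \leq |X|^2/4$ one gets $H_d \leq 2\ln|X| + 1 - 2\ln 2 < 2\ln|X|$ for $|X| \geq 2$).
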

\begin{proof}
    Let $U := \binom{X}{2}$.
    For $e \in E$, let $\smash{T_e := \bigl\{\{x, x'\} \in U : x_e \neq x'_e\bigr\}}$.
    Note that $S \subseteq E$ is identifying for $X$ if and only if $\bigcup_{e \in S} T_e = U$, i.e., if the sets $\{T_e : e \in S\}$ form a set cover for $U$.
    Hence, applying the greedy algorithm for \textsc{Set Cover}, to the instance with ground set $U$, set system $\mathcal{S} := \{T_e : e \in E\}$, and weight $w_e$ for each set $T_e$, yields an identifying set $S$ whose weight $\sum_{e \in S} w_e$ exceeds the weight of a min-weight identifying set by at most a factor of $\ln |U| \leq \ln |X|^2 = 2 \ln |X|$.
\end{proof}

We also obtain the following hardness-of-approximation result, even for the unweighted version of the problem.

\begin{theorem}
    Unless $\mathsf{P} = \mathsf{NP}$, there is no $\frac{1}{4}\ln |X|$-approximation for the following problem: Given~$X \subseteq \{0, 1\}^E$, find an identifying set $S$ for $X$ minimizing $|S|$.
\end{theorem}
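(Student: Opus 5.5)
We reduce from \textsc{Set Cover}. By \citet{feige1998threshold}, unless $\mathsf{P}=\mathsf{NP}$, \textsc{Set Cover} on a universe $U$ with $n:=|U|$ admits no $(1-\delta)\ln n$-approximation for any fixed $\delta>0$. The plan is an approximation-preserving map from a \textsc{Set Cover} instance $(U,\mathcal{S})$ to an explicit $X\subseteq\{0,1\}^E$ with $|X| = n+1$. The identifying sets for $X$ should correspond exactly to set covers of the same size, perhaps with a small additive error. Then $\frac14 \ln|X| \le \frac14\ln(2n)$ is well below $(1-\delta)\ln n$, which leaves slack to absorb such errors.

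\textbf{Construction.} Let $E := \mathcal{S}$, so that there is one coordinate $e_T$ for each set $T\in\mathcal{S}$. Let $X$ consist of the zero vector $x^{(0)}$ together with one vector $x^{(u)}$ for each $u\in U$, defined by $x^{(u)}_{e_T}=1$ if and only if $u\in T$.

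\textbf{Forward direction.} If $S$ is identifying, then for each $u$ the pair $\{x^{(u)},x^{(0)}\}$ must be separated. This requires some $T\in S$ with $u\in T$, so every identifying set is a set cover.

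\textbf{Converse direction.} A set cover need not separate two elements $u,u'$ that lie in exactly the same chosen sets, so the converse fails in general. This is the main obstacle. Two fixes are possible.
\begin{itemize}
\item[(a)] Make the pair constraints between distinct $x^{(u)},x^{(u')}$ cheap by adding $\lceil\log_2 n\rceil$ extra coordinates that encode the binary index of $u$. These coordinates are zero on $x^{(0)}$. Any set cover plus these coordinates is then identifying, at additive cost $O(\log n)$.
\item[(b)] Remove the additive cost by amplification. Take many disjoint copies of the \textsc{Set Cover} instance, which multiplies the optimum, while the number of bit coordinates grows only logarithmically.
\end{itemize}

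To carry out (b) concretely, take $k$ disjoint copies with universes $U_1,\dots,U_k$ and set families $\mathcal{S}_1,\dots,\mathcal{S}_k$ on separate coordinates. Keep a single zero vector and $kn$ element vectors. Each element vector has support only in its own copy, plus $\lceil\log_2(kn)\rceil$ shared bit coordinates. Identifying sets then restrict to set covers in every copy, so $\mathrm{OPT}_X \ge k\cdot\mathrm{OPT}$. Conversely, $\mathrm{OPT}_X \le k\cdot\mathrm{OPT}+\log_2(kn)+1$. We have $|X| = kn+1$, and we choose $k$ polynomial in $n$, for instance $k=n$.

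\textbf{Conclusion.} Suppose there were a $\frac14\ln|X|$-approximation. It would yield an identifying set $S$ whose covers in the individual copies sum to at most $\frac14\ln(n^2+1)\,(n\,\mathrm{OPT}+O(\log n))$. Hence some copy has a cover of size at most $\bigl(\frac12+o(1)\bigr)\ln n\cdot \mathrm{OPT}+O(1)$. Because we may assume $\mathrm{OPT}$ grows (hard instances of Feige's reduction have large optimum), this contradicts Feige's bound. The key step to get right is the bookkeeping that the separation of element-element pairs is paid for with $O(\log(kn))$ extra coordinates, which becomes negligible relative to $k\cdot\mathrm{OPT}$.
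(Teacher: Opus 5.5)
\textbf{Gap: the lower bound fails once the bit coordinates are added.} Your overall strategy matches the paper's: reduce from \textsc{Set Cover}, amplify with about $n$ disjoint copies, pay for element--element pairs with $O(\log n)$ binary-index coordinates, and pigeonhole over the copies. The problem is that the forward direction you proved for the plain construction does not survive fix (a) or (b). With the bits present, $x^{(u)}$ and the zero vector $x^{(0)}$ also differ on every bit coordinate where the index of $u$ has a $1$, so they can be separated without any set coordinate. In construction (b), where $\lceil\log_2(kn)\rceil$ bits give each of the $kn$ element vectors a distinct index, the following set is identifying: all the bit coordinates, plus at most one set coordinate covering the element whose index is $0$. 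Distinct element vectors have distinct bit patterns, and every element vector with nonzero index differs from $x^{(0)}$ on some bit. Hence $\mathrm{OPT}_X \le \lceil\log_2(kn)\rceil+1$. Your claim that identifying sets restrict to set covers in every copy is therefore false, $\mathrm{OPT}_X \ge k\cdot\mathrm{OPT}$ fails, and the reduction yields no hardness.

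\textbf{The missing idea: shadow vectors.} Replace the single zero vector by \emph{shadow} vectors that carry the same bit patterns as the element vectors but are $0$ on all set coordinates. This is exactly the role of the vectors $\bar{x}^h$ in the paper. There, $\bar{x}^h$ is the binary encoding of $h$ on the bit coordinates and $0$ elsewhere, and the bits of the element vector $x^{u,i}$ (copy $i$, element $u$) encode $u$. So $x^{u,i}$ agrees with $\bar{x}^u$ everywhere except on the coordinates $e_{T,i}$ with $u\in T$, which forces every copy to be covered.

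Conversely, all bit coordinates together with a cover in each copy identify $X$. Pairs with different bit patterns are separated by a bit. The pair $x^{u,i}$ and $\bar{x}^u$ is separated by the cover in copy $i$. The pair $x^{u,i}$ and $x^{u,i'}$ is separated by a set of copy $i$'s cover that contains $u$. The paper uses $\bar{x}^h$ for all $h<2^{\lceil\log_2 n\rceil}$, which also forces every bit into $S$; only the shadows matching used patterns are needed for the reduction. With shadows, $|X|$ grows only by a constant factor. So $\frac14\ln|X|$ stays about $\frac12\ln n$ for about $n$ copies, and your pigeonhole calculation then goes through.
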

\begin{proof}
    Given an instance of \textsc{Set Cover} with ground set $U$ and set system $\mathcal{S} \subseteq 2^U$, we construct the following set $X$.
    First, without loss of generality, we assume that $U := \{0, \dots, n-1\}$ for $n := |U|$.
    We choose $k := \lceil \log_2 n \rceil$ and $\ell := 2^k$.
    We let $$E := \bigl\{e_{T, i} : T \in \mathcal{S}, i \in [\ell]\bigr\} \cup \bigl\{e'_j : j \in \{0, \dots, k - 1\}\bigr\}.$$
    For $h \in \{0, \dots, \ell - 1\}$, let $\bar{x}^h \in \{0, 1\}^E$ denote the unique vector with $\bar{x}^h_{e_{T, i}} = 0$ for all $T \in \mathcal{S}$ and all~$i~\in [\ell]$ and~$\sum_{j = 0}^{k-1} 2^j \bar{x}^h_{e'_j} = h$, i.e., the $e'_i$-components of $\bar{x}^h$ correspond to the binary encoding of~$h$.
    For $u \in U$ define $\smash{x^{u,i} \in \{0, 1\}^E}$ by 
    \begin{align*}
        x^{u,i}_{e} := 
        \begin{cases}
            1 & \text{ if } e = e_{T, i} \text{ and } u \in T,\\
            \bar{x}^u_e & \text{ if } e = e'_j \text{ for some } j \in \{0, \dots, k-1\},\\ 
            0 & \text{otherwise.}
        \end{cases}
    \end{align*}
    In particular, the components $e'_0, \dots, e'_{k-1}$ of $x^{u,i}$ contain the binary representation of the number $u$ and the components $e_{T,i}$ for $T \in \mathcal{S}$ contain the indicator $e \in T$.
    We let $$X := \bigl\{x^{u,i} : u \in U, i \in \{0, \dots, \ell-1\}\bigr\} \cup \bigl\{\bar{x}^h : h \in \{0, \dots, \ell-1\}\bigr\}.$$
    We make the following observation.
    \begin{claim}\label{clm:set-cover}
        A set $S \subseteq E$ is identifying for $X$ if and only if $e'_i \in S$ for all $i \in \{0, \dots, \ell - 1\}$ and~$\bigcup_{T \in \mathcal{S} : e_{T,i} \in S} T = U$ for all $i \in [\ell]$.
    \end{claim}
    \begin{proof}\renewcommand{\qedsymbol}{$\blacklozenge$}
        Note that for every $j \in \{0, \dots, k-1\}$ there are $h, h' \in \{0, \dots, \ell - 1\}$ with $h \neq h'$ such that~$\smash{\bar{x}^h_e = \bar{x}^{h'}_e}$ for all $e \in E \setminus \{e'_j\}$.
        Hence, any identifying set for $X$ must contain $\bar{e}_j$ for each~$j \in \{0, \dots, k-1\}$.
        Note further that, for each $u \in U$ and $i \in [\ell]$, it holds that $\smash{x^{u,i}_e = \bar{x}^h_e}$ for all~$e \in E \setminus \{e_{T_i} : u \in T\}$.
        Hence, any identifying set must contain some $e_{T,i}$ with $u \in T$ for each~$i \in [\ell]$.

        Conversely, consider any $x, x' \in X$ with $x \neq x'$.
        We show that if $S$ is of the form described in the statement of the claim, then there is $e \in S$ with $x_e \neq x_e$.
        If~$x_{e'_j} \neq x'_{e'_j}$ for some $j$ then we are done.
        If~$x_{e'_j} = x'_{e'_j}$ for some $j$, then there is $u \in U$ such that~$x, x' \in \{\bar{x}^u\} \cup \{x^{u,i} : i \in [\ell]\}$.
        Without loss of generality, assume that $x = x^{u,i}$ for some $i$.
        Then there is $T \in \mathcal{S}$ such that $u \in T$ and $e_{T,i} \in S$.
        Hence~$\smash{x_{e_{T,i}} = 1 \neq 0 = x'_{e_{T,i}}}$, because either $\smash{x' = \bar{x}^u}$ or $\smash{x' = x^{u,i'}}$ for some $i' \neq i$.
    \end{proof}
    Assume by contradiction that there is a $\frac{1}{4}\ln |X|$-approximation algorithm for finding a minimum-size identifying set.
    By \cref{clm:set-cover}, there is an identifying set for $X$ of size $k + \ell \cdot m^*$ where $m^*$ is the minimum size of a set cover.
    Hence the approximation algorithm will return an identifying set $S$ of size at most $\frac{1}{4}\ln |X|(k + \ell \cdot m)$.
    Also by \cref{clm:set-cover}, the set $\mathcal{S}'_i := \{T \in \mathcal{S} : e_{T,i} \in S\}$ is a set cover for each $i \in [\ell]$. 
    By the pigeonhole principle, there is some $i$ such that 
    \begin{multline*}
        |\mathcal{S}'_i|  \leq \frac{1}{4}\ln |X|\biggl(\frac{k}{\ell} +  m^* \biggr)= \frac{1}{4} \ln \bigl(\ell(|U| + 1)\bigr) \biggl(\frac{k}{\ell} +  m^* \biggr) 
         = \frac{1}{4} (\ln \ell + \ln (|U| + 1))) \biggl(\frac{k}{\ell} +  m^*\biggr) \\ = \frac{1}{4}  \cdot (\ln 2 \cdot \lceil \log_2 |U| \rceil + \ln (|U| + 1)) \biggl(\frac{k}{\ell} + m^* \biggr) \leq \frac{1}{4} \cdot 2 \cdot (1 + \ln |U|) (m^* + 1),
    \end{multline*}
    which is less than $\frac{18}{25}\ln |U| \cdot m^*$ for $|U| > 32$ and $m^* > 5$.
    Thus, a $\frac{1}{4}\ln |X|$-approximation algorithm for finding a minimum-size identifying sets implies a better-than-$\ln |U|$-approximation for \textsc{Set Cover} and thus $\mathsf{P} = \mathsf{NP}$.
\end{proof}

\newpage
\section*{Appendix}

\appendix

\section{Claims used in the proof of Theorem~\ref{thm:paths-hardness-DAGs}}
\label{app:hardness-DAGs}

\identifyingSetVertexCover*

\begin{proof}\renewcommand{\qedsymbol}{$\blacklozenge$}
        Beginning with $S' := S$, we repeat the following as long as there is some $e \in \bar{E}$, $v \in e$, and~$i \in [\ell]$ with $(u_e, v_i) \in S'$.
        We distinguish three cases.
        \begin{itemize}
            \item If $(s, u_e) \in S'$, then we replace $(u_e, v_i)$ by $(v_i, t)$ in $S'$. Note that $S'$ remains identifying for~$\smash{\Xpath}$ because the unique $s$-$t$-path containing $(u_e, v_i)$ is the unique $s$-$t$-path containing both $(s, u_e)$ and $(v_i, t)$.
            \item If $(s, u_e) \notin S'$ but $(s, u_{e'}) \in S'$ for all $e' \in \delta_{\bar{G}}(e') \setminus \{e\}$, then we replace $(u_e, v_i)$ by $(v_i, t)$ in $S'$. Note that $S'$ remains identifying for $\Xpath$ because the unique $s$-$t$-path containing $(u_e, v_i)$ is the unique $s$-$t$-path containing $(v_i, t)$ and not containing any of the edges $(s, u_{e'})$ for $e' \in \delta_{\bar{G}}(e') \setminus \{e\}$.
            \item If $(s, u_e) \notin S'$ and there is $e' \in \delta_{\bar{G}}(e') \setminus \{e\}$ with $(s, u_{e'}) \notin S'$, then $(u_e, v_i), (u_{e'}, v_i) \in S'$ for all $i \in \{1, \dots, k\}$, as otherwise the paths $s$-$u_e$-$v_i$-$t$ and $s$-$u_{e'}$-$v_i$-$t$ are indistinguishable under $S'$ (which is identifying by induction).
            We replace $\{(u_e, v_i), (u_{e'}, v_i) : i \in [\ell]\}$ by $\{(s, u_e), (s, u_{e'})\} \cup \{(v_i, t) : i \in [\ell]$ in $S'$.
            Again, the set $S'$ remains identifying for $\Xpath$ after this transformation because the unique $s$-$t$-path containing $(u_e, v_i)$ is the unique $s$-$t$-path containing both $(s, u_e)$ and $(v_i, t)$, and likewise, the unique $s$-$t$-path containing $(u_{e'}, v_i)$ is the unique $s$-$t$-path containing both $(s, u_{e'})$ and $(v_i, t)$.
            Moreover, we replaced $2\ell$ arcs by $2 + \ell$ arcs, thus not increasing the size of $S'$.
        \end{itemize}
        Thus, we obtain an identifying set $S'$ with $|S'| \leq |S|$ and $S \cap E' = \emptyset$.
        Now assume there is $i \in [\ell]$ such that $U_i$ is not a vertex cover in $\bar{G}$.
        Then there must be $e = \{v, w\} \in \bar{E}$ such that $v, w \notin U_i$, i.e., $(v_i, t), (w_i, t) \notin S$.
        But then the two paths $s$-$u_e$-$v_i$-$t$ and $s$-$u_e$-$w_i$-$t$ are indistinguishable under $S'$ as none of the arcs $(u_e, v_i)$, $(v_i, t)$, $(u_e, w_i)$, $(w_i, t)$ is contained in $S'$.
        This is a contradiction as $S'$ is identifying for $\smash{\Xpath}$.
        Thus each $U_i$ must be a vertex cover in $\bar{G}$.
    \end{proof}

\section{Claims used in the proof of Theorem~\ref{thm:paths-flow-gap}}
\label{app:claim-proofs}

\distinctPairs*

\begin{proof}\renewcommand{\qedsymbol}{$\blacklozenge$}
        By contradiction, assume there are $e = (v, w), e' = (v', w') \in S' \setminus S$ with $v_s(e) = v_s(e')$ and~$v_t(e) = v_t(e')$ but $e \neq e'$.
        Let $Q$ be an arbitrary $w$-$v_t(e)$-path in $(V, E \setminus S)$ and let $Q'$ be an arbitrary $w'$-$v_t(e')$-path in $(V, E \setminus S)$.
        Let $R$ be an arbitrary $v_t(e)$-$t$-path.
        Let $P$ be the $s$-$t$-path resulting from the concatenation of $T[s, v]$, $e = (v, w)$, $Q$, and $R$.
        Let $P'$ be the $s$-$t$-path resulting from the concatenation of $T[s, v']$, $e' = (v', w')$, $Q'$, and $R$.
        Note that, by construction, $P[s, v_s(e)] = P'[s, v_s(e')]$ and $P[v_t(e), t] = P'[v_t(e'), t]$ for the respective subpaths from $s$ to $v_s(e)$ and from $v_t(e)$ to $t$.
        Moreover, $P[v_s(e), v_t(e)] \cap S = \emptyset =
        P'[v_s(e'), v_t(e')]$ by construction. 
        Therefore $P \cap S = P' \cap S$ and hence $P = P'$, because $S$ is identifying for $\Xpath$.
        Without loss of generality, assume that $P$ traverses $e$ before~$e'$.
        Then $e \in T[s, v']$, a contradiction to $e \in S' = E \setminus T$.
    \end{proof}

\numberOfPairs*
    
    \begin{proof}\renewcommand{\qedsymbol}{$\blacklozenge$}
        For \eqref{prop:topological}, simply note that $G$, by definition of $v_s(e)$ and $v_t(e)$, contains a $v_s(e)$-$v_t(t)$-path and hence $\psi(v_s(e)) \leq \psi(v_t(e))$ by consistency of the topological ordering.
        For \eqref{prop:no-path}, consider any $e = (v, w) \in S' \setminus S$ and construct the following two paths $P$ and $P'$ in $G$:  
        Let $Q$ be an arbitrary $w$-$v_t(e)$-path in $(V, E \setminus S)$ and let $R$ be an arbitrary $v_t(e)$-$t$-path in $G$.
        Let $P$ be the $s$-$t$-path resulting from the concatenation of $T[s, v]$, $e$, $Q$, and $R$.
        Let $P'$ be the $s$-$t$-path resulting from the concatenation of $T[s, v_t(e)]$ and $R$.
        Note that by construction $P \cap S = P' \cap S'$, as $T[v_s(e), v_t(e)] \cap S = \emptyset$. Furthermore $P \neq P'$, because $e \in S' = E \setminus T$ and thus $e \in P$ but $e \notin P'$.
        This is contradiction to $S$ being identifying for $\Xpath$.
    \end{proof}

\section{Proof of Theorem~\ref{thm:paths-sigma-2-p}}
\label{app:paths-sigma-2-p}

We first establish hardness for the following auxiliary problem, which we call \textsc{Forbidden Pairs $s$-$t$-Path Interdiction}:
Given a directed graph $D = (V, E)$ with two vertices $s, t \in V$, a set $\mathcal{F} \subseteq 2^E$ with $|F| = 2$ for all $F \in \mathcal{F}$, and $k \in \mathbb{N}$, find a set $S \subseteq E$ with $|S| \leq k$ such that $P \cap S \neq \emptyset$ for all $P \in \mathcal{P}_{\mathcal{F}}$, where $$\mathcal{P}_{\mathcal{F}} := \bigl\{P \subseteq E : P \text{ is an $s$-$t$-path in $D$}, F \setminus P \neq \emptyset \text{ for all } F \in \mathcal{F}\bigr\}.$$
To see that this problem is $\Sigma\mathsf{^P_2}$-complete, we reduce from \textsc{2-Quantified 3-DNF-Satisfiability}.

\begin{theorem}\label{thm:sigma-2-p-forbidden-pairs-interdiction}
    \textsc{Forbidden Pairs $s$-$t$ Path Interdiction} is $\Sigma\mathsf{^P_2}$-complete.
    This holds even when restricted to instances where $D$ is a directed acyclic graph and for every $e \in E$ there is $P \in \mathcal{P}_{\mathcal{F}}$ with $e \in P$.
\end{theorem}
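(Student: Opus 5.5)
Membership in $\Sigma\mathsf{^P_2}$ is immediate: the certificate is the set $S$ with $|S| \leq k$. Checking it is a $\mathsf{coNP}$ question, namely whether every $s$-$t$-path $P$ that avoids all forbidden pairs meets $S$. A counterexample is a single path, and it can be verified in polynomial time. The substance is hardness. I will reduce from \textsc{2-Quantified 3-DNF-Satisfiability}: given $\exists x\,\forall y\;\varphi(x,y)$ with $\varphi = \bigvee_{j} C_j$ a 3-DNF, decide truth. Equivalently, ask whether there is $x$ such that $\neg\varphi(x,\cdot)$, a 3-CNF $\bigwedge_j \neg C_j$, is unsatisfiable.

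The plan is to build a DAG as a chain of gadgets. First come \emph{$x$-gadgets}: for each $x_i$ there are two parallel bundles of arcs, a ``true'' and a ``false'' bundle, each consisting of $k+1$ parallel copies (or long series chains, to keep the graph simple via subdivision). In addition there is one designated \emph{interdictable} arc per literal. Next come \emph{$y$-gadgets}: for each $y_l$ there are two parallel single arcs $a_{y_l}$ and $a_{\bar y_l}$, which make these arcs uninterdictable only through a budget argument. Finally come \emph{clause gadgets}: for each term $C_j$ of $\varphi$, the path must choose one of three parallel arcs, each corresponding to a literal $\ell$ of $C_j$ and representing ``$\ell$ is false'', so that the chosen arc certifies that $\neg C_j$ holds. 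Forbidden pairs encode consistency. A clause arc saying ``$y_l$ is false'' is paired with $a_{y_l}$, so a valid path cannot both set $y_l$ true and use it to falsify $C_j$. For $x$-literals, I will let the interdictor's choice play the role of the assignment. With budget $k = n_x$, the interdictor must delete exactly one of the two cheap arcs $e_{x_i}, e_{\bar x_i}$ in each $x$-gadget, while all other arcs are made effectively uninterdictable by parallel duplication, and each $x$-gadget sits in series so that deleting its cheap arcs is the only option. To couple this to the clauses, I will route clause arcs for $x$-literals through forbidden-pair constraints with the $x$-gadget arcs the path used. The surviving path in the $x$-gadget thus records the assignment, and forbidden pairs prevent falsifying an $x$-literal inconsistently with it.

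Correctness then reads as follows. Valid paths that survive $S$ correspond exactly to assignments $y$ together with a falsified literal in every term, consistent with the chosen $x$, i.e.\ to $y$ with $\neg\varphi(x,y)$. Hence an interdicting $S$ of size $k$ exists iff some $x$ makes $\varphi(x,\cdot)$ a tautology. The main obstacle is the $x$-gadget. A path can bypass an $x$-gadget only if one of its two sides is undeleted, so deleting one side forces the path through the other, which fixes the assignment. I must also rule out cheaper interdictions that, for instance, cut clause or $y$ arcs. I will handle this by giving every non-$x$ arc $k+1$ parallel copies, subdivided to keep the graph simple. Each copy participates in the same forbidden pairs, so the construction multiplies pairs accordingly, which is fine. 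I must also ensure that deleting both sides of some $x_i$ is wasteful: that kills all paths only if budget allows, so I set $k = n_x$ and require each gadget to lose at least one arc, via a padding argument or an extra forcing step.

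Finally, for the additional property that every arc lies on some valid path, I will add, for each term, a fresh ``escape'' structure, or verify directly that each literal arc is usable under some assignment. Trivially false or contradictory terms such as $y_l\wedge\bar y_l$ are removed in preprocessing, so that each clause arc lies on a consistent path. Acyclicity holds by the series construction.
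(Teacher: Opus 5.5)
Your membership argument is fine. Your $y$- and clause gadgets also essentially match the paper's: a clause arc means ``this literal is false'' and is forbidden together with the branch of its variable's gadget that makes the literal true.

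The gap is the step you defer, namely forcing the interdictor to spend its budget on exactly one literal per $x$-variable. Your $x$-gadgets lie in series on every $s$-$t$-path, and each side contains its cheap arc. So the set $\{e_{x_1}, e_{\bar x_1}\}$ meets every $s$-$t$-path, valid or not. With $k = n_x \geq 2$ this set is affordable, so every reduced instance is a yes-instance, and the direction ``interdicting set $\Rightarrow$ winning $x$'' fails. Parallel copies only protect the non-$x$ arcs. No padding or reweighting of a series chain helps either: if cutting the two sides of gadget $i$ costs $a_i$ and $b_i$, you would need $a_i + b_i > k \geq \sum_{i'} \max(a_{i'}, b_{i'})$ for all $i$. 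This forces both $\min(a_1,b_1) > \max(a_2,b_2)$ and $\min(a_2,b_2) > \max(a_1,b_1)$, a contradiction.

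The missing idea is to make the forcing structures themselves \emph{valid} paths. For each $x_i$ the paper adds arcs $(s,u_{i,\text{true}})$, $(u'_{i,\text{true}},u_{i,\text{false}})$, $(u'_{i,\text{false}},t)$. Then $Q_i$, the path $s$-$u_{i,\text{true}}$-$u'_{i,\text{true}}$-$u_{i,\text{false}}$-$u'_{i,\text{false}}$-$t$, traverses \emph{both} branch arcs of gadget $i$. The paper also adds the forbidden pairs $\{(s,u_1),(u'_{i,\text{false}},t)\}$, $\{(s,u_{i,\text{true}}),(u'_{i,\text{true}},u_{i+1})\}$ and $\{(s,u_{i,\text{true}}),(u'_{i,\text{false}},u_{i+1})\}$. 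These ensure that assignment paths cannot exit early and that a path entering through a shortcut can only be $Q_i$. Since $Q_1,\dots,Q_n \in \mathcal{P}_{\mathcal{F}}$ are pairwise arc-disjoint, an interdicting set of size at most $n$ consists of exactly one arc from each $Q_i$ and nothing else. This rules out cutting both branches of a gadget. It also makes all $y$- and clause arcs uninterdictable, so your duplication becomes unnecessary. The assignment is then read off from which branch arc of $Q_i$ lies in $S$.

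Separately, deleting contradictory terms does not make every clause arc lie on a valid path. The arc for literal $z$ of $C_j$ lies on one only if some assignment with $z$ false falsifies every term of $\varphi$.
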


\begin{proof}
    We start from an instance of \textsc{2-Quantified 3-DNF-Satisfiability}, i.e., we are given a Boolean formula $\phi$ in disjunctive normal form over variables $x_1, \dots, x_n$ and $y_1, \dots, y_m$ with the goal to decide whether there is an assignment to $x_1, \dots, x_n$ such that $\phi(x_1, \dots, x_n, y_1, \dots, y_m)$ is true for all assignments of the variables $y_1, \dots, y_m$.
    Let $C_1, \dots, C_{\ell}$ denote the conjunctive clauses of~$\phi$, i.e.,~$\phi = C_1 \vee \dots \vee C_{\ell}$.
    For convenience of notation, we identify each clause with the set of its literals.
    For example, if $C_1 = x_1 \wedge \neg x_2 \wedge x_3$, then $x_1, \neg x_2, x_3 \in C_1$.
    
    We construct the following instance of \textsc{Forbidden Pairs $s$-$t$-Path Interdiction}.
    The digraph~$D$ has the nodes 
    \begin{itemize}
        \item $s$, $t$,
        \item $u_i, u_{i,\text{true}}, u'_{i,\text{true}}, u_{i,\text{false}}, u'_{i,\text{false}}$ for each $i \in [n]$,
        \item $v_i, v_{i,\text{true}}, v_{i,\text{false}}$ for each $i \in [m]$,
        \item $w_j$ and $w_{j,z}$ for each $j \in [\ell]$ and each literal $z \in C_j$.
    \end{itemize}
    We identify $u_{n+1} := v_1$, $v_{m+1} := w_1$, and $w_{\ell+1} := t$.
    The digraph has the following arcs:
    \begin{itemize}
        \item the arc $(s, u_1)$,
        \item for each $i \in \{1, \dots, n\}$ the arcs
        \begin{itemize}
            \item $(u_i, u_{i,\text{true}})$, $(u_{i,\text{true}}, u'_{i,\text{true}})$, and $(u'_{i,\text{true}}, u_{i+1})$,
            \item $(u_i, u_{i,\text{false}})$, $(u_{i,\text{false}}, u'_{i,\text{false}})$, and $(u'_{i,\text{false}}, u_{i+1})$,
            \item $(s, u_{i,\text{true}})$, $(u'_{i,\text{true}}, u_{i,\text{false}})$, and $(u'_{i,\text{false}}, t)$,
        \end{itemize}
        \item for each $i \in \{1, \dots, m\}$ the arcs
        \begin{itemize}
            \item $(v_i, v_{i,\text{true}})$ and $(v_{i,\text{true}}, v_{i+1})$,
            \item $(v_i, v_{i,\text{false}})$ and $(v_{i,\text{false}}, v_{i+1})$,
        \end{itemize}
        \item for each $j \in [\ell]$ and each literal $z \in C_j$ the arcs $(w_j, w_{j,z})$ and $(w_{j,z}, w_{j+1})$.
    \end{itemize}
    The forbidden pairs are
    \begin{itemize}
        \item for each $i \in \{1, \dots, n\}$:
        \begin{itemize}
            \item $\{(s, u_1), (u'_{i,\text{false}}, t)\}$,
            \item $\{(s, u_{i,\text{true}}), (u'_{i,\text{true}}, u_{i+1})\}$,
            \item $\{(s, u_{i,\text{true}}), (u'_{i,\text{false}}, u_{i+1})\}$ ,
        \end{itemize}
        \item and for each $j \in [\ell]$ and $z \in C_j$:
        \begin{itemize}
            \item $\{(u_i, u_{i,\text{false}}), (w_j, w_{j,\neg x_i})\}$ if $z = \neg x_i$,
            \item $\{(u_i, u_{i,\text{true}}), (w_j, w_{j, x_i})\}$ if $z = x_i$,
            \item $\{(v_i, v_{i,\text{false}}), (w_j, w_{j,\neg y_i})\}$ if $z = \neg y_i$,
            \item $\{(v_i, v_{i,\text{true}}), (w_j, w_{j, y_i})\}$ if $z = y_i$.
        \end{itemize}
    \end{itemize}

    We show that there is $S \subseteq E$ with $|S| \leq n$ and $P \cap S \neq \emptyset$ for all $P \in \mathcal{P}_{\mathcal{F}}$ if and only if there is a an assignment of the $x$-variables that such that $\phi(x, y)$ is true for any assignment of the $y$-variables.

    We start by making an observation on the structure of paths in $\mathcal{P}_{\mathcal{F}}$.
    Consider any $P \in \mathcal{P}_{\mathcal{F}}$.
    
    Note that if $(s, u_{i,\text{true}}) \in P$ for some $i \in [n]$ then $P$ must correspond to the path $Q_i$ corresponding to the node sequence $s$-$u_{i,\text{true}}$-$u'_{i,\text{true}}$-$u_{i,\text{false}}$-$u'_{i,\text{false}}$-$t$. 
    This is because the forbidden pairs $\{(s, u_{i,\text{true}}), (u'_{i,\text{true}}, u_{i+1})\}$ and $\{(s, u_{i,\text{true}}), (u'_{i,\text{false}}, u_{i+1})\}$ force the path to use $(u'_{i,\text{false}}, t)$ to reach $t$.
    
    Consider any $j \in [\ell]$ and $z \in C_j$ such that $(w_j, w_{j,z}) \in P$.
    \begin{itemize}
    \item If $z = \neg x_i$ for $i \in [n]$, then forbidden pair $\{(u_i, u_{i,\text{false}}), (w_j, w_{j,\neg x_i})\}$ forces $(u_{i,\text{true}}, u'_{i,\text{true}}) \in P$.
    \item If $z = x_i$ for $i \in [n]$, then forbidden pair $\{(u_i, u_{i,\text{true}}), (w_j, w_{j, x_i})\}$ forces $(u_{i,\text{false}}, u'_{i,\text{false}}) \in P$.
    \item If $z = \neg y_i$ for $i \in [m]$, then forbidden pair $\{(v_i, v_{i,\text{true}}), (w_j, w_{j, \neg y_i})\}$ forces $(v_i, v_{i,\text{false}}) \in P$.
    \item If $z = y_i$ for $i \in [m]$, then forbidden pair $\{(v_i, v_{i,\text{true}}), (w_j, w_{j, y_i})\}$ forces $(v_i, v_{i,\text{false}}) \in P$.
    \end{itemize}
    Consider the assignment to the variables corresponding to the choices of the path $P$ (i.e., $x_i = \text{true}$ if $(u_{i,\text{true}}, u'_{i,\text{true}}) \in P$ etc.).
    Then, by the above, for each $j \in [\ell]$, the literal of the clause $C_j$ is not fulfilled by this assignment, and therefore also the entire formula is not fulfilled.
    Thus every path in $\mathcal{P}_{\mathcal{F}}$ is either of the form $Q_i$ (with node sequence $s$-$u_{i,\text{true}}$-$u'_{i,\text{true}}$-$u_{i,\text{false}}$-$u'_{i,\text{false}}$-$t$) or corresponds to an assignment of $x$ and $y$ variables that does not fulfill $\phi$.
    
    Now assume there is an assignment of the $x$-variables that such that $\phi(x, y)$ is true for any assignment of the $y$-variables.
    Let $T \subseteq [n]$ denote the set of indices of those variables set to true in the assignment.
    We let $S := \{(u_{i,\text{false}}, u'_{i,\text{false}}) : i \in T\} \cup \{(u_{i,\text{true}}, u'_{i,\text{true}}) : i \in [n] \setminus T\}$.

    Consider any $P \in \mathcal{P}_{\mathcal{F}}$.
    If $P = Q_i$ for some $i \in [n]$ then $P$ contains both $(u_{i,\text{true}}, u'_{i,\text{true}})$ and $(u_{i,\text{false}}, u'_{i,\text{false}})$ and therefore $P \cap S \neq \emptyset$. 
    If $P$ corresponds to an assignment not satisfying the clauses, then there must be at least one $i \in [n]$ for which $i \in T$ and $(u_{i,\text{false}}, u'_{i,\text{false}}) \in P$ or $i \notin T$ and $(u_{i,\text{true}}, u'_{i,\text{true}}) \in P$; otherwise the $x$-part of the assignment given by $P$ corresponds to the fulfilling assignment described by $T$, a contradiction.
    Hence $P \cap S \neq \emptyset$ also in this case.

    Finally, consider any set $S \subseteq E$ with $|S| \leq n$ such that $P \cap S \neq \emptyset$ for all $P \in \mathcal{P}_{\mathcal{F}}$.
    Note that the paths $Q_i$ for $i \in [n]$ are arc disjoint and hence $|Q_i \cap S| = 1$ for each $i \in [n]$.
    In particular, this implies for every $i \in [n]$, the set $S$ contains at most one of the arcs $(v_i, v_{i,\text{false}})$, $(v_i, v_{i,\text{true}})$.
    Consider the assignment of the $x$-variables that sets $x_i = \text{true}$ for each $i \in [n]$ with $(v_i, v_{i,\text{false}}) \in S$ and that sets $x_i = \text{false}$ for all other $i$.
    By contradiction assume there is an assignment using the above truth values for $x$ and sets the $y$-variables such that $\phi(x, y)$ is false.
    Let $P \in \mathcal{P}_{\mathcal{F}}$ be the path corresponding to this assignment.
    Then $P \cap S \neq \emptyset$ implies there is $i \in [n]$ with $(v_i, v_{i,\text{false}}) \in S \cap P$ or $(v_i, v_{i,\text{true}}) \in S \cap P$.
    If $(v_i, v_{i,\text{false}}) \in S \cap P$, we have set $x_i = \text{true}$, which implies $(u_{i,\text{true}}, u'_{i,\text{true}}) \in P$, a contradiction.
    Symmetrically, if $(v_i, v_{i,\text{true}}) \in S \cap P$, we have set $x_i = \text{false}$, which implies $(u_{i,\text{false}}, u'_{i,\text{false}}) \in P$, a contradiction.
    We have constructed an assignment for the $x$-variables such that no assignment for the $y$-variables fulfills $\phi$.
\end{proof}

\begin{theorem}\label{thm:sigma-2-p-forbidden-pairs-identification}
    The following problem is $\Sigma\mathsf{^P_2}$-complete: Given a directed acyclic graph $D = (V, E)$ with two vertices $s, t \in V$, a set $\mathcal{F} \subseteq \binom{E}{2}$, a set $\bar{E} \subseteq E$, and $k \in \mathbb{N}$, find a set $S \subseteq E$ with $|S| \leq k$ such that $S \cup \bar{E}$ is identifying for $X := \{\mathbbm{1}_{P} : P \in \mathcal{P}_{\mathcal{F}}\}$.
    This holds even when restricted to instances where for each $e \in E$, there is $P \in \mathcal{P}_{\mathcal{F}}$ with $e \in P$ and, in addition, $F \cap \bar{E} \neq \emptyset$ for each $F \in \mathcal{F}$.
\end{theorem}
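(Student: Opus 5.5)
Membership in $\Sigma\mathsf{^P_2}$ is immediate: guess $S$ with $|S| \leq k$, then use a $\mathsf{coNP}$ check that there are no two distinct $P, P' \in \mathcal{P}_{\mathcal{F}}$ with $(P \,\Delta\, P') \cap (S \cup \bar{E}) = \emptyset$. A witness against $S$ is such a pair of paths, and it can be verified in polynomial time.

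For hardness, I would reduce from \textsc{Forbidden Pairs $s$-$t$-Path Interdiction} as stated in \cref{thm:sigma-2-p-forbidden-pairs-interdiction}. Take an instance $(D, s, t, \mathcal{F}, k)$ in which $D$ is acyclic and every arc lies on a path of $\mathcal{P}_{\mathcal{F}}$. Build $D'$ by adding one new node $t'$, one arc $a = (t, t')$, and a ``bypass'' arc $b = (s, t')$, and use $t'$ as the new sink. The paths of $D'$ are $\{b\}$ together with $P \cup \{a\}$ for $P \in \mathcal{P}_{\mathcal{F}}$. Interdiction asks to hit every valid path, which is exactly distinguishing each such path from the single bypass path. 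The remaining issue is that the identifying condition also requires distinguishing valid paths from one another.

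To neutralize those pairwise comparisons, I would put into $\bar{E}$ enough arcs so that any two distinct valid paths already differ on $\bar{E}$, while no arc of $\bar{E}$ helps separate a valid path from $b$. The natural way is to subdivide each arc $e$ of $D$ into two arcs $e^1, e^2$. I put $e^2$ in $\bar{E}$ and keep $e^1$ as the interdictable copy, and I rewrite the forbidden pairs on the $e^2$-copies so that $F \cap \bar{E} \neq \emptyset$ holds. Two distinct valid paths then differ on some $e^2 \in \bar{E}$.

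The obstacle is that $P \cup \{a\}$ also differs from $\{b\}$ on every $e^2 \in P$. So $\bar{E}$ must not contain those arcs unless their effect is compensated. I expect this step to be the main difficulty. A fix I would try first is to duplicate the whole graph: take two copies $D_1, D_2$ of $D$ in parallel between $s$ and $t'$. In $D_1$, every arc is placed in $\bar{E}$, and in $D_2$ no arc is, up to the distinguishing needs. I would then use forbidden pairs to link the two copies into a single combined path structure, so that every valid path traverses both copies with the same underlying route. Then $\bar{E}$ separates any two valid paths with different routes, and the only remaining comparisons are the ones interdiction handles.

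If that linking gets too intricate, I would fall back on working directly with the structure of the interdiction instance from \cref{thm:sigma-2-p-forbidden-pairs-interdiction}. There, the pairs of valid paths that are not separated by a fixed set of arcs, for example the $w$-layer and $v$-layer arcs, can be controlled by hand. I would then place all those arcs in $\bar{E}$, and check that $\bar{E}$ intersects every forbidden pair, since each pair involves a $w$-arc or $(s,u_1)$/$(s,u_{i,\text{true}})$-type arcs. Finally, I would verify that the minimum $|S|$ is unchanged, namely $n$. Throughout, I would check that every arc still lies on some valid path.
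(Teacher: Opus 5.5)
Your membership argument is fine, and you correctly identify the obstacle. However, the fix you propose does not remove it, so the hardness part has a genuine gap. For the reduction to work, each valid path $P$ must have a partner path that satisfies two conditions. It must (a) agree with $P$ on $\bar{E}$, since otherwise $\bar{E}$ separates them for free. It must also (b) differ from $P$ only on arcs that depend on the route of $P$, since otherwise one shared arc separates all such pairs at cost $1$. A single global bypass arc $b$ violates (a) or (b) wherever you place it.

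Take your sketch: an all-$\bar{E}$ copy $D_1$ and an interdictable copy $D_2$, both traversed by every valid path along the same route, plus $b=(s,t')$. The path $\{b\}$ differs from every valid path on $D_1 \subseteq \bar{E}$, and valid paths with different routes also differ on $D_1$, so $S=\emptyset$ is already identifying. If instead $b$ only bypasses $D_2$, then $P^1 \circ P^2$ and $P^1 \circ b$ always differ on $b$, so $S=\{b\}$ is identifying. Either way the instances become trivial. Also, two copies placed in parallel cannot both be traversed by one path, so the construction as written is not consistent.

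Your fallback is not a reduction either. Once the listed arcs are in $\bar{E}$, what remains is to separate valid paths with identical $\bar{E}$-traces from one another, e.g.\ falsifying assignments that differ only in the $x$-part. That is not the interdiction condition, and nothing shows the optimum is at most $n$ exactly on yes-instances.

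The missing idea is to make the partner path route-dependent by using a second synchronized copy outside $\bar{E}$. The paper takes three copies $D_0, D_1, D_2$ of $D$:
\begin{itemize}
\item $D_0$ and $D_1$ are placed in parallel, sharing their source, with their common sink identified with the source of $D_2$;
\item $D_2$ follows in series;
\item $\bar{E} := E_2$ and $\mathcal{F}$ is copied into $D_2$;
\item the pairs $\{e_0, e'_2\}$ and $\{e_1, e'_2\}$ are added for all distinct arcs $e, e'$ with a common tail.
\end{itemize}
These pairs force the route through $D_0$ or $D_1$ to coincide with the route through $D_2$. So the valid paths are exactly $P_0, P_1$ for $P \in \mathcal{P}_{\mathcal{F}}$.

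Valid paths with different routes differ on $E_2 = \bar{E}$. By contrast, $P_0 \,\Delta\, P_1$ is exactly the union of the copies of $P$ in $D_0$ and $D_1$ and avoids $\bar{E}$. Hence $S' \cup \bar{E}$ is identifying if and only if $\{e : e_0 \in S' \text{ or } e_1 \in S'\}$ meets every $P \in \mathcal{P}_{\mathcal{F}}$. This gives the equivalence with the same budget $k$; for the converse, take $S' = \{e_0 : e \in S\}$. The side conditions are immediate: acyclicity, every arc lying on a valid path, and every pair meeting $\bar{E}$.
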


\begin{proof}
    We prove the theorem by reduction from \textsc{Forbidden Pairs $s$-$t$ Path Interdiction}.
    Consider an instance of this problem given by a digraph $D = (V, E)$, $s, t \in V$, $\mathcal{F} \subseteq \binom{E}{2}$, and $k \in \mathbb{N}$.
    By \cref{thm:sigma-2-p-forbidden-pairs-interdiction}, we can assume that $D$ is a directed acyclic graph and and for every $e \in E$ there is $P \in \mathcal{P}_{\mathcal{F}}$ with $e \in P$.
    For the instance of the problem described in \cref{thm:sigma-2-p-forbidden-pairs-identification}, we construct a digraph $D' = (V', E')$ as follows: 
    Let $D_0 = (V_0, E_0), D_1 = (V_1, E_1), D_2 = (V_2, E_2)$ be three copies of $D$ with disjoint arc sets $E_0, E_1, E_2$ and disjoint node sets $V_0, V_1, V_2$, except that $s_0 = s_1$ and $t_0 = t_1 = s_2$, where $s_0, s_1, s_2$ and $t_0, t_1, t_2$ denote the respective copies of $s$ and $t$.
    For $e \in E$ and $i \in \{0, 1, 2\}$, let further $e_i$ denote the corresponding copy of arc $e$ in $E_i$.
    We define the set $\mathcal{F}' \subseteq \binom{E'}{2}$ of forbidden arc pairs by including for each pair $\{e, e'\} \in \mathcal{F}$ the pair $\{e_2, e'_2\}$ in $\mathcal{F}'$ and furthermore for all $v \in V$ and all $e = (v, w), e' = (v, u') \in E$ with $e \neq e'$ we introduce the forbidden pairs $e_0, e'_2$ and $e_1, e'_2$.

    We define $s' := s_0 = s_1$ and $t' := t_2$ and let $\mathcal{P}'_{\mathcal{F}'}$ denote the set of $s'$-$t'$-paths $P$ in $D'$ such that~$|F \cap P| \leq 1$ for all $F \in \mathcal{F}'$. Let $X'$ be the set of incidence vectors of paths in $\mathcal{P}'_{\mathcal{F}'}$.
    Let further~$\bar{E} := E_2$. Note that this implies $F \cap \bar{E} \neq \emptyset$ for all $F \in \mathcal{F}$.
    
    We show that there is a set $S' \subseteq E'$ such that $|S'| \leq k$ and $S' \cup \bar{E}$ is identifying for $X'$ if and only if there is a set $S \subseteq E$ with $|S| \leq k$ such that $P \cap S \neq \emptyset$ for all $P \in \mathcal{P}_{\mathcal{F}}$. 
    For this, it is useful to define for any $P \in \mathcal{P}_{\mathcal{F}}$ and $i \in \{0, 1\}$, the path $P_i$ as the $s'$-$t'$-path in $D'$ consisting of the concatenation of the copy of $P$ in $D_i$ and the copy of $P$ in $D_2$.
    We first prove the following claim.

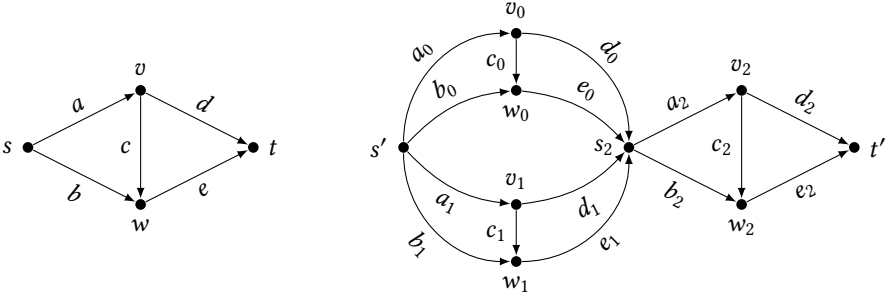
\begin{figure}
    \centering

    \begin{tikzpicture}
        \node[node, label=left:{$s$}] (s) at (0,0) {};
        \node[node, label=above:{$v$}] (v) at (1.5, 0.75) {};
        \node[node, label=below:{$w$}] (w) at (1.5, -0.75) {};
        \node[node, label=right:{$t$}] (t) at (3,0) {};
        
        \draw[-latex] (s) -- node[above, sloped] {$a$} (v);
        \draw[-latex] (s) -- node[below, sloped] {$b$} (w);
        \draw[-latex] (v) -- node[left] {$c$} (w);
        \draw[-latex] (v) -- node[above, sloped] {$d$} (t);
        \draw[-latex] (w) -- node[below, sloped] {$e$} (t);

        \node[node, label=left:{$s'$}] (s0) at (5,0) {};
        \path (s0)
            ++(1.5, 1.5) node[node, label=above:{$v_0$}] (v0) {}
            ++(0, -0.75)  node[node, label=below:{$w_0$}] (w0) {}
            ++(0, -1.5) node[node, label=above:{$v_1$}] (v1) {}
            ++(0, -0.75)  node[node, label=below:{$w_1$}] (w1) {}
            ++(1.5, 1.5) node[node, label=left:{$s_2$}] (s2) {};
        
        \draw[-latex, bend left=45] (s0) edge node[above, sloped] {$a_0$} (v0);
        \draw[-latex, bend left=20] (s0) edge node[above, sloped] {$b_0$} (w0);
        \draw[-latex] (v0) -- node[left] {$c_0$} (w0);
        \draw[-latex, bend left=45] (v0) edge node[above, sloped] {$d_0$} (s2);
        \draw[-latex, bend left=20] (w0) edge node[above, sloped] {$e_0$} (s2);
        
        \draw[-latex, bend right=20] (s0) edge node[below, sloped] {$a_1$} (v1);
        \draw[-latex, bend right=45] (s0) edge node[below, sloped] {$b_1$} (w1);
        \draw[-latex] (v1) -- node[left] {$c_1$} (w1);
        \draw[-latex, bend right=20] (v1) edge node[below, sloped] {$d_1$} (s2);
        \draw[-latex, bend right=45] (w1) edge node[below, sloped] {$e_1$} (s2);
        
        \path (s2)
            ++(1.5, 0.75) node[node, label=above:{$v_2$}] (v2) {}
            ++(0, -1.5)  node[node, label=below:{$w_2$}] (w2) {}
            ++(1.5, 0.75) node[node, label=right:{$t'$}] (t2) {};
        
        \draw[-latex] (s2) -- node[above, sloped] {$a_2$} (v2);
        \draw[-latex] (s2) -- node[below, sloped] {$b_2$} (w2);
        \draw[-latex] (v2) -- node[left] {$c_2$} (w2);
        \draw[-latex] (v2) -- node[above, sloped] {$d_2$} (t2);
        \draw[-latex] (w2) -- node[below, sloped] {$e_2$} (t2);
    \end{tikzpicture}
    
    \caption{Example for the construction in the proof of \cref{thm:sigma-2-p-forbidden-pairs-identification}. The left figure shows the digraph~$D$ and the right figure shows the corresponding digraph~$D'$ where $s' = s_0 = s_1$, $t_0 = t_1 = s_2$, and $t' = t_2$. The set of forbidden pairs $\mathcal{F} = \big\{ \{a, e\} \big\}$ in $D$ results in the forbidden pairs $\mathcal{F}' = \big\{\{a_2, e_2\}, \{a_0, b_2\}, \{a_1, b_2\}, \{b_0, a_2\}, \{b_1, a_2\}, \{c_0, d_2\}, \{c_1, d_2\}, \{d_0, c_2\}, \{d_1, c_2\} \big\}$ in $D'$.}
    \label{fig:sigma-2-p-forbidden-pairs-identification}
\end{figure}
    
    \begin{claim}
        For any $P \in \mathcal{P}_{\mathcal{F}}$, we have $P_0, P_1 \in \mathcal{P}'_{\mathcal{F}'}$.
        Conversely, for every $P' \in \mathcal{P}'_{\mathcal{F}'}$ there is $P \in \mathcal{P}_{\mathcal{F}}$ such that $P' = P_0$ or $P' = P_1$.
    \end{claim}
    \begin{proof}\renewcommand{\qedsymbol}{$\blacklozenge$}
        The fact that $P_0, P_1 \in \mathcal{P}'_{\mathcal{F}'}$ follows immediately from construction of $\mathcal{F}'$.
        For the second part of the claim, observe that any $s'$-$t'$-path in $D'$ consists of an $s_i$-$t_i$-path $Q'$ in $D_i$ for $i \in \{0, 1\}$, concatenated with an $s_2$-$t_2$-path $Q''$ in $D_2$.
        Moreover, by construction of $\mathcal{F}'$, the $s_2$-$t_2$-path $Q''$ must correspond to a path $P$ in $\mathcal{P}_\mathcal{F}$. 
        Now assume that $Q'$ uses an arc $e_i$ in $D_i$ that corresponds to an arc $e = (v, w) \in E \setminus P$, and choose $e$ so that $e_i$ is the first such are on $Q'$.
        Then $Q''$ contains arc~$e'_2$ for some $e' = (v, w') \in P$ and hence $P'$ contains both arcs of the forbidden pair $\{e_i, e'_2\} \in \mathbb{F}'$, a contradiction.
        Thus, both $Q'$ and $Q''$ correspond to copies of the same path $P$ and thus $P' = P_0$ or~$P' = P_1$.
    \end{proof}
    
    Now consider any set $S' \subseteq E'$ such that $S' \cup \bar{E}$ is identifying for $X'$ and $|S'| \leq k$.
    We let $S := \{e \in E : e_0 \in S' \text{ or } e_1 \in S'\}$.
    Note that $|S| \leq |S'| \leq k$.
    Now let $P \in \mathcal{P}_{\mathcal{F}}$.
    By the claim above, both $P_0$ and $P_1$ are in $\mathcal{P}_{\mathcal{F}'}$.
    Therefore $S' \cap (P_0 \Delta P_1) \neq \emptyset$, as $S' \cup \bar{E}$ is identifying and $P_0 \Delta P_1 \cap \bar{E} = \emptyset$.
    Thus, there is $e \in P$ such that $e_0 \in S'$ or $e_1 \in S'$ implying $e \in P \cap S$.

    For the converse, consider any $S \subseteq E$ with $|S| \leq k$ such that $P \cap S \neq \emptyset$ for all $P \in \mathcal{P}_{\mathcal{F}}$.
    Let~$S' := \{e_0 : e \in S\}$.
    Note that $|S'| = |S| = k$.
    Let $P', Q' \in \mathcal{P}'_{\mathcal{F}'}$ with $P' \neq Q'$.
    We show that $(P' \Delta Q') \cap (S' \cup \bar{E}) \neq \emptyset$.
    If $(P' \Delta Q') \cap \bar{E} \neq \emptyset$ we are done.
    Thus assume $(P' \Delta Q') \cap \bar{E} = \emptyset$, which is equivalent to $P' \cap E_2 = Q' \cap E_2$.
    Hence, by the claim above, there is $P \in \mathcal{P}_{\mathcal{F}}$ with $P' = P_0$ and $Q' = P_1$ (or the other way around).
    But then $P' \cap S' \neq \emptyset$ and $Q' \cap S' = \emptyset$ (or the other way around), showing that $(P' \Delta Q') \cap S' \neq \emptyset$. 
    We conclude that $S' \cup \bar{E}$ is identifying for $X'$.
\end{proof}

\begin{theorem}\label{thm:sigma-2-p-fixed-arcs}
    The following problem is $\Sigma^\mathsf{P_2}$-complete: Given a digraph $D = (V, E)$ with two vertices $s, t \in V$, a set $\bar{E} \subseteq E$, and $k \in \mathbb{N}$, find a set $S \subseteq E$ with $|S| \leq k$ such that $S \cup \bar{E}$ is identifying for $X := \{\mathbbm{1}_{P} : P \in \mathcal{P}\}$, where $\mathcal{P}$ is the set of $s$-$t$-paths in $D$.
    This holds even when restricted to instances where for each $e \in \bar{E}$, there is $P \in \mathcal{P}$ with $e \in P$.
\end{theorem}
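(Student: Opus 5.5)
We reduce from the problem of \cref{thm:sigma-2-p-forbidden-pairs-identification}, replacing each forbidden pair by a Fortune--Hopcroft--Wyllie switch gadget \citep{fortune1980directed}. Membership in $\Sigma\mathsf{^P_2}$ is immediate: guess $S$, and co-nondeterministically guess two distinct $s$-$t$-paths that agree on $S \cup \bar{E}$. For hardness, take an instance $(D, s, t, \mathcal{F}, \bar{E}, k)$ of the problem in \cref{thm:sigma-2-p-forbidden-pairs-identification} in which $D$ is acyclic, every arc lies on some $P \in \mathcal{P}_{\mathcal{F}}$, and every forbidden pair meets $\bar{E}$. We build a digraph $D''$ in which every forbidden pair $F = \{e, f\}$ is simulated as follows. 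The arc of $F$ lying in $\bar{E}$, say $f$, is routed through a \emph{switch} subgraph. The other arc $e$ is subdivided into a path that must traverse a designated ``probe'' portion of that switch.

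The gadget is designed so that an $s$-$t$-path may use $f$, or may use $e$, but can only use both by revisiting a node. Because $D$ is acyclic, any cycle-free use of the switch then corresponds exactly to a path in $D$ that avoids containing all of $F$. The key steps are as follows.

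\textbf{(a)} Show that every $s$-$t$-path in $D''$ projects to a path of $D$ in $\mathcal{P}_{\mathcal{F}}$, and conversely that each $P \in \mathcal{P}_{\mathcal{F}}$ lifts to at least one path in $D''$.

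\textbf{(b)} Handle the fact that a single $P$ may lift to several paths, which differ only inside gadgets. Put every gadget-internal arc, and every arc of a subdivided copy of an arc of $\bar{E}$, into the new set $\bar{E}''$. Distinct lifts of the same $P$ then differ on $\bar{E}''$. Conversely, lifts of distinct $P, P'$ that agree on $\bar{E}''$ differ on $S''$ if and only if $P, P'$ differ on $S$. Each remaining arc $e \notin \bar{E}$ of $D$ keeps a designated copy; a set $S''$ outside $\bar{E}''$ is mapped to $S$ by projecting each chosen arc to its original arc, which does not increase its size. This gives the equivalence of the instances with the same $k$.

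\textbf{(c)} Preserve the extra condition that every arc of $\bar{E}''$ lies on some $s$-$t$-path of $D''$. This follows from (a) together with the assumption that every arc of $D$ lies on some $P \in \mathcal{P}_{\mathcal{F}}$, once we check that each internal gadget arc is used by some lift.

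The main obstacle is step (a), and it matters most when an arc participates in several forbidden pairs: the switch gadgets then must be chained without creating spurious $s$-$t$-paths through directed cycles formed across different gadgets. I would first try placing the probe segments of all pairs involving $f$ sequentially along $f$'s subdivision. I would then exploit the topological order of $D$, and the restriction of forbidden pairs to arcs on common paths, to argue that any path entering a gadget from an unintended side must eventually revisit a node. If this interaction proves too delicate, I would fall back on the three-copy trick of \cref{thm:sigma-2-p-forbidden-pairs-identification}: duplicate the graph so that each gadget only ever connects arcs in different copies, which makes the acyclicity argument local.
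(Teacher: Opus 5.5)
Membership and the choice of source problem are fine. However, your plan has a genuine gap at step~(a), and it is not a matter of choosing the gadget cleverly.

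You require that \emph{every} $s$-$t$-path of $D''$ projects to a member of $\mathcal{P}_{\mathcal{F}}$, and that every member lifts. Unless $\mathsf{P}=\mathsf{NP}$, no polynomial construction can achieve this. Since $s$-$t$-reachability is polynomial, such a $D''$ would decide whether a DAG has an $s$-$t$-path avoiding all forbidden pairs, which is NP-complete even on DAGs. The promise that every arc lies on a valid path does not help. Add bypass arcs $(s,x)$ and $(y,t)$ for every arc $(x,y)$, which enforces the promise. Then build $D''$, delete the lifts of the bypass arcs, and test reachability.

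So spurious paths in the new digraph are unavoidable. Your step~(b) only handles several lifts of the \emph{same} $P$, not paths that project to nothing valid. The gadget is also unspecified in the crucial respect. A Fortune--Hopcroft--Wyllie switch works only because it is \emph{set} by a path other than the one probing it. If the routes of $e$ and $f$ merely share nodes, a simple path can enter along one route and leave along the other, and acyclicity of $D$ does nothing to prevent this.

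The paper supplies both missing ingredients.
\begin{itemize}
\item It first subdivides arcs so that forbidden pairs are pairwise disjoint. This disposes of the multiple-pair issue you flag as the main obstacle.
\item It then passes \emph{both} subdivided arcs $e_i,e'_i$ through a switch $G_i$ and lets the single path play both commodities by concatenation. From $s'$ the path traverses the setting segments $s_i \to t_i$ of all switches, enters $D$ at $s$, leaves at $t$, traverses the segments $s'_i \to t'_i$, and ends at $t'$. In either configuration these segments occupy both internal nodes of $e_i$ or of $e'_i$. Hence \emph{proper} paths (those containing all these segments) correspond to $\mathcal{P}_{\mathcal{F}}$.
\item The non-proper paths, e.g.\ $s' \to s_1 \to u_{13} \to t'_1 \to t'$, are not eliminated but neutralized. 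All gadget and connector arcs go into $\bar{E}'$. \cref{lem:non-proper-paths} shows that a non-proper path differs from every other $s'$-$t'$-path already on $\bar{E}'$, so it imposes no constraint on $S'$.
\end{itemize}
The hypothesis $F \cap \bar{E} \neq \emptyset$ is not used to decide which arc gets the switch. It ensures that two valid paths agreeing on $\bar{E}$ have canonical lifts with identical control segments, so that their symmetric difference lies inside the lifted copy of $D$. You would need analogues of both ideas for~(b) to go through.
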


\begin{proof}
    Consider an instance of the problem described in \cref{thm:sigma-2-p-forbidden-pairs-identification}, i.e., a digraph $D = (V, E)$, $s, t \in V$, $\mathcal{F} \subseteq \binom{E}{2}$, $\bar{E} \subseteq E$, and $k \in \mathbb{N}$.
    By \cref{thm:sigma-2-p-forbidden-pairs-identification}, we may assume that for each $e \in E$, there is $P \in \mathcal{P}_{\mathcal{F}}$ with $e \in P$.
    Without loss of generality, we can further assume that $F \cap F' = \emptyset$ for $F, F' \in \mathcal{F}$ (we can ensure this property by subdividing edges that appear in multiple pairs of $\mathcal{F}$). 
    We fix an arbitrary ordering of the arc pairs in $\mathcal{F}$, i.e., $\mathcal{F} = \big\{ \{e_1, e'_1\}, \{e_2, e'_2\}, \dots, \{e_{\ell}, e'_{\ell}\} \big\}$, where we can further assume that $e'_1, \dots, e'_{\ell} \in \bar{E}$ due to \cref{thm:sigma-2-p-forbidden-pairs-identification}.

    We construct an instance of the problem described in \cref{thm:sigma-2-p-fixed-arcs} with digraph $D' = (V', E')$, nodes $s', t' \in V'$, special edge set $\bar{E}' \subseteq E$.
    The node set $V'$ contains all nodes of $V$ and two additional nodes $s', t'$. The arc set $E'$ contains all arcs in $E$ that do not appear in  any forbidden pair from $\mathcal{F}$.
    In addition, 
    for each $i \in [\ell]$, we add the nodes $s_i, t_i, u_{i1}, u_{i2}, u_{i3}$ and $s'_i, t'_i, u'_{i1}, u'_{i2}, u'_{i3}$ and all arcs of the gadget $G_i$ depicted in \cref{fig:switch}, 
    where the nodes $v_i, v'_i$ and $w_i, w'_i$, respectively, are the heads and tails, respectively, of the arcs in the forbidden pair $\{e_i, e'_i\}$, i.e., $e_i = (v_i, w_i)$ and $e'_i = (v'_i, w'_i)$.
    We further add the arcs $(s', s_1)$, $(t_{\ell}, s)$, $(t, s'_{\ell})$, $(t'_1, t')$
    and the arcs
    $(t_i, s_{i + 1})$ and $(t'_{i+1}, s'_{i})$ for $i \in \{1, \dots, \ell-1\}$.
    See \cref{fig:sigma-2-p-fixed-arcs-construction} for a schematic overview of the construction.
    The set $\bar{E}'$ contains all arcs of $E'$ except the following:
    \begin{itemize}
        \item arcs $(v_i, u_{i1}), (u_{i1}, u_{i2}), (u_{i2}, w_i)$ if $e_i = (v_i, w_i)$ for some $i \in [\ell]$ with $e_i \notin \bar{E}$,
        \item arcs $(v'_i, u'_{i1}), (u'_{i1}, u'_{i2}), (u'_{i2}, w'_i)$ if $e'_i = (v'_i, w'_i)$ for some $i \in [\ell]$ with $e'_i \notin \bar{E}$,
        \item arcs $e \in E \setminus \bar{E}$ that do not appear in any pair $F \in \mathcal{F}$.
    \end{itemize}

We show that the constructed instance is a ``yes'' instance if and only if the instance of the problem from \cref{thm:sigma-2-p-forbidden-pairs-identification} is a ``yes'' instance.
We start with the following observation.

\begin{figure}
    \centering

    \begin{tikzpicture}
        \node[node, label=above:{$s'_i$}] (s0) at (0, 0) {};
        \node[node, label=below:{$t'_i$}] (t0)  at (0, -2.25) {};
        \node[node, label=below:{$s_i$}] (s1) at (0, -3.25) {};
        \node[node, label=above:{$t_i$}] (t1)  at (0, 1) {};

        \path (s0)
            ++(-3, +0.5) node[node, draw, fill=white, label=left:{$v_i$}] (v1) {}
            ++(+1, -0.5) node[node, label=above:{$u_{i1}$}] (u11)  {}
            ++(+1.25, -0.5) node[node, label=above:{$u_{i2}$}] (u12)  {}
            +(-2.25, -0.5) node[node, draw, fill=white, label=left:{$w_i$}] (w1) {}
            ++(0, -1.25) node[node, label=left:{$u_{i3}$}] (u13)  {};

        \path (s0)
            ++(+3, +0.5)  node[node, draw, fill=white, label=right:{$v'_i$}] (v2) {}
            ++(-1, -0.5) node[node, label=above:{$u'_{i1}$}] (u21)  {}
            ++(-1.25, -0.5) node[node, label=above:{$u'_{i2}$}] (u22)  {}
            +(+2.25, -0.5) node[node, draw, fill=white, label=right:{$w'_i$}] (w2) {}
            ++(0, -1.25) node[node, label=right:{$u'_{i3}$}] (u23)  {};

        \draw[-latex] (s0) edge[dashed] (u12);
        \draw[-latex] (s0) edge[dashed] (u22);
        \draw[-latex, bend left=20] (s1) edge[dashed] (u13);
        \draw[-latex, bend right=20] (s1) edge[dashed] (u23);
        
        \draw[-latex] (u13) edge[dashed] (t0);
        \draw[-latex] (u23) edge[dashed] (t0);
        \draw[-latex] (u11) edge[dashed] (t1);
        \draw[-latex] (u21) edge[dashed] (t1);

        \draw[-latex] (v1) edge (u11);
        \draw[-latex] (u11) edge (u12);
        \draw[-latex] (u12) edge[dashed] (u13);
        \draw[-latex] (u12) edge (w1);
        \draw[-latex, bend right=10] (u13) edge[dashed] (u21);

        \draw[-latex] (v2) edge (u21);
        \draw[-latex] (u21) edge (u22);
        \draw[-latex] (u22) edge[dashed] (u23);
        \draw[-latex] (u22) edge (w2);
        \draw[-latex, bend left=10] (u23) edge[dashed] (u11);
        
    \end{tikzpicture}
    
    \caption{Gadget $G_i$ used in the proof of \cref{thm:sigma-2-p-fixed-arcs}. This is a simplified version of the ``switch'' gadget used by \citet{fortune1980directed}. Dashed arcs represent arcs in $\bar{E}$.}
    \label{fig:switch}
\end{figure}
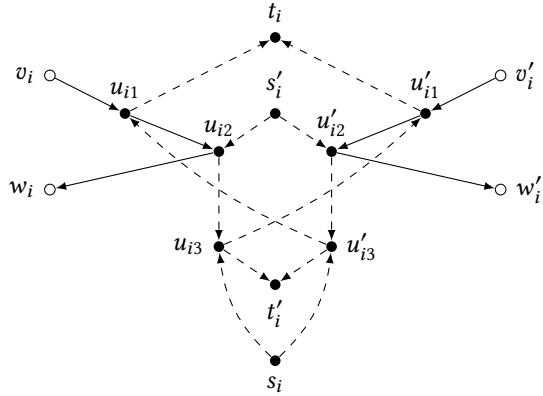

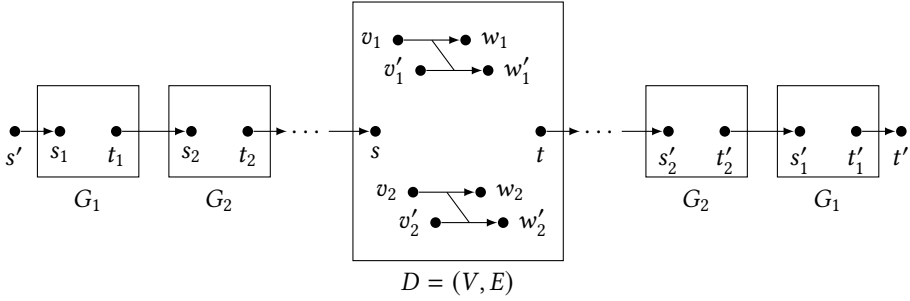
\begin{figure}
    \centering

    \begin{tikzpicture}
    
        \path node[node, label=below:{$s'$}] (sp) at (0, 0) {}
            ++(0.6, 0) node[node, label=below:{$s_1$}] (s1) {}
            +(0.375, -0.9) node {$G_1$}
            ++(0.75, 0) node[node, label=below:{$t_1$}] (t1) {}
            ++(1, 0) node[node, label=below:{$s_2$}] (s2) {}
            +(0.375, -0.9) node {$G_2$}
            ++(0.75, 0) node[node, label=below:{$t_2$}] (t2) {}
            ++(0.8, 0) node[inner sep=0.5mm] (d1) {$\dots$}
            ++(0.9, 0) node[node, label=below:{$s$}] (s) {}
            +(1.1, -2) node {$D = (V, E)$}
            ++(2.2, 0) node[node, label=below:{$t$}] (t) {}
            ++(0.8, 0) node[inner sep=0.5mm] (d2) {$\dots$}
            ++(0.9, 0) node[node, label=below:{$s'_{2}$}] (sp2) {}
            +(0.375, -0.9) node {$G_2$}
            ++(0.75, 0) node[node, label=below:{$t'_{2}$}] (tp2) {}
            ++(1, 0) node[node, label=below:{$s'_{1}$}] (sp1) {}
            +(0.375, -0.9) node {$G_1$}
            ++(0.75, 0) node[node, label=below:{$t'_{1}$}] (tp1) {}
            ++(0.6, 0) node[node, label=below:{$t'$}] (tp) {};
            
        \draw (s1) ++(-0.3, -0.6) rectangle +(1.35, 1.2);
        \draw (s2) ++(-0.3, -0.6) rectangle ++(1.35, 1.2);
        \draw (sp2) ++(-0.3, -0.6) rectangle ++(1.35, 1.2);
        \draw (sp1) ++(-0.3, -0.6) rectangle ++(1.35, 1.2);

        \draw (s) ++(-0.3, -1.7) rectangle ++(2.8, 3.4);
        
        \draw[-latex] (sp) edge (s1);
        \draw[-latex] (t1) edge (s2);
        \draw[-latex] (t2) edge (d1);
        \draw[-latex] (d1) edge (s);
        \draw[-latex] (t) edge (d2);
        \draw[-latex] (d2) edge (sp2);
        \draw[-latex] (tp2) edge (sp1);
        \draw[-latex] (tp1) edge (tp);

        \path (s) ++(0.3, 1.2) node[node, label=left:{$v_1$}] (v1) {}
                  +(0.9, 0) node[node, label=right:{$w_1$}] (w1) {}
                  ++(0.3, -0.4) node[node, label=left:{$v'_1$}] (vp1) {}
                  +(0.9, 0) node[node, label=right:{$w'_1$}] (wp1) {};

        \draw[-latex] (v1) edge (w1);
        \draw[-latex] (vp1) edge (wp1);
        \draw (v1) ++(0.45, 0) -- ++(0.3, -0.4);

        \path (s) ++(0.5, -0.8) node[node, label=left:{$v_2$}] (v2) {}
                  +(0.9, 0) node[node, label=right:{$w_2$}] (w2) {}
                  ++(0.3, -0.4) node[node, label=left:{$v'_2$}] (vp2) {}
                  +(0.9, 0) node[node, label=right:{$w'_2$}] (wp2) {};

        \draw[-latex] (v2) edge (w2);
        \draw[-latex] (vp2) edge (wp2);
        \draw (v2) ++(0.45, 0) -- ++(0.3, -0.4);

    \end{tikzpicture}
    
    \caption{Overview of the construction for proving \cref{thm:sigma-2-p-fixed-arcs}. The arcs $(v_i, w_i), (v'_i, w'_i)$ connected by a line represent a switch gadget $G_i$, where the nodes $v_i, v'_i, w_i, w'_i$ incident to arcs from the digraph $D$; the connection of the nodes $s_i, t_i, s'_i, t'_i$ from $G_i$ to nodes in other gadgets are depicted in the two boxed labeled $G_i$ on each side of the figure.}
    \label{fig:sigma-2-p-fixed-arcs-construction}
\end{figure}

\begin{lemma}\label{lem:disjoint-paths}
    Let $P$ be an $s'$-$t'$-paths in $D'$ and let $i \in [\ell]$ where $P \cap E(G_i)$ contains a path $\bar{P}$ starting at $s_i$ and a path $\bar{P}'$ ending at $t'_i$. Then exactly one of the following three cases holds true:
    \begin{enumerate}
        \item $\bar{P}$ corresponds to the node sequence $(s_i, u_{i3}, u'_{i1}, t_i)$ and $\bar{P}'$ corresponds to the node sequence $(s'_i, u'_{i2}, u'_{i3}, t'_i)$.\label{item:disjoint-paths-proper1}
        \item $\bar{P}$ corresponds to the node sequence $(s_i, u'_{i3}, u_{i1}, t_i)$ and $\bar{P}'$ corresponds to the node sequence $(s'_i, u_{i2}, u_{i3}, t'_i)$.\label{item:disjoint-paths-proper2}
        \item $\bar{P} = \bar{P}'$ is an $s_i$-$t'_i$-path corresponding to one of the four node sequences $(s_i, u_{i3}, t'_i)$, $(s_i, u'_{i3}, t'_i)$, $(s_i, u_{i3}, u'_{i1}, u'_{i2}, u'_{i3}, t'_i)$, or $(s_i, u'_{i3}, u_{i1}, u_{i2}, u_{i3}, t'_i)$.\label{item:disjoint-paths-not-proper}
    \end{enumerate}
\end{lemma}

\begin{proof}\renewcommand{\qedsymbol}{$\blacklozenge$}

    First assume that $\bar{P} \neq \bar{P}'$. 
    We show that we are in one of the first two cases.
    Because $\bar{P}$ does not leave $G_i$ at $t'_i$, it must contain one of the arcs $(u_{i3}, u'_{i1})$ or $(u'_{i3}, u_{i1})$.
    In the former case, $\bar{P}'$ can neither contain $u_{i3}$ nor $u'_{i1}$. 
    The only path entering $G_i$ at some node reachable from the outside and leaving it at $t'_i$ that contains neither of these nodes is the one with the node sequence $(s'_i, u'_{i2}, u'_{i3}, t'_i)$.
    Thus this path must be $\bar{P}'$, which also implies that $\bar{P}$ must leave $G_i$ at $t_i$ and correspond to the path with the sequence $(s_i, u_{i3}, u'_{i1}, t_i)$.
    Thus we are in case~\ref{item:disjoint-paths-proper1} of the lemma.
    By a symmetric argument, $(u'_{i3}, u_{i1}) \in \bar{P}$ implies that we are in case~\ref{item:disjoint-paths-proper2} of the lemma.
    
    Now assume that $\bar{P} = \bar{P}'$. Then $\bar{P}$ must be an $s_i$-$t'_i$-path.
    The paths listed in case~\ref{item:disjoint-paths-not-proper} of the lemma are exactly the four $s_i$-$t'_i$-paths in $G_i$.
\end{proof}

We say that an $s'$-$t'$-path $P'$ in $D'$ is \emph{proper} if, for every $i \in [\ell]$, the set $P' \cap G_i$ contains an~$s_i$-$t_i$-path and an $s'_i$-$t'_i$-path.
For a proper path $P$ and $i \in [\ell]$, we say that \emph{$P$ is in configuration~\ref{item:disjoint-paths-proper1} for $i$}, if case~\ref{item:disjoint-paths-proper1} of \cref{lem:disjoint-paths} holds, and otherwise we say that \emph{$P$ is in configuration~\ref{item:disjoint-paths-proper2} for $i$}.
Our next lemma shows that non-proper paths can be identified using the arcs from $\bar{E}$ alone and are thus not relevant for the solving the instance.

\begin{lemma}\label{lem:non-proper-paths}
    Let $P$ and $Q$ be two distinct $s'$-$t'$-paths in $D'$.
    If at least one of $P$ and $Q$ is not proper, then $(P \Delta Q) \cap \bar{E}' \neq \emptyset$.
\end{lemma}

\begin{proof}[Proof of \cref{lem:non-proper-paths}]\renewcommand{\qedsymbol}{$\blacklozenge$}
    Without loss of generality, assume that $P'$ is not proper and let $i \in [\ell]$ be minimal with the property that $P \cap G_i$ does not contain both an $s_i$-$t_i$-path and an $s'_i$-$t'_i$-path.
    By choice of $i$, each of the sets $P \cap G_j$ for $j < i$ contains an $s_j$-$t_j$-path $P_j$ and an $s'_j$-$t'_j$-path $P'_j$. Thus, the concatenation of $(s', s_1) \circ P_1 \circ (t_1, s_1) \circ \dots \circ  P_{i-1} \circ (t_{i-1}, s_i)$ must form the prefix $P[s', s_i]$ of $P$.
    Likewise, the concatenation of $(t'_i, s'_{i-1}) \circ P'_{i-1} \circ \dots \circ P'_1 \circ (t'_1, t')$ must form the suffix $P[t'_i, t']$ of $P$.
    Note that this implies that $P[s', s_i] \subseteq \bar{E}'$ and $P[t_i', t'] \subseteq \bar{E}'$.
    
    Because $P$ has a prefix ending at $s_i$ and a suffix starting at $t'_i$, the set $P \cap G_i$ must contain a path starting at $s_i$ $\bar{P}$ and a path $\bar{P}'$ starting at $t'_i$.
    By choice of $i$, we are in case \ref{item:disjoint-paths-not-proper} of \cref{lem:disjoint-paths}, i.e., $\bar{P} = \bar{P}'$ corresponds to one of the four $s_i$-$t'_i$-paths with the node sequences $(s_i, u_{i3}, t'_i)$, $(s_i, u'_{i3}, t'_i)$, $(s_i, u_{i3}, u'_{i1}, u'_{i2}, u'_{i3}, t'_i)$, or $(s_i, u'_{i3}, u_{i1}, u_{i2}, u_{i3}, t'_i)$.
    
    In particular, this implies that $P = P[s', s_i] \circ \bar{P} \circ P[t'_i, t']$. Moreover, we can conclude that either $\bar{P} \subseteq \bar{E}'$, or $\bar{P} \setminus \bar{E}'$ consists of exactly one arc, namely $(u_{i1}, u_{i2})$ or $(u'_{i1}, u'_{i2})$.
    Now assume by contradiction that $(P \Delta Q) \cap \bar{E}' = \emptyset$.
    Note that this implies $P \cap \bar{E}' \subseteq Q$.
    By the above, we are in one of the following three cases:
    \begin{itemize}
        \item $P \setminus \bar{E}' = \emptyset$: In this case, $P \subseteq Q$, which is only possible if $P = Q$, a contradiction.
        \item $P \setminus \bar{E}' = \{(u_{i1}, u_{i2})\}$: In this case, $P[s', u_{i1}] \subseteq Q$ and $P[u_{i2}, t'] \subseteq Q$. Furthermore, $(P \Delta Q) \cap \bar{E}' = \emptyset$ implies $(u_{i1}, t_i) \notin Q$ and therefore $(u_{i1}, u_{i2}) \in Q$, implying $P = Q$ again.
        \item $P \setminus \bar{E}' = \{(u'_{i1}, u'_{i2})\}$: This case is analogue to the preceding case. \qedhere
    \end{itemize}    
\end{proof}

The next two lemmas establish a direct correspondence between proper paths in $D'$ and elements of $\mathcal{P}_\mathcal{F}$, i.e., $s$-$t$-paths in $D$ respecting the forbidden pairs.

\begin{lemma}\label{lem:forbidden-pairs-to-proper-paths}
    Let $P \in \mathcal{P}_{\mathcal{F}}$.
    Let $P_1$ be the path resulting from $P$ by replacing all arcs $e_i$ and $e'_i$, respectively, for $i \in [\ell]$ by the subpaths
    $\{(v_i, u_{i1}), (u_{i1}, u_{i2}), (u_{i2}, w_i)\}$
    and 
    $\{(v'_i, u'_{i1}), (u'_{i1}, u'_{i2}), (u'_{i2}, w'_i)\}$, respectively.
    Then there is an $s'$-$s$-path $P_0$ and a $t$-$t'$-path $P_2$ in $D'$ such that $P' := P_0 \circ P_1 \circ P_2$ is a proper $s'$-$t'$-path in $D'$, with the additional property that for each $i \in [\ell]$, the proper path $P'$ is in configuration~\ref{item:disjoint-paths-proper2} for $i$ if and only if $e'_i \in P$.
\end{lemma}

\begin{proof}[Proof of \cref{lem:forbidden-pairs-to-proper-paths}]\renewcommand{\qedsymbol}{$\blacklozenge$}
    For each $i \in [\ell]$, let $Q_i = P_1 \cap G_i$.
    Because for each $|\{e_i, e'_i\} \cap P| \leq 1$ and by construction of $Q'$, we observe that $Q_i$ is either empty or corresponds to one of the two paths  $\{(v_i, u_{i1}), (u_{i1}, u_{i2}), (u_{i2}, w_i)\}$ or $\{(v'_i, u'_{i1}), (u'_{i1}, u'_{i2}), (u'_{i2}, w'_i)$\}.
   
    If $Q_i = \emptyset$ or $Q_i = \{(v_i, u_{i1}), (u_{i1}, u_{i2}), (u_{i2}, w_i)\}$, we let $\bar{P}_i$ and $\bar{P}'_i$, respectively, be the path corresponding to the node sequence $(s_i, u_{i3}, u'_{i1}, t_i)$ and $(s'_i, u'_{i2}, u'_{i3}, t'_i)$, respectively.
    Otherwise, we let $\bar{P}_i$ and $\bar{P}'_i$, respectively, be the path corresponding to the node sequence $(s_i, u'_{i3}, u_{i1}, t_i)$ and $(s'_i, u_{i2}, u_{i3}, t'_i)$, respectively.
    Node that in either case, $Q_i$, $\bar{P}_i$, and $\bar{P}'_i$ are node-disjoint and, moreover, entirely determined by $Q_i$.
    Letting $P_0$ be the concatenation $(s', s_1) \circ \bar{P}_1 \circ \dots \circ \bar{P}_{\ell} \circ (t_{\ell}, s)$ and $P_2$ be the concatenation $(t, s'_{\ell}) \circ \bar{P}'_{\ell} \circ \dots \circ \bar{P}_{1} \circ (t'_{1}, t')$ yields the desired paths such that $P' := P_0 \circ P_1 \circ P_2$ is proper.
\end{proof}

\begin{lemma}\label{lem:proper-paths}
    Let $P'$ be a proper $s'$-$t'$-path in $D'$ and define $$P := (Q' \cap E) \cup \{e_i : (v_i, u_{i1}) \in Q'\} \cup \{e'_i : (v'_i, u'_{i1}) \in Q'\}.$$
    Then $P$ is an $s$-$t$-path in $D$ and $P \in \mathcal{P}_\mathcal{F}$.
\end{lemma}

\begin{proof}[Proof of \cref{lem:proper-paths}]\renewcommand{\qedsymbol}{$\blacklozenge$}
    Because $P'$ is proper, $P' \cap G_i$ for each $i \in [\ell]$ must contain an $s_i$-$t_i$-path and an $s'_i$-$t'_i$-path corresponding to either case~\ref{item:disjoint-paths-proper1} or case~\ref{item:disjoint-paths-proper2} of \cref{lem:disjoint-paths}.
    In the former case, $P' \cap G_i$ can contain at most one other path, namely the one corresponding to the node sequence $(v_i, u_{i1}, u_{i2}, w_i)$.
    In the latter case, $P' \cap G_i$ can contain at most one other path, namely the one corresponding to the node sequence $(v'_i, u'_{i1}, u'_{i2}, w'_i)$.
    Hence, $P$ arises from $P'$ by replacing some $v_i$-$w_i$-subpaths with the corresponding arc $(v_i, w_i)$ and some $v'_i$-$w'_i$-subpaths with the corresponding arc $(v'_i, w'_i)$.
    Moreover, because $P'$ for each $i \in [\ell]$ contains at most one of the paths $(v_i, u_{i1}, u_{i2}, w_i)$ or $(v'_i, u'_{i1}, u'_{i2}, w'_i)$, the resulting path $P$ contains at most one of the arcs $e_i$ or $e'_i$.
    We conclude that $P \in \mathcal{P}_\mathcal{F}$.
\end{proof}

To complete the proof of \cref{thm:sigma-2-p-fixed-arcs}, consider any set $S \subseteq E$ such that $\bar{E} \cup S$ is an identifying instance for $X$ as specified in \cref{thm:sigma-2-p-forbidden-pairs-identification}.
Let
\begin{align*}
S' := (S \cap E) \cup \bigl\{(u_{i1}, u_{i2}) : e_i \in S \setminus \bar{E}\bigr\} \cup \bigl\{(u'_{i1}, u'_{i2}) : e'_i \in S \setminus \bar{E} \bigr\}
\end{align*}
and consider any two $s'$-$t'$-paths $P', Q'$ in $D'$ with $P' \neq Q'$.
We show that $(P' \Delta Q') \cap (\bar{E}' \cup S') \neq \emptyset$.
By \cref{lem:non-proper-paths}, we can assume that both $P'$ and $Q'$ are proper.
Then, by \cref{lem:proper-paths}, $P'$ and $Q'$ induce two $s$-$t$-paths $P, Q \in \mathcal{P}_{\mathcal{F}}$.
Because $\bar{E} \cup S$ is an identifying instance for $X$, there must be $e \in (P \Delta Q) \cap (\bar{E} \cup S)$.
Note that by construction of $P$ and $Q$ according to \cref{lem:proper-paths}, either $e \in P' \Delta Q'$, in which case we are done, or $e \in \{e_i, e'_i\}$ for some $i \in [\ell]$, in which case $P' \Delta Q'$ contains the edge $(u_{i1}, u_{i2})$ or $(u'_{i1}, u'_{i2})$, respectively.
Without loss of generality, assume $e = e_i$ and $(u_{i1}, u_{i2}) \in P' \Delta Q'$.
Then either $e \in \bar{E}$ and hence $(u_{i1}, u_{i2}) \in \bar{E}'$ or $e \in S$ and hence $(u_{i1}, u_{i2}) \in S'$.
We conclude that~$\bar{E}' \cup S'$ is identifying for $X' := \{\mathbbm{1}_{P}' : R \in \mathcal{P}'\}$, where $\mathcal{P}'$ is the set of $s'$-$t'$-paths in $D'$.

Conversely, consider any $S' \subseteq E'$ such that $\bar{E}' \cup S'$ is identifying for $X'$.
Let 
\begin{align*}
    S := (S' \cap E) & \cup \bigl\{e_i : e_i \notin \bar{E}, \{(v_i, u_{i1}), (u_{i1}, u_{i2}), (u_{i2}, w_i)\} \cap S' \neq \emptyset\bigr\}.
\end{align*}
Note that $|S| = |S'|$.
We show that $\bar{E} \cup S$ is identifying for $X$.
For this, let $P, Q \in \mathcal{P}_{\mathcal{F}}$.
If there is $i \in [\ell]$ with $e'_i \in P \Delta Q$ we are done because $e'_i \in \bar{E}$ by our initial assumption on the instance.
Thus we can assume that $e'_i \notin P \Delta Q$ for all $i \in [\ell]$.
Consider the two proper paths $P' = P_0 \circ P_1 \circ P_2 , Q' = Q_0 \circ Q_1 \circ Q_2$ guaranteed by \cref{lem:forbidden-pairs-to-proper-paths}.
Because for each  $i \in [\ell]$ we have that $e'_i \in P$ if and only if $e'_i \in Q$, \cref{lem:forbidden-pairs-to-proper-paths} implies that $P$ and $Q$ are in the same configuration for every $i$, i.e., $P_0 = Q_0$ and $P_2 = Q_2$. 
Therefore $P' \Delta Q' = P_1 \Delta Q_1$.

Because $S' \cup \bar{E}'$ is identifying for $X'$, there must be $e' \in S' \cup \bar{E}'$ with $e' \in P' \Delta Q' = P_1 \Delta Q_1$.
Without loss of generality, we assume that $e' \in P_1 \setminus Q_1$.
If $e' \in E$, then $e' \in P$ and $e' \in S$ by construction of $P_1$ and $S$.
If $e' \notin E$, then $e' \in \{(v_i, u_{i1}), (u_{i1}, u_{i2}), (u_{i2}, w_i)\}$ for some $i$. In that case, $e_i \in P$ by construction of $P_1$ and $e_i \in E_i \cup S$ by construction of $S$.
We thus conclude that $S$ is identifying for~$X$, completing the proof.
\end{proof}

We now show that \cref{thm:sigma-2-p-fixed-arcs} implies \cref{thm:paths-sigma-2-p}.

\begin{proof}[Proof of \cref{thm:paths-sigma-2-p}]
    Given an instance of the problem in \cref{thm:sigma-2-p-fixed-arcs}, we construct an instance of the problem in \cref{thm:paths-sigma-2-p} by replacing each $e = (u, v) \in \bar{E}$ by the following gadget: We remove $e$ and introduce two new vertices $u_e, v_e$ and the set of arcs $$B_e := \{(u, u_e), (u, v_e), (u_e, v_e), (u_e, v), (v_e, v)\}.$$ Let $D' = (V', E')$ be the digraph resulting from applying this modification to all $e \in \bar{E}$.

    Let $X' := \{\mathbbm{1}_{P} : P \in \mathcal{P}'\}$, where $\mathcal{P}'$ is the set of $s$-$t$-paths in $D'$.
    We show that there is an identifying set $S' \subseteq E'$ for $X'$ with $|S'| \leq k' := k + 2|\bar{E}|$ if and only if there is a set $S \subseteq E$ with $|S| \leq k$ such that $S \cup \bar{E}$ is identifying for $X$, proving the theorem.
    This follows immediately from the following two claims.
    \begin{claim}\label{clm:S-minus}
        Let $S \subseteq E$.
        If $S \cup \bar{E}$ is identifying for $X$, then $(S \setminus \bar{E}) \cup \{(u, u_e), (v_e, v) : e = (u, v) \in \bar{E}\}$ is identifying for $X'$.
    \end{claim}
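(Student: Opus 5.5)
The plan is to reduce identifiability in $D'$ to identifiability in $D$ via a projection map that contracts each gadget back to the arc it replaced. Let $S'' := (S \setminus \bar{E}) \cup \{(u, u_e), (v_e, v) : e = (u, v) \in \bar{E}\}$ denote the set in the claim.

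\textbf{Gadget structure.} First I will record what $s$-$t$-paths in $D'$ look like inside a gadget $B_e$ for $e = (u,v) \in \bar{E}$. The new nodes $u_e, v_e$ are only adjacent to each other and to $u, v$. Their in-arcs come from $u$ (and $u_e \to v_e$), and their out-arcs go to $v$ (and $u_e \to v_e$). Hence any $s$-$t$-path in $D'$ that touches $u_e$ or $v_e$ traverses exactly one of three $u$-$v$-routes:
\begin{itemize}
\item $R^1_e = (u, u_e, v)$,
\item $R^2_e = (u, v_e, v)$,
\item $R^3_e = (u, u_e, v_e, v)$.
\end{itemize}
A path that avoids $u_e$ and $v_e$ uses no arc of $B_e$. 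The key local fact is how the two selected arcs separate these routes:
\begin{itemize}
\item $R^1_e$ contains $(u, u_e)$ but not $(v_e, v)$;
\item $R^2_e$ contains $(v_e, v)$ but not $(u, u_e)$;
\item $R^3_e$ contains both.
\end{itemize}
So the three routes have pairwise different traces on $\{(u, u_e), (v_e, v)\}$, and each route meets this set nonempty.

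\textbf{Projection map.} Define $\pi$ on $s$-$t$-paths of $D'$ by replacing every gadget route through $B_e$ with the single arc $e$. Since gadget-internal nodes disappear and the remaining node sequence is that of the original path, $\pi(P')$ is an $s$-$t$-path in $D$, so $\chi_{\pi(P')} \in X$.

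\textbf{Case analysis.} Take distinct $s$-$t$-paths $P', Q'$ in $D'$. I need to show $(P' \Delta Q') \cap S'' \neq \emptyset$.
\begin{itemize}
\item \emph{Case $\pi(P') \neq \pi(Q')$.} Because $S \cup \bar{E}$ is identifying for $X$, there is $f \in (\pi(P') \Delta \pi(Q')) \cap (S \cup \bar{E})$.
\begin{itemize}
\item If $f \notin \bar{E}$, then $f \in S \setminus \bar{E}$. Such arcs are untouched by the construction, so $f \in (P' \Delta Q') \cap S''$.
\item If $f = e \in \bar{E}$, say $e \in \pi(P') \setminus \pi(Q')$. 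Then $P'$ uses some route $R^j_e$ and $Q'$ uses no arc of $B_e$. Since every route contains $(u, u_e)$ or $(v_e, v)$, that arc lies in $(P' \Delta Q') \cap S''$.
\end{itemize}
\item \emph{Case $\pi(P') = \pi(Q')$.} Then $P'$ and $Q'$ agree outside the gadgets and use gadgets for the same arcs. Since $P' \neq Q'$, they must use different routes $R^j_e \neq R^{j'}_e$ in some gadget $B_e$. By the local fact, the two routes differ on $\{(u, u_e), (v_e, v)\} \subseteq S''$.
\end{itemize}

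The only step needing care is the structural claim that gadget nodes can be visited solely via the three routes. This follows directly from the in- and out-neighbourhoods of $u_e$ and $v_e$. The rest is bookkeeping.
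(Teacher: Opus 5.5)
Your proof is correct and is essentially the paper's argument. Both rest on two facts: the possible traces $P'\cap B_e$ are pairwise separated by $\{(u,u_e),(v_e,v)\}$ (and each nonempty trace meets this set), and contracting each gadget back to $e$ yields an $s$-$t$-path in $D$. The only difference is organizational: the paper splits on whether $P'\cap B_e \neq Q'\cap B_e$ for some $e\in\bar{E}$ and contracts only in the remaining case, whereas you split on whether the contracted paths differ, which costs one extra subcase ($f\in\bar{E}$) but is equally valid.
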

    \begin{claim}\label{clm:Sprime}
        If $S' \subseteq E'$ is identifying for $X'$ then $|S' \setminus E| \geq 2 |\bar{E}|$ and $(S' \cap E) \cup \bar{E}$ is identifying for~$X$.
    \end{claim}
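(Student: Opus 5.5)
The claim has two parts, and I would prove both by comparing $s$-$t$-paths in $D$ with their images in $D'$. The gadget for $e=(u,v) \in \bar{E}$ uses only the fresh nodes $u_e, v_e$. So every $s$-$t$-path in $D$ through $e$ becomes an $s$-$t$-path in $D'$ when $e$ is replaced by any of the three $u$-$v$-paths inside $B_e$. Name the gadget arcs $a=(u,u_e)$, $b=(u_e,v)$, $c=(u,v_e)$, $d=(v_e,v)$, $f=(u_e,v_e)$. The three $u$-$v$-paths then have arc sets
\[
R_1 = \{a,b\}, \qquad R_2=\{c,d\}, \qquad R_3=\{a,f,d\}.
\]

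For the bound $|S' \setminus E| \ge 2|\bar{E}|$, I would argue gadget by gadget. Fix $e \in \bar{E}$. By the restriction in \cref{thm:sigma-2-p-fixed-arcs}, $e$ lies on some $s$-$t$-path $P$ of $D$. Replacing $e$ in $P$ by $R_1$, $R_2$, $R_3$ yields three distinct $s$-$t$-paths in $D'$ that agree outside $B_e$. Their pairwise symmetric differences are
\[
R_1 \Delta R_2 = \{a,b,c,d\}, \qquad R_1 \Delta R_3 = \{b,f,d\}, \qquad R_2 \Delta R_3 = \{a,c,f\}.
\]
Since $S'$ is identifying, $S' \cap B_e$ must meet all three sets. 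A direct check shows that no single arc lies in all three, so $|S' \cap B_e| \ge 2$. The sets $B_e$ for $e \in \bar{E}$ are pairwise disjoint and disjoint from $E$, so summing over $e$ gives the bound.

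For the second part, I would use a fixed injective lifting. Map each $s$-$t$-path $P$ of $D$ to the path $P'$ of $D'$ obtained by replacing every $e \in P \cap \bar{E}$ by $R_1^{(e)}$. This gives a valid path because the gadget nodes are fresh, and the map is injective. Take distinct $P, Q$ in $D$; then $P' \ne Q'$, so some arc $g \in S'$ lies in $P' \Delta Q'$. If $g$ is an original arc of $E \setminus \bar{E}$, then $g \in (P \Delta Q) \cap (S' \cap E)$. If $g \in B_e$, then the lifts differ inside $B_e$. Since both lifts use either $R_1^{(e)}$ or nothing in $B_e$, exactly one of $P, Q$ contains $e$, so $e \in (P \Delta Q) \cap \bar{E}$. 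In both cases $(S' \cap E) \cup \bar{E}$ separates $P$ from $Q$.

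The only delicate point is the counting step. It needs every $e \in \bar{E}$ to lie on some $s$-$t$-path, which is exactly the restriction stated in \cref{thm:sigma-2-p-fixed-arcs}, and it needs the case check that no arc of $B_e$ hits all three symmetric differences.
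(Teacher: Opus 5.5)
Your proof is correct and matches the paper's argument: three lifts of a path through $e$ that differ only inside $B_e$ force two arcs of $S'$ per gadget, and the fixed lifting via $(u,u_e),(u_e,v)$ carries separation back to $D$. Your case split on where the separating arc lies is just a reorganization of the paper's split on whether $P \cap \bar{E} = Q \cap \bar{E}$. One small fix: in the first part, the other arcs of $\bar{E}$ on $P$ must also be replaced (say by $R_1$) so the three paths lie in $D'$, as you already do in the second part.
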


    Before we prove the claims, we make the following observations:
    \begin{itemize}
        \item If an $s$-$t$-path $P$ in $D'$ visits any of the two nodes $u_e, v_e$ for some $e = (u, v) \in \bar{E}$, then $P$ contains one of the three $u$-$v$-paths $\{(u, u_e), (u_e, v)\}$, $\{(u, u_e), (u_e, v_e), (v_e, v)\}$, and $\{(u, v_e), (v_e, v)\}$ of the subgraph induced by $B_e$ as a subpath.
        \item In particular, for any $P \in \mathcal{P}'$ and any  $e = (u, v) \in \bar{E}$, it holds that $P' \cap B_e$ is either $\emptyset$, $\{(u, u_e), (u_e, v)\}$, $\{(u, u_e), (u_e, v_e), (v_e, v)\}$, or $\{(u, v_e), (v_e, v)\}$ .
    \end{itemize}

    \begin{proof}[Proof of \cref{clm:S-minus}]\renewcommand{\qedsymbol}{$\blacklozenge$}
    To prove the first claim, let $S \subseteq E$ be an identifying set for $X$ and let $S' := (S \setminus \bar{E}) \cup \{(u, u_e), (v_e, v) : e = (u, v) \in \bar{E}\}$.
    Let $P', Q' \in \mathcal{P}'$ with $P' \neq Q'$.
    We distinguish two cases:
    \begin{itemize}
        \item Assume there is $e = (u, v) \in \bar{E}$ with $P' \cap B_e \neq Q' \cap B_e$.
        By the second observation above, $P' \cap B_e \neq Q' \cap B_e$ implies that $P' \Delta Q'$ contains $(u, u_e)$ or $(v_e, v)$ and hence $S' \cap (P' \Delta Q') \neq \emptyset$.
        \item Assume that $P' \cap B_e = Q' \cap B_e$ for all $e \in \bar{E}$.
        Let $P := P' \cap E \cup \{e \in \bar{E} : P' \cap B_e \neq \emptyset\}$ and 
        $Q := Q' \cap E \cup \{e \in \bar{E} : Q' \cap B_e \neq \emptyset\}$.
        Note that both $P$ and $Q$ are $s$-$t$-paths in $D$ and that $P \Delta Q = P' \Delta Q \subseteq E \setminus \bar{E}$.
        Because $S \cup \bar{E}$ is identifying for $X$, there must be $e \in (P \Delta Q) \cap (S \cup \bar{E}) = (P \Delta Q) \cap S = (P' \Delta Q') \cap S'$.\qedhere
    \end{itemize}
    \end{proof}

    \begin{proof}[Proof of \cref{clm:Sprime}]\renewcommand{\qedsymbol}{$\blacklozenge$}
    To prove the second claim, let $S' \subseteq E'$ be an identifying set for $X'$.
    
    We first show that $|S' \cap B_e| \geq 2$ for each $e \in \bar{E}$, which implies $|S' \setminus E| \geq 2|\bar{E}|$.
    For this, let $e = (u, v) \in \bar{E}$ and consider any $s$-$t$-path $P$ in $D$ containing $e$ (note that we can assume such a path exists by the final statement in \cref{thm:sigma-2-p-fixed-arcs}).
    Construct the $s$-$t$-path $P'$ in $D'$ from $P$ by replacing each $\bar{e} = (\bar{u}, \bar{v}) \in \bar{E}$ with the $\bar{u}$-$\bar{v}$-path $(\bar{u}, \bar{u}_{\bar{e}}), (\bar{u}_{\bar{e}}, \bar{v})$ in $D'$.
    Construct two further $s$-$t$-paths $Q', R'$ in $D'$ from $P'$ by replacing $(u, u_e), (u_e, v)$ with the $u$-$v$-path $(u, v_e), (v_e, v)$ or the $u$-$v$-path $(u, u_e), (u_e, v_e), (u_e, v)$, respectively. 
    Note that $P' \Delta Q', P' \Delta R', Q' \Delta R' \subseteq B_e$. 
    Thus, because $S'$ is identifying it must contain at least two arcs from $B_e$.
    
    Now let $S := S' \cap E$. We show that $S \cup \bar{E}$ is identifying for $X$.
    To see this, consider any two $s$-$t$-paths $P, Q \in \mathcal{P}$ in $D$ with $P \neq \emptyset$.
    If $(P \Delta Q) \cap \bar{E} \neq \emptyset$, we are done.
    Thus assume that $P \cap \bar{E} = Q \cap \bar{E}$.
    Construct $s$-$t$-paths $P', Q' \in \mathcal{P}'$, respectively, in $D'$ from $P$ and $Q$, respectively, by replacing each $\bar{e} = (\bar{u}, \bar{v}) \in \bar{E}$ in $P$ and $Q$, respectively, with the $\bar{u}$-$\bar{v}$-path $(\bar{u}, \bar{u}_{\bar{e}}), (\bar{u}_{\bar{e}}, \bar{v})$.
    Note that $P \Delta Q = P' \Delta Q' \subseteq E \setminus \bar{E}$.
    Hence $(P \Delta Q) \cap S = (P' \Delta Q') \cap S' \neq \emptyset$ because $S'$ is identifying for $X'$.
    \end{proof}

    This concludes the proof of \cref{thm:paths-sigma-2-p}.
\end{proof}

\section*{Acknowledgments}
This work was initiated at the Dagstuhl Seminar ``Dynamic Traffic Models in Transportation Science'' in May 2022.
We thank Richard Connors for raising the question that inspired this work, and all participants of the seminar for insightful discussions.
Jannik Matuschke was supported by the special research fund of KU Leuven (project CELSA/25/014).
Max Klimm's research was funded by the Deutsche Forschungsgemeinschaft (DFG, German Research
Foundation) under Germany's Excellence Strategy – The Berlin Mathematics
Research Center MATH+ (EXC-2046/1, EXC-2046/2, project ID: 390685689).

\bibliographystyle{ACM-Reference-Format}
\bibliography{control}

\end{document}